\DeclareMathOperator*{\wslim}{w^\star--lim}
     \newcommand{\supp}{\operatorname{supp}}
     \newcommand{\Ran}{{\operatorname{Ran}}}
\newcommand{\Opw}{{\operatorname{Op\!^w}}}
     \newcommand{\N}{{\mathbb{N}}}
     \newcommand{\R}{{\mathbb{R}}}
     \newcommand{\Z}{{\mathbb{Z}}}
     \newcommand{\C}{{\mathbb{C}}}
\newcommand{\e}{{\rm e}}
\renewcommand{\i}{{\rm i}}
\renewcommand{\d}{{\rm d}}
\newcommand{\righ}{{\rm right}}
\newcommand{\lef}{{\rm left}}
\newcommand{\unif}{{\rm unif}}
\renewcommand{\Re}{{\rm Re}\,}
\renewcommand{\Im}{{\rm Im}\,}
\newcommand\inp[2][]{#1 \langle #2#1\rangle}
\newcommand\parb[2][]{#1 \big ( #2#1\big )}
\newcommand{\pp}{{\rm pp}}
\renewcommand{\exp}{{\rm exp}}
\newcommand{\mand}{\text{ and }}
\newcommand{\mfor}{\text{ for }}
\newcommand{\mforall}{\text{ for all }}
\newcommand{\vB}{{\mathcal B}}
\newcommand{\vH}{{\mathcal H}}
     \theoremstyle{plain}
     \newtheorem{thm}{Theorem}[section]
     \newtheorem{proposition}[thm]{Proposition}
     \newtheorem{lemma}[thm]{Lemma}
      \newtheorem{corollary}[thm]{Corollary}
     \theoremstyle{definition}
     \newtheorem{cond}[thm]{Condition}
     \newtheorem{remark}[thm]{Remark}
\newtheorem*{remarks*}{Remarks}
\newtheorem*{remark*}{Remark}
     \numberwithin{equation}{section}
\title[Sommerfeld radiation condition at threshold]{Sommerfeld radiation condition at threshold}
\author{E. Skibsted}
\address[E. Skibsted]{Institut for  Matematiske
Fag \\
Aarhus Universitet\\ Ny Munkegade  8000 Aarhus C,
Denmark}
\email{skibsted@imf.au.dk}
\date{}
\begin{document}
\maketitle
\thanks{\textit{Dedicated to Hiroshi Isozaki on the occasion of his
    Sixtieth Birthday.}}

\begin{abstract}
We prove Besov space bounds of the resolvent at low  energies in any
dimension for a
class of potentials  that are negative  and obey a virial
condition with these conditions imposed  at infinity only. We do
not require  spherical symmetry. The class of potentials  includes  in
dimension $\geq3$  the attractive
Coulomb potential. There are  two boundary values of the resolvent at
zero energy which we  characterize by radiation conditions. These
radiation conditions are zero energy versions of  the well-known Sommerfeld radiation condition. 
\end{abstract}

\tableofcontents

\newpage

\section{Introduction}\label{sec:introduction}
In this  paper we study low-energy   spectral  theory for the
Schr\"odinger operator
$H = -\Delta + V$  on $\mathcal H=L^2({\mathbb R}^d),\, d\geq1$, where
the potential
$V$ obeys the following condition. We  
use the notation  $\langle x \rangle = \sqrt{ x^2+1}$,
$\N_0=\N\cup\{0\}$, and for $\mu \in (0,2)$ the  notation  $s_0=1/2+\mu/4$. 
\begin{cond}
\label{cond:lap}
Let $V=V_1+V_2$ be a
real-valued function defined on  ${\mathbb R}^d$; $
d\geq1$.
There exists  $\mu \in (0,2)$  such
that the following  conditions
{\rm
\ref{it:assumption1}--\ref{it:assumption_last}} hold.

\begin{enumerate}[\normalfont (1)]
   \item \label{it:assumption1} There exists $\epsilon_1 > 0$ such
     that $V_1(x) \leq -\epsilon_1
\langle x \rangle^{-\mu}.$
   \item \label{it:assumption2}     $V_1\in C^\infty(\R^d)$. For all $\alpha\in \N_0^{d}$
there exists
$C_{\alpha} >0$ such that
$$
\langle x \rangle^{\mu+|\alpha|} |\partial^{\alpha} V_1(x)|
\leq C_{\alpha}.$$
\item \label{it:assumption3} 
     There exists $\tilde \epsilon_1 > 0$ such that $-|x|^{-2}
\left(x\cdot \nabla (|x|^2 V_1)\right)
\geq -\tilde \epsilon_1 V_1$.
   
   \item \label{it:assumption4} There exists $\delta,C,R > 0$
such that 
$$
                                |V_2(x)| \leq C |x|^{-2s_0-\delta},
$$ 
for $|x| > R$. 
   \item \label{it:assumption_last} 
     $V_2\in L^2_{\rm loc }(\R^d)$ for $d=1,2,3$,  $V_2\in L^p_{\rm
       loc }(\R^d)$ for some $p>2$ if  $d=4$ while $V_2\in
     L^{d/2}_{\rm loc }(\R^d)$ for $d\geq 5$.
\end{enumerate}
\end{cond}
Due to \ref{it:assumption4} and \ref{it:assumption_last} the operator $V_2 (-\Delta +i)^{-1}$ is a compact operator on
     $L^2({\mathbb R}^d)$, see for example \cite[Theorems X.20 and X.21]{RS} for
     the case $d\geq 4$. Whence  $H$ is self-adjoint. The
Schr\"odinger operator with  an  attractive
Coulomb potential in
dimension $d\geq3$  is a particular example.

While low-energy   spectral  asymptotics for Schr\"odinger operators is a well studied subject for
classes of potentials of fast decay the literature is more sparse for
classes of potentials with decay $O(r^{-2})$ or slower. We refer to
\cite{Ya,Na,FS,SW} and references therein. We remark that Condition \ref{cond:lap} is closely
related to the conditions used in \cite{FS}, in fact in the present
paper  we aim
at proving more precise  resolvent bounds than done in \cite{FS} (Besov space bounds), and
characterize boundary values $R(0\pm \i0)$ of
the resolvent at zero energy. The latter  is achived in terms of certain microlocal
estimates traditionally referred to as Sommerfeld radiation
  conditions. For positive energies the
  limiting absorption principle (LAP) and Sommerfeld radiation conditions  are fundamental for stationary
  scattering theory and they are used at many places in the
  literature, see for example  \cite{Sa,
    AH, GY} and \cite[Section 30.2] {Ho2} (two-body problems), \cite{Is2,Va} (many-body problems) and 
  \cite{Mol} (abstract framework). Moreover  radiation conditions are
  an 
  integral part of for one of the oldest
  method of proving LAP, cf. for example \cite{Sa}. We consider them
  as interesting of their own right, in particular including the case
  of zero energy cf. \cite{DS3}.

Neither  \cite{FS} nor the present paper  deal with
scattering theory. On the other hand  there are indeed applications to scattering theory at low
energies   as demonstrated in the recent works \cite{DS3,DS2}. However this  is for a smaller class of potentials
(essentially radially symmetric potentials).  We plan in  a future
publication \cite{Sk}  to study scattering theory at low
energyies for a larger class than the
ones of \cite{DS3,DS2} however within the one defined by Condition
\ref{cond:lap}. For this study  Besov space  bounds  and uniqueness
induced by versions of the 
 Sommerfeld radiation condition  will be 
useful. We 
remark that these results are
also present in some form in  \cite{DS3} although, as indicated above, this is under
stronger conditions on the potentials. Besides  the Besov space
bounds are
not shown in the strongest form  as done here  and they are obtained
somewhat indirectly (in fact only the imaginary part
of the boundary value of the 
resolvent is estimated). We present (most of) our results in Subsection
\ref{sec:results}. They are all under Condition
\ref{cond:lap}.
 
\subsection{Results}\label{sec:results}

\subsubsection{Resolvent bounds}\label{sec:resolv-bounds-sectFS}

Let us recall  a main result from \cite{FS}.
Let $\theta \in (0, \pi)$, $\lambda_0>0$  and define
\begin{equation}
  \label{eq:def_Gamma_theta}
\Gamma_{\theta} = \{ z \in {\mathbb C}\setminus \{0\}
\,\big{ | } 
\arg z \in (0,\theta) , \,
|z| \leq \lambda_0\}.
\end{equation} In \cite{FS} $\lambda_0$ is exclusively taken equal to
one although this is only for convenience of presentation. In this
paper we fix any $\lambda_0>0$ at this point and suppress henceforth any
 dependence of this constant (as done in the notation
 \eqref{eq:def_Gamma_theta}). At this point we also fix $\theta \in
 (0, \pi)$, but keep (somewhat inconsistently) the dependence of
 $\theta$ of the set \eqref{eq:def_Gamma_theta}.

For $\mu \in (0,2)$, $K>0$ and $ \lambda \geq 0$ let 
\begin{equation}
  \label{eq:fsublambda}
  f=f_\lambda(x)=(\lambda+K\inp{x}^{-\mu})^{1/2};\, x\in \R^d.
\end{equation}
Here $\lambda$ will be taken  as $|z|$ for $z$ in the closure
of  $\Gamma_{\theta}$ and $K$ can for parts of our presentation and analysis be taken
arbitrary. More precisely the latter is true for Theorems \ref{thm:lap}
and  \ref{prop:resolvent-bounds} presented below. Consequently we take, for
convenience, $K=1$ in   these theorems as well as in Subsection \ref{subsec: Concrete Besov
spaces} (where Theorem \ref{prop:resolvent-bounds} is proved).  As for Section \ref{subsec: Sommerfeld radiation condition
at zero energy}  we choose a different value of $K$,
see \eqref{eq:26}.

For a Hilbert space ${\mathcal H}$ (which in our case  will be
$L^2({\mathbb R}^d)$) we denote by ${\mathcal B}(\mathcal H)$ the
space of bounded linear operators on ${\mathcal H}$  
(a similar notation will be used for Banach spaces).
A ${\mathcal B}(\mathcal H)$-valued function $T(\cdot)$ on
$\Gamma_{\theta}$ is said to be uniformly H\"{o}lder continuous in 
$\Gamma_{\theta}$ if there exist $C,\gamma>0$ such that 
$$
\|T(z_1)-T(z_2) \|\leq C|z_1-z_2|^{\gamma}\; {\rm{for\; all}}\; z_1,z_2\in
\Gamma_{\theta}.
$$

We consider the Schr{\"o}dinger operator $H = -\Delta + V$ on
$L^2({\mathbb R}^d)$ under Condition~\ref{cond:lap}. The resolvent is
denoted by $R(z) = (H -z)^{-1}$.  In the statement below of (a version
of) \cite[Theorem 1.1]{FS} some conditions on the potential $V$ are
slightly changed. Comments at this point are given after the
statement.

\begin{thm}[LAP]
\label{thm:lap}
Suppose Condition \ref{cond:lap}.
 For all  $s>s_0$
  the family of operators $T(z)=\langle x
\rangle^{-s}R(z) 
\langle x \rangle^{-s}$ is uniformly H\"{o}lder continuous in 
$\Gamma_{\theta}$. 
In particular the 
limits
\begin{align*}
T(0 +\i0)&=\langle x \rangle^{-s} R(0 + \i0) 
\langle x \rangle^{-s}
= \lim_{z \rightarrow 0, z \in \Gamma_{\theta}}
T(z),\\
T(0 -\i0)&=\langle x \rangle^{-s} R(0 -\i0) \
\langle x \rangle^{-s}
= \lim_{z \rightarrow 0, z \in \Gamma_{\theta}}
T(\bar z)
\end{align*}
exist in
${\mathcal B}(L^2({\mathbb R}^d))$.

For all $s> 1/2$  there exists $C>0$
such that for all $z \in \Gamma_{\theta}$
\begin{equation}
  \label{eq:limitbound}
 \|
\langle x
\rangle^{-s} f_{|z|}^{1/2 }R(z)
f_{|z|}^{1/2 }\langle x \rangle^{-s}\| \leq C.
\end{equation}
\end{thm}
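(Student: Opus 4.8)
The plan is to adapt, in a sharpened form, the positive-commutator method of \cite{FS}, the real work lying in checking that each clause of Condition~\ref{cond:lap} supplies precisely the input it is responsible for. I would organize the argument in four steps.

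\emph{Step 1 (structure and reductions).} Since $\Gamma_{\theta}$ avoids the positive real axis, on each truncation $\Gamma_{\theta}\cap\{|z|\geq\eta\}$ the asserted uniform H\"older continuity and the bound \eqref{eq:limitbound} reduce to the classical (sharp, $s>\tfrac12$) positive-energy limiting absorption principle, which holds under Condition~\ref{cond:lap} by Mourre's method with the dilation generator $A_0=\tfrac12(x\cdot p+p\cdot x)$: clause~\ref{it:assumption3} yields the strict Mourre inequality at every $E>0$, while clause~\ref{it:assumption2} (for $V_1$) together with the short-range, relatively compact structure of $V_2$ from clauses~\ref{it:assumption4}--\ref{it:assumption_last} supply the regularity of $H$ relative to $A_0$ and the relative compactness of $V_2(-\Delta+\i)^{-1}$ and of the commutator remainders. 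So all the new content is uniformity as $|z|\downarrow0$, and it suffices to prove \eqref{eq:limitbound} for $z\in\Gamma_{\theta}$, $|z|$ small. The $\langle x\rangle^{-s}$-bounds with $s>s_0$ then follow: with $K=1$ one has $f_{|z|}^{1/2}\langle x\rangle^{-s}\geq c\,\langle x\rangle^{-\mu/4-s}$, so \eqref{eq:limitbound} at $s=\tfrac12+\kappa$ gives $\|\langle x\rangle^{-s_0-\kappa}R(z)\langle x\rangle^{-s_0-\kappa}\|\leq C$ for every $\kappa>0$; H\"older continuity and the existence of the limits are deferred to Step 4.

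\emph{Step 2 (energy-adapted conjugate operator; a global positive commutator).} The obstruction at the threshold is that, $V$ being negative, eigenvalues of $H$ accumulate at $0$ (the hydrogen spectrum in the Coulomb case), so there is no Mourre gap there. Following \cite{FS}, I would work directly with $R(z)$, $z\in\Gamma_{\theta}$, through a conjugate operator $A_\lambda$, $\lambda=|z|$, tuned to the local wavenumber $f_\lambda=(\lambda+\langle x\rangle^{-\mu})^{1/2}$, of the form $A_\lambda=\tfrac12\parb{b_\lambda\cdot p+p\cdot b_\lambda}$ with $b_\lambda=f_\lambda^{-1}\chi\,\nabla\rho$, where $\rho$ regularizes $|x|$ and $\chi$ cuts off to $|x|\geq R$ (it reduces to a bounded modification of $A_0$ for $\lambda$ bounded away from $0$). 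The crucial point is a \emph{global} positive commutator bound, requiring no energy localization: from $\i[H,A_0]=2(H-\lambda)+2\lambda-2V-x\cdot\nabla V$ and clause~\ref{it:assumption3} (equivalently $-x\cdot\nabla V_1\geq(2-\tilde\epsilon_1)V_1$, so that $-2V_1-x\cdot\nabla V_1\geq\tilde\epsilon_1|V_1|\geq\tilde\epsilon_1\epsilon_1\langle x\rangle^{-\mu}$ pointwise), together with clause~\ref{it:assumption4} (the $V_2$-contribution is negligible against $\langle x\rangle^{-\mu}$ at infinity) and clause~\ref{it:assumption2} (the commutators $[H,A_\lambda]$ and $[[H,A_\lambda],A_\lambda]$ are controlled in $f_\lambda$- and $\langle x\rangle$-weighted norms, uniformly for $\lambda\in[0,\lambda_0]$), one should reach, schematically,
\begin{equation*}
 \i[H,A_\lambda]\ \geq\ c\,f_\lambda^{2}+2(H-\lambda)-{\rm Err}_\lambda ,
\end{equation*}
where ${\rm Err}_\lambda$ is small relative to $f_\lambda^{2}$ at infinity and relatively $-\Delta$-compact on bounded sets (by clause~\ref{it:assumption_last}).

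\emph{Step 3 (from the commutator to \eqref{eq:limitbound}).} For $z=\lambda+\i\varepsilon\in\Gamma_{\theta}$ and $u=R(z)g$ one has $\inp{u,(H-\lambda)u}=\Re\inp{g,u}$, $\varepsilon\|u\|^{2}=\Im\inp{g,u}$, and the Mourre identity $\inp{u,\i[H,A_\lambda]u}=2\Im\inp{A_\lambda u,g}+2\varepsilon\Im\inp{A_\lambda u,u}$. Pairing the bound of Step 2 with $u$ and inserting these identities dominates the main term by the boundary quantity $|\Im\inp{A_\lambda u,g}|+|\Re\inp{g,u}|$, the sign-indefinite term $\varepsilon|\Im\inp{A_\lambda u,u}|$, and $\inp{u,{\rm Err}_\lambda u}$; one then estimates the components of $A_\lambda u$ through the elliptic relation $p^{2}u=(z-V)u+g$, absorbs the large-$|x|$ part of ${\rm Err}_\lambda$ into $c\,f_\lambda^{2}$ and the bounded part by relative compactness, and controls $\varepsilon|\Im\inp{A_\lambda u,u}|$ by $\varepsilon\|u\|^{2}=\Im\inp{g,u}$ plus a small multiple of the main term. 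Carrying this out in the \emph{sharp} form of the argument — the conjugate operator carrying $\langle x\rangle^{-(2s-1)}$-weights, a dyadic decomposition in $|x|$, which is exactly what makes the Agmon--H\"ormander borderline $s>\tfrac12$ work, and the $f_{|z|}^{1/2}$-weights of \eqref{eq:limitbound} (rather than $f_{|z|}$-weights) being what closes the estimate — yields $\|\langle x\rangle^{-s}f_{|z|}^{1/2}u\|\leq C\|\langle x\rangle^{s}f_{|z|}^{-1/2}g\|$ with $C$ independent of $\varepsilon>0$, i.e.\ \eqref{eq:limitbound}, which then passes to the boundary.

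\emph{Step 4 (continuity; and the main obstacle).} Uniform H\"older continuity of $T(\cdot)$ on $\Gamma_{\theta}$, hence existence of $T(0\pm\i0)$, follows by the Jensen--Mourre--Perry mechanism: from $\partial_{z}R=R^{2}$, the estimate of Step 3 used twice with an intermediate weight and interpolated with $\|R(z)\|\leq|\Im z|^{-1}$ gives $\|\langle x\rangle^{-s}R(z)^{2}\langle x\rangle^{-s}\|\leq C\,|\Im z|^{\gamma-1}$ for some $\gamma\in(0,1)$ whenever $s>s_0$, and integration along paths in $\Gamma_{\theta}$ kept at height comparable to $|z_1-z_2|$ turns this into the H\"older bound with exponent $\gamma$. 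The genuinely hard part is Step 2: designing $b_\lambda$ so that $\i[H,A_\lambda]$ is globally $\geq c\,f_\lambda^{2}$ modulo the benign remainder \emph{uniformly down to $\lambda=0$} — arranging that the tangential directions and the transition regions of $\chi$ and $f_\lambda^{-1}$ do not spoil positivity, while keeping the first and second commutators bounded in the degenerating $f_\lambda$-calculus. Once that uniform commutator calculus is in place, Steps~3 and~4 reduce to essentially standard weighted bookkeeping.
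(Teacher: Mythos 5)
You should first be aware that the paper does not reprove this theorem: it is imported from \cite[Theorem 1.1]{FS}, and the paper's only ``proof'' consists in checking that the modified hypotheses still feed that theorem --- namely that \ref{it:assumption4} and \ref{it:assumption_last} give compactness of $V_2(-\Delta+\i)^{-1}$, and that the unique continuation assumption of \cite{FS} is automatic by \cite{FH,JK}, which via \cite[Theorem 2.4]{FS} yields $\sigma_{\pp}(H)\cap[0,\infty)=\emptyset$; the paper stresses that this last property ``is part of the proof'' of \eqref{eq:limitbound}. Measured against the actual mechanism of \cite{FS} (fragments of which appear in Section~\ref{sec:Resolvent bounds}: the plain dilation generator $A$ of \eqref{eq:5}, the global identity \eqref{eq:6} $\i[H,A]=2H+W$ with $W\geq\epsilon_1\tilde\epsilon_1\inp{x}^{-\mu}$, the regularized resolvent $R_z(\epsilon)$ of \eqref{eq:def_R_zeta}, the quadratic estimate \eqref{eq:quadratic} and the differential inequality \eqref{eq:7} in $\epsilon$), your route is genuinely different: you propose to redesign the conjugate operator itself, $A_\lambda$ with vector field $f_\lambda^{-1}\chi\nabla\rho$, and to run a one-shot positive-commutator estimate. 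In \cite{FS} the energy adaptation enters only through the weights $f_\lambda$ in the quadratic estimates, not through a new vector field, and the uniformity down to $\lambda=0$ is extracted from the $\epsilon$-differential inequality rather than from a direct Mourre-type pairing.

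Beyond this difference of route, there are two concrete gaps. First, your Step 2 --- the uniform-in-$\lambda$ global lower bound $\i[H,A_\lambda]\geq c f_\lambda^2+2(H-\lambda)-{\rm Err}_\lambda$ together with control of the first and second commutators in the degenerating $f_\lambda$-calculus --- is exactly the content of the theorem's difficulty, and you defer it (``the genuinely hard part''); as it stands nothing is proved, only re-stated. Second, and independently of which conjugate operator is used, your Step 3 disposes of the contribution of the local singularities of $V_2$ by saying the relatively compact part is ``absorbed by relative compactness.'' A relatively compact error cannot be absorbed into a positivity estimate uniformly as $z\to 0$ in $\Gamma_\theta$ by size alone; the standard way out is a compactness/contradiction argument whose weak limit is a (threshold or embedded) eigenfunction, which must then be excluded. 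This is precisely why absence of non-negative eigenvalues --- resting on the virial-type condition \ref{it:assumption3} together with unique continuation at infinity, supplied here by \ref{it:assumption_last} via \cite{FH,JK} --- is an ingredient of the proof of \eqref{eq:limitbound}, not a corollary of it. Your proposal never engages with this ingredient, so even granting Step 2 the argument does not close. The Step 4 H\"older continuity (uniform at $z=0$, with $s>s_0$) is plausible along Jensen--Mourre--Perry lines but is likewise only sketched; in \cite{FS} it is obtained from the same $\epsilon$-machinery, not from interpolation with $\|R(z)\|\leq|\Im z|^{-1}$ alone.
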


 We have already noted that due to \ref{it:assumption4} and
 \ref{it:assumption_last} the operator $V_2 (-\Delta +i)^{-1}$ is a compact, which occurs as a separate condition in \cite[Theorem
     1.1]{FS} (the condition \ref{it:assumption_last} is a new condition compared to
 \cite[Theorem 1.1]{FS}). Another condition from
     \cite[Theorem 1.1]{FS} that we omitted above is a version of  unique continuation
     at infinity. This version, \cite[Assumption 2.1]{FS},  is automatically satisfied, given
     \ref{it:assumption_last}, due to results of \cite{FH} (for $d\leq
     3$) and
     \cite{JK} (for $d\geq
     3$). Applying it with $V\to V-\lambda$ for any $\lambda \geq0$ in conjunction
     with 
     \cite[Theorem 2.4]{FS} we
     have $\sigma_{\pp}(H)\cap [0,\infty)=\emptyset$ (for $d=1,2,3$
     absence of strictly positive
     eigenvalues follows alternatively from \cite[Corollary 1.4]{FH}). The absence of
     non-negative eigenvalues is of course  a
     consequence of \eqref{eq:limitbound}, however  this property is part of the
     proof of the latter  bound. 

     We note that imposing the conditions \ref{it:assumption1} and
     \ref{it:assumption3} only near infinity may seem, with the other
     conditions of Condition \ref{cond:lap}, to weaken the
     assumptions. However this is not the case cf. a discussion in
     \cite[Section 3]{FS}. On the other hand it suffices to have the
     bounds \ref{it:assumption2} for $|\alpha|\leq 2$. More precisely
     this is the case for Theorems~\ref{thm:lap}
     and~\ref{prop:resolvent-bounds}. For the microlocal estimates of
     Section \ref{subsec: Sommerfeld radiation condition at zero
       energy} though we need $V_1$ to be a ``symbol'' and all bounds
     of \ref{it:assumption2} are then needed.

Notice also that \eqref{eq:limitbound} is stronger than 
boundedness of the family $T(\cdot )$ (which is a consequence of the
uniform H\"{o}lder continuity). 

A main  result of this paper (recall that we impose Condition
\ref{cond:lap} throughout the paper) is the following improvement of \eqref{eq:limitbound} in
terms of Besov spaces as defined in the beginning of Subsection \ref{subsec: Abstract Besov
spaces}  with the operator $A$ there being given as multiplication by
$|x|$ on the complex Hilbert space $\vH=L^2({\mathbb R_x^d})$.
\begin{thm}[Besov space bound]
  \label{prop:resolvent-bounds}  There exists
  $C>0$ such that for all $z \in \Gamma_{\theta}$
\begin{equation}
  \label{eq:limitbound2j}
 \|
f_{|z|}^{1/2 }R(z)
f_{|z|}^{1/2 }\|_{\vB (B(|x|),B(|x|)^*)} \leq C.
\end{equation}
\end{thm}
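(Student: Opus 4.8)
The strategy is to deduce the Besov-space bound \eqref{eq:limitbound2j} from the weighted bound \eqref{eq:limitbound} of Theorem \ref{thm:lap} by an abstract interpolation/duality argument in the spirit of the classical passage from $\langle x\rangle^{-s}$ bounds to Agmon--H\"ormander $B$--$B^*$ bounds, using the operator $A=|x|$ as the underlying self-adjoint operator generating the Besov scale. Recall that $\|u\|_{B(A)}^2 \sim \sum_{n\ge 0} 2^n \|\mathbf 1_{\{2^{n-1}\le |x|<2^n\}} u\|^2$ (with the first shell $|x|<1$) and that $B(A)^*$ carries the dual norm $\|v\|_{B(A)^*}^2 \sim \sup_{R\ge 1} R^{-1}\|\mathbf 1_{\{|x|<R\}} v\|^2$. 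So it suffices to show: for each dyadic shell $\chi_n$ there is a constant independent of $n$ and of $z\in\Gamma_\theta$ bounding $2^{n/2} 2^{-m/2}\|\chi_m f_{|z|}^{1/2} R(z) f_{|z|}^{1/2}\chi_n\|$ summably, i.e. to establish the dyadic matrix-element bounds and sum them.

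First I would record the elementary comparison between the functions $f_{|z|}(x)=(|z|+\langle x\rangle^{-\mu})^{1/2}$ and the weights, namely $c\langle x\rangle^{-\mu/2}\le f_{|z|}(x)$ for all $z$ with $|z|\le\lambda_0$, together with the matching upper bound on the region $\langle x\rangle^{-\mu}\ge |z|$ and, on the complementary region $\langle x\rangle^{-\mu}\le |z|$, the bound $f_{|z|}(x)\le 2|z|^{1/2}$. This splits each dyadic pair into a "near" regime (both shells inside $|x|^{-\mu}\gtrsim|z|$, i.e. $|x|\lesssim |z|^{-1/\mu}$) where $f_{|z|}^{1/2}\sim\langle x\rangle^{-\mu/4}$ and a "far" regime where $f_{|z|}^{1/2}$ is comparable to the $z$-dependent constant $|z|^{1/4}$. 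In the near regime the claim reduces, after inserting $\langle x\rangle^{-s_0}\langle x\rangle^{s_0}$ and using $s_0=1/2+\mu/4$, to the already-proven estimate $\|\langle x\rangle^{-s}f_{|z|}^{1/2}R(z)f_{|z|}^{1/2}\langle x\rangle^{-s}\|\le C$ for $s$ slightly above $1/2$, combined with a standard Schur-type summation over dyadic scales where the off-diagonal decay comes from the extra powers $\langle x\rangle^{\pm\epsilon}$ one can afford since \eqref{eq:limitbound} holds for every $s>1/2$. The point is that \eqref{eq:limitbound} is uniform in $|z|$, so the constant does not degenerate as $z\to 0$.

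The main obstacle, and the part needing genuine work, is the far regime: when $|x|$ is large compared to $|z|^{-1/\mu}$, the operator $H-z$ behaves like $-\Delta-z$ with $z$ of small modulus but nonzero, and one must control $R(z)$ between dyadic shells with the $B$--$B^*$ (Agmon--H\"ormander) norm for the generator $A=|x|$ uniformly down to $z=0$. Here I would not try to reprove a free-resolvent estimate; instead I would feed the already-established positive-energy-type microlocal structure back in. Concretely, the weighted bound \eqref{eq:limitbound} with weight $\langle x\rangle^{-s}$, $s$ just above $1/2$, is exactly the endpoint that, via the standard abstract lemma (cf. the Besov-space formalism of Subsection \ref{subsec: Abstract Besov spaces}), upgrades to the $B$--$B^*$ bound once one knows a commutator/Mourre-type positivity for $A=|x|$ relative to $f_{|z|}^{1/2}(H-z)f_{|z|}^{1/2}$; such positivity is precisely what underlies the proof of Theorem \ref{thm:lap} in \cite{FS}, so I would extract it in the form needed and apply the abstract Besov-bound mechanism uniformly in $z\in\Gamma_\theta$. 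The delicate bookkeeping is ensuring all constants are uniform as $|z|\downarrow 0$ and that the abstract lemma applies with the $z$-dependent conjugating factor $f_{|z|}^{1/2}$, which is handled by the uniform two-sided comparison of $f_{|z|}$ with the shell weights established in the previous step. Summing the near and far dyadic contributions by Schur's test then yields \eqref{eq:limitbound2j}. $\Box$
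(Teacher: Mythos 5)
There is a genuine gap, and it occurs at the very first step of your plan. You claim that in the ``near'' regime the desired dyadic shell bounds follow from \eqref{eq:limitbound} ``combined with a standard Schur-type summation over dyadic scales where the off-diagonal decay comes from the extra powers $\langle x\rangle^{\pm\epsilon}$ one can afford since \eqref{eq:limitbound} holds for every $s>1/2$.'' This is backwards: the weights in \eqref{eq:limitbound} are \emph{negative} powers $\langle x\rangle^{-s}$, so for shells $|x|\sim 2^m$, $|x|\sim 2^n$ the estimate \eqref{eq:limitbound} only gives $\|\chi_m f^{1/2}R(z)f^{1/2}\chi_n\|\lesssim C_s\,2^{s(m+n)}$ with $s>1/2$, i.e.\ a bound \emph{worse} than the $2^{(m+n)/2}$ threshold required for the $\vB(B(|x|),B(|x|)^*)$ norm, and with no off-diagonal gain whatsoever (taking $s$ closer to $1/2$ only degrades the constant, it produces no decay in $|m-n|$). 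Because of the embeddings \eqref{eq:3}, the Besov bound \eqref{eq:limitbound2j} is strictly stronger than the family of weighted bounds \eqref{eq:limitbound} for all $s>1/2$; it cannot be recovered from them by duality, interpolation or Schur summation. That is precisely why the theorem is stated as an improvement of \eqref{eq:limitbound} and needs a new proof, not a reduction to Theorem \ref{thm:lap}.

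The second, ``far regime'' part of your plan is where the real work would have to be, but as written it is a placeholder: you propose to ``extract'' a Mourre-type positivity for the conjugate operator $A=|x|$ from the proof of Theorem \ref{thm:lap} and then ``apply the abstract Besov-bound mechanism.'' The commutator positivity underlying \cite{FS} (and the present paper) is with respect to the generator of dilations $A=(x\cdot p+p\cdot x)/2$, cf.\ \eqref{eq:5}--\eqref{eq:6}, not with respect to multiplication by $|x|$, and no positivity of the kind you invoke is available for $|x|$ directly. The paper's proof has three nontrivial steps that your outline does not supply: (a) a uniform $\vB(B(A),B(A)^*)$ bound for $f_{|z|}R(z)f_{|z|}$ with $A$ the dilation generator, proved by differentiating the regularized resolvent $R_z(\epsilon)$ and establishing the translated dyadic bounds \eqref{eq:8}--\eqref{eq:10} (fed into Lemmas \ref{lemma:resolvent-boundsooA}--\ref{lemma:resolvent-boundsooAb}); (b) a transfer from $B(A)$ to $B(f_{|z|}\inp{x})$ via the pseudodifferential identities \eqref{eq:15}--\eqref{eq:16} with a high-energy cutoff $\chi_+(a_{|z|})$, needed because $\inp{A}\inp{x}^{-1}$ is unbounded; and (c) the rescaling Lemma \ref{lemma:resolvent-boundsAbb} showing $f_{|z|}^{-1/2}:B(|x|)\to B(f_{|z|}\inp{x})$ uniformly in $z$, which is where your ``near/far'' splitting of $f_{|z|}$ actually does appear, but only as a bookkeeping device for weights, not as a route back to \eqref{eq:limitbound}. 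Without an argument replacing (a) and (b), your proposal does not prove the theorem.
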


\subsubsection{Sommerfeld radiation condition}\label{sec:Sommerfeld
  radiation condition} We shall  give an outline  of the
results of Section \ref{subsec: Sommerfeld radiation
  condition at zero energy}. These results are on microlocal estimates
of  solutions to the equation $Hu=v$. In particular we estimate and
characterize the particular solution provided by Theorems
\ref{thm:lap} and \ref{prop:resolvent-bounds}. This particular
solution is constructed as follows in terms of Besov
spaces. First note that the  relevant Besov space at zero energy  is
$B^\mu:=\inp{x}^{-\mu/4}B(|x|)$, cf.  Theorem
\ref{prop:resolvent-bounds}. We have the following characterization of the
corresponding dual
space (recall $s_0:=1/2+\mu/4$)
\begin{equation*}
  u\in (B^
\mu)^* \Leftrightarrow  u\in L^2_{\rm loc}(\R^d)\mand \sup_{R>1} R^{-s_0}\|F(|x|<R)u\|<\infty.
\end{equation*} A slightly smaller space is given by 
\begin{equation*}
  u\in (B^
\mu)^*_0 \Leftrightarrow u\in  L^2_{\rm loc}(\R^d) \mand \lim_{R\to \infty} R^{-s_0}\|F(|x|<R)u\|=0. 
\end{equation*}
Now suppose $v\in B^ \mu$.  Then due to Theorems \ref{thm:lap} and
\ref{prop:resolvent-bounds} there exists the weak-star limit
\begin{equation}\label{eq:30o}
  u=R(0 + \i0) v
= \wslim_{z \rightarrow 0, z \in \Gamma_{\theta}}
R(z) v\in (B^
\mu)^*. 
\end{equation}  Note that indeed this $u$ is a (distributional)
solution to the equation $Hu=v$. 

 Let us state a  microlocal property of  this solution. We shall use
 \eqref{eq:fsublambda} with  
\begin{equation}
  \label{eq:26}
  K=\epsilon_1 \tilde \epsilon_1 /(2-\mu),
\end{equation} where the $\epsilon$'s come from Condition
\ref{cond:lap}. 
 In terms of  $f_0$ we then introduce symbols 
\begin{align*}
a_0= \frac{\xi^2}{f_0(x)^2} ,\;\;
b_0=  \frac{\xi}{f_0(x)} \cdot \frac{x}{\langle x \rangle},
\end{align*} and we prove that 
\begin{equation}\label{eq:31h}
  \Opw(\chi_-(a_0)\tilde \chi_-(b_0))u\in (B^
\mu)^*_0\mforall \chi_-\in C^\infty_c(\R)\mand \tilde \chi_-\in C^\infty_c((-\infty, 1)).
\end{equation} Here we use Weyl quantization (although this is not the 
only choice). For \eqref{eq:30o}, \eqref{eq:31h}, another  version of 
\eqref{eq:31h} as well as   another microlocal property (high energy
estimates) we refer the
reader to  Proposition \ref{prop:radiation_somm-radi-conds}. 

The support property of $\tilde \chi_-$ in \eqref{eq:31h}  mirrors that
the particular solution  studied   is ``outgoing'', and we refer to
\eqref{eq:31h}  as  a 
\emph{Sommerfeld radiation condition}. This condition   (in fact a
weaker version) suffices   for a characterization as expresssed in the
following result. We refer the reader to Theorem
\ref{thm:somm-radi-conds} for a  slightly stronger result as well as
another 
version.
\begin{thm}[Uniqueness of outgoing
  solution]\label{thm:somm-radi-condss}  Suppose   $v\in  B^
\mu$. Suppose $u$ is a  distributional 
solution to the equation $Hu=v$ belonging to the space $
\inp{x}^{-s}L^2(\R^d)$ for some $s\in \R$, and suppose that there
there exists $\sigma\in(0,1]$ such that 
\begin{equation}\label{eq:31hf}
  \Opw(\chi_-(a_0)\tilde \chi_-(b_0))u\in (B^
\mu)^*_0\mforall \chi_-\in C^\infty_c(\R)\mand \tilde \chi_-\in C^\infty_c((-\infty, \sigma)).
\end{equation} Then $u=R(0 + \i0) v$. In particular \eqref{eq:31h} holds.
\end{thm}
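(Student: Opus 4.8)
\emph{Proof strategy.} The idea is to subtract the distinguished solution, reduce to the homogeneous equation, and then run a positive-commutator (Mourre-type) argument at the threshold in which the radiation condition \eqref{eq:31hf} plays the role that a spectral parameter with $\Im z>0$ plays in the usual limiting absorption arguments. First I would set $u_0:=R(0+\i0)v$. By \eqref{eq:30o}, \eqref{eq:31h} and Proposition~\ref{prop:radiation_somm-radi-conds} one has $u_0\in(B^\mu)^*$, $Hu_0=v$, and $u_0$ satisfies \eqref{eq:31h}; since $C^\infty_c((-\infty,\sigma))\subseteq C^\infty_c((-\infty,1))$ it then also satisfies \eqref{eq:31hf}. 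As $(B^\mu)^*\subseteq\inp{x}^{-s'}L^2(\R^d)$ for every $s'>s_0$, the difference $w:=u-u_0$ lies in $\inp{x}^{-s''}L^2(\R^d)$ for some $s''\in\R$, solves $Hw=0$ distributionally, and — by linearity of $\Opw(\chi_-(a_0)\tilde\chi_-(b_0))$ and of $(B^\mu)^*_0$ — still satisfies \eqref{eq:31hf}. It thus suffices to show that every such $w$ vanishes; then $u=u_0=R(0+\i0)v$, and \eqref{eq:31h} for $u$ follows from the same property of $u_0$.

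Next I would localize $w$ in phase space. On the characteristic set one has $\xi^2=-V(x)$, and \ref{it:assumption1}--\ref{it:assumption2} force $\epsilon_1\inp{x}^{-\mu}\le-V_1(x)\le C\inp{x}^{-\mu}$, so, modulo the faster-decaying $V_2$, the symbol $a_0=\xi^2/f_0(x)^2$ stays in a fixed compact subset of $(0,\infty)$ there. An elliptic pseudodifferential parametrix for $H$ — legitimate because $V_1$ is a symbol by \ref{it:assumption2}, $V_2$ is absorbed using \ref{it:assumption4}--\ref{it:assumption_last}, and $w\in H^2_{\rm loc}$ by elliptic regularity — then controls $\Opw(\chi(a_0))w$ for every $\chi\in C^\infty(\R)$ vanishing near that compact set, and reduces matters to a conic neighbourhood of the characteristic set. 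On the part of it where $b_0<\sigma$, the hypothesis \eqref{eq:31hf} already puts $\Opw(\chi_-(a_0)\tilde\chi_-(b_0))w$ into $(B^\mu)^*_0$. On the complementary part, where $b_0$ is bounded below by a positive constant, I would use a positive-commutator estimate at the threshold with a conjugate operator $A$ whose principal symbol is $\theta(b_0)^2\,g\,(x\cdot\xi)$, where $\theta\in C^\infty(\R)$ equals $1$ near $\{b_0\ge\sigma\}$ and is supported in $\{b_0>\sigma/2\}$, and $g$ is a suitable weight, decreasing in $\inp{x}$ and (ultimately) in $b_0$, adapted to the Hamilton flow. On the energy shell the principal symbol of $\i[H,A]$ equals $\theta(b_0)^2 g\big(2(-V)-x\cdot\nabla V\big)$ plus terms carrying a factor $\theta'(b_0)$ or a derivative of $g$; by the virial condition \ref{it:assumption3} together with the choice \eqref{eq:26} of $K$ the first term is bounded below by $c\,\theta(b_0)^2 g\inp{x}^{-\mu}\ge0$, the $\theta'(b_0)$-terms are supported in $\{b_0<\sigma\}$ and hence absorbed via \eqref{eq:31hf}, the derivative-of-$g$ terms are controlled by the flow-adapted choice of $g$, and no inhomogeneous term appears since $Hw=0$. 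A graded iteration of this estimate — the decay improving by a fixed amount at each step, so finitely many steps suffice — improves the decay of $\Opw(\theta(b_0))w$ down to the borderline space $(B^\mu)^*$. Combining the three regions, $w\in(B^\mu)^*$.

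With $w\in(B^\mu)^*$ I would finish by invoking the a priori version of Theorems~\ref{thm:lap} and~\ref{prop:resolvent-bounds}: rerunning their positive-commutator proof for a distributional solution $w\in(B^\mu)^*$ of $Hw=v'$ with $v'\in B^\mu$ that satisfies \eqref{eq:31hf} — the radiation condition supplying the sign that $\Im z>0$ supplies for $R(z)$, so that the boundary term is non-positive and the remaining error terms lie in $(B^\mu)^*_0$ — yields $\|w\|_{(B^\mu)^*}\le C\|v'\|_{B^\mu}$; taking $v'=Hw=0$ forces $w=0$. (Equivalently, once $w$ is negligible in every microlocal region it decays rapidly, so $w\in L^2(\R^d)$, and $w=0$ because $\sigma_{\pp}(H)\cap[0,\infty)=\emptyset$, as recalled in the Introduction.) The main obstacle is this threshold positive-commutator machinery, used first in the localization step and then in its sharp form: the commutator gain degenerates to order $\inp{x}^{-\mu}$, so only a carefully weighted graded iteration closes; the cut-offs $\theta(b_0)$ and the weight $g$ must be commuted through $H$ with all errors kept under control; and the $b_0$-asymmetry that singles out the outgoing solution — harmless in the transition region precisely because of \eqref{eq:31hf} — must substitute for the missing spectral parameter. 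It is exactly here that item~\ref{it:assumption3} of Condition~\ref{cond:lap} and the particular value \eqref{eq:26} of $K$ are essential; in a written-out account I would most likely deduce the statement as a special case of the slightly stronger Theorem~\ref{thm:somm-radi-conds}.
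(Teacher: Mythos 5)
Your opening reduction is exactly the paper's: set $u_0=R(0+\i0)v$, use Proposition~\ref{prop:radiation_somm-radi-conds} to see that $u_0$ solves $Hu_0=v$ and satisfies the radiation condition, and reduce to showing that a solution $w\in L^2_{-\infty}$ of $Hw=0$ satisfying \eqref{eq:31hf} vanishes. Your middle step (elliptic region via a parametrix as in Lemma~\ref{lemma:som-energy}, the region $b_0<\sigma$ via the hypothesis, and a threshold positive-commutator estimate in the outgoing region resting on the virial condition and the choice \eqref{eq:26} of $K$) is also in the spirit of Steps I--II of the paper's proof of Theorem~\ref{thm:somm-radi-conds}, from which Theorem~\ref{thm:somm-radi-condss} is deduced. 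The genuine gap is in your end-game. Your principal route is an a priori bound $\|w\|_{(B^\mu)^*}\le C\|Hw\|_{B^\mu}$ obtained by ``rerunning'' the proofs of Theorems~\ref{thm:lap} and~\ref{prop:resolvent-bounds} with the radiation condition substituting for $\Im z>0$. There is no such estimate available: those proofs are differential inequalities in the regularization parameter $\epsilon$ for the operator family $R_z(\epsilon)=(H-\i\epsilon\i[H,A]-z)^{-1}$ (cf.\ \eqref{eq:def_R_zeta}, \eqref{eq:quadratic}, \eqref{eq:7}), and that machinery acts on the resolvent family, not on an individual distributional solution equipped with a radiation condition; nothing of this sort is proved (or needed) in the paper.

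Even granting your intermediate claim $w\in(B^\mu)^*$, this is weaker than what the hypothesis already yields microlocally and is nowhere near sufficient: to conclude $w=0$ one must push the decay strictly beyond the critical weight $s_0$, and this is the hard part of the paper's argument, which your proposal does not supply. In the paper one first proves $w\in L^2_s$ for all $s<-s_0$ (Step~I, a commutator with $X^{-2\epsilon}\chi_R$ where the radiation condition is used to insert $\chi(b>\sigma/2)$ so that the sign of $b$ yields \eqref{eq:uncomm9}--\eqref{eq:uncomm10}), then the little-o statement $w\in(B^\mu)^*_0$ (Step~II), which is essential because it is what makes the boundary term \eqref{eq:0bn} vanish; only then can Step~III run: a second commutator scheme with symbols $\parb{X/X_\kappa}^{s}X_\kappa^{-s_0}\chi(\mp b>1/2)\chi_-(a)$, whose bracket with the weight has a definite sign on $\{\mp b>1/2\}$ by \eqref{eq:33}--\eqref{eq:44}, combined with the propagation-of-singularities Proposition~\ref{prop:propa_sing} to cover $\{|b|\le 1/2\}$, iterated with a gain $\epsilon_3=\epsilon_2/2$, $\epsilon_2=\min(1-\mu/2,\delta)$ (the limitation coming from $V_2$ being merely bounded by $|x|^{-2s_0-\delta}$, handled via \eqref{eq:40}), until $w\in L^2_s$ for all $s$; finally $w=0$ because $0$ is not an eigenvalue. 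Your parenthetical fallback (``once $w$ is negligible in every microlocal region it decays rapidly'') simply asserts this bootstrap without a mechanism, and your graded iteration with the conjugate symbol $\theta(b_0)^2 g\,(x\cdot\xi)$ is only claimed to reach the borderline space $(B^\mu)^*$, not the little-o space $(B^\mu)^*_0$ and not any supercritical weight; so the decisive step of the uniqueness proof is missing.
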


The proof of Theorem \ref{thm:somm-radi-conds} (yielding in particular
Theorem
\ref{thm:somm-radi-condss}) relies partly on a ``propagation
of singularities'' result. This result is  stated as Proposition
\ref{prop:propa_sing}. We note that the ``incoming'' solution  $u=R(0
- \i0) v$ can be   characterized  similarly. Our results generalize
\cite[Proposition 4.10] {DS3} at zero energy. For similar results for
positive energies and for   larger classes of potentials see
\cite[Theorem 30.2.10] {Ho2} and \cite{GY}.

 \section{Improved resolvent bounds}\label{sec:Resolvent bounds}
In Subsection \ref{subsec: Concrete Besov
spaces}  we prove Theorem \ref{prop:resolvent-bounds}.  The proof will be based on various results for abstract Besov
spaces to be given in Subsection \ref{subsec: Abstract Besov
spaces}.

\subsection{ Abstract Besov
spaces}\label{subsec: Abstract Besov
spaces}

Let $A$ be a self-adjoint operator on a Hilbert space~$\vH$. Let
$R_0=0$ and $R_j=2^{j-1}$ for $j\in \N$.  We define correspondingly
characteristic functions $F_j=F(R_{j-1}\leq |\cdot|<R_j)$ and the
space
\begin{equation}
  \label{eq:1}
  B=B(A)=\big \{u\in \vH \big | \,\sum_{j\in\N}R_j^{1/2}\|F_j(A)u\|=:\|u\|_B<\infty\big \}.
\end{equation}  We can identify (using the embeddings $\inp{A}^{-1}\vH\subseteq
B\subseteq \vH \subseteq B^*$, $\inp{A}:=\sqrt{A^2+1}$\,)
 the dual  space $B^*$ as 
\begin{equation}
  \label{eq:2}
  B^*=B(A)^*=\big \{u\in \inp{A}\vH \big | \,\sup_{j\geq 1} R_j^{-1/2}\|F_j(A)u\|=:\|u\|_{B^*}<\infty\big \}.
\end{equation} Alternatively, the elements  $u$  of $B^*$ are  those {\it
  sequences} $u=(u_j)\subseteq  \vH $ with $u_j\in \Ran
(F_j(A))$ and $\sup_{j\in\N} R_j^{-1/2}\|u_j\|<\infty$. This abstract
space was also considered in \cite{JP} (note however that $B(A)^*$  
 is identified incorrectly  in  \cite{JP} as   the completion  of $\vH$
in the norm $\|\cdot\|_{B^*}$). For other previous related works we refer to
\cite{AH,GY,Wa,Ro} and \cite[Subsections 14.1 and 30.2]{Ho2}. We note the
bounds, cf. \cite[Subsections 14.1]{Ho2},
\begin{equation}
  \label{eq:2p}
  \|u\|_{B^*}\leq \sup_{R>1} R^{-1/2}\|F(|A|<R)u\|\leq 2\|u\|_{B^*}. 
\end{equation}

Introducing \emph{abstract   weighted spaces} $L^2_s=L^2_s(A)=\inp{A}^{-s}\vH $ and
$B^*_0=B^*_0(A)$, the completion  of $\vH$ in the space $B^*$,  we have the embeddings
\begin{equation}\label{eq:3}
  L^2_s\subseteq B\subseteq L^2_{1/2}\subseteq \vH \subseteq
  L^2_{-1/2}\subseteq B^*_0\subseteq B^*\subseteq L^2_{-s},\mforall s>1/2.
\end{equation} All embeddings are continuous and corresponding
bounding constants can be chosen as absolute constants,
i.e. independently of $A$  and $\vH$. In particular
\begin{equation}
  \label{eq:22}
  \|u\|_{\vH}\leq \|u\|_B \mforall u\in B.
\end{equation}
Note also that
\begin{equation}
  \label{eq:41}
  u\in B^*_0 \text{ if and only if  }u\in B
^*\mand \lim_{R\to \infty} R^{-1/2}\|F(|A|<R)u\|=0. 
\end{equation}

We refer to the spaces $B,B^*$ and $B^*_0$ as \emph{abstract Besov
  spaces}.  Recall the following interpolation type result, here
stated abstractly. The proof is the same as that of the concrete
versions \cite[Theorem~2.5]{AH}, \cite[Theorem~14.1.4]{Ho2},
\cite[Proposition~2.3]{JP} and \cite[Subsection~4.3]{Ro}.
\begin{lemma}
  \label{lemma:resolvent-bounds} Let $A_1$ and  $A_2$ be self-adjoint
  operators on Hilbert spaces $\vH_1$ and $\vH_2$,
  respectively, and let  $s>1/2$. Suppose $T\in \vB(\vH_1,\vH_2)\cap
  \vB(L^2_s(A_1),L^2_s(A_2))$. Then $T\in \vB(B(A_1),B(A_2))$,
  and there is a constant $C=C(s)>0$ (independent  of $T$) such that
  \begin{equation}
    \label{eq:4}
    \|T\|_{\vB(B(A_1),B(A_2))}\leq
    C\parb{\|T\|_{\vB(\vH_1,\vH_2)}+\|T\|_{\vB(L^2_s(A_1),L^2_s(A_2))}}.
  \end{equation}
\end{lemma}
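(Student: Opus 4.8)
The plan is to prove the interpolation bound \eqref{eq:4} by directly estimating $\|Tu\|_{B(A_2)}$ for $u\in B(A_1)$, decomposing $u$ according to the dyadic spectral bands $F_j(A_1)$ of $A_1$ and testing $Tu$ against the dyadic bands $F_k(A_2)$ of $A_2$. Writing $u=\sum_j u_j$ with $u_j=F_j(A_1)u$, we have $\|Tu\|_{B(A_2)}=\sum_{k\in\N}R_k^{1/2}\|F_k(A_2)Tu\|\le \sum_{k,j}R_k^{1/2}\|F_k(A_2)Tu_j\|$, and the game is to split the double sum into the "diagonal-ish" part $R_k\le R_j$ (roughly) and the "off-diagonal" part $R_k> R_j$, bounding the former using the $\vH_1\to\vH_2$ bound of $T$ and the latter using the $L^2_s(A_1)\to L^2_s(A_2)$ bound.

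First I would set up the elementary estimates. For the part where $R_k\lesssim R_j$, one uses $\|F_k(A_2)Tu_j\|\le\|T\|_{\vB(\vH_1,\vH_2)}\|u_j\|$ together with $\sum_{k:\,R_k\le R_j}R_k^{1/2}\le C R_j^{1/2}$ (geometric series), giving a contribution $\le C\|T\|_{\vB(\vH_1,\vH_2)}\sum_j R_j^{1/2}\|u_j\|=C\|T\|_{\vB(\vH_1,\vH_2)}\|u\|_{B(A_1)}$. For the part where $R_k\gtrsim R_j$, one first notes that on $\Ran F_j(A_1)$ the operator $\inp{A_1}^{s}$ is comparable to $R_j^{s}$ (more precisely $\|\inp{A_1}^{s}u_j\|\le C R_j^{s}\|u_j\|$ since $R_{j-1}\le|A_1|<R_j$ there, with the $j=1$ band handled separately as $\inp{A_1}<1$), and symmetrically on $\Ran F_k(A_2)$ one has $R_k^{s}\|F_k(A_2)w\|\le C\|\inp{A_2}^{s}F_k(A_2)w\|\le C\|\inp{A_2}^{s}w\|$. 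Hence $\|F_k(A_2)Tu_j\|\le C R_k^{-s}\|\inp{A_2}^s T\inp{A_1}^{-s}\inp{A_1}^s u_j\|\le C R_k^{-s}R_j^{s}\|T\|_{\vB(L^2_s(A_1),L^2_s(A_2))}\|u_j\|$, so summing $R_k^{1/2-s}$ over $R_k\ge R_j$ (here $s>1/2$ makes the geometric series converge, with ratio bounded away from $1$) produces $C R_j^{1/2}$ again, and the double sum collapses to $C\|T\|_{\vB(L^2_s(A_1),L^2_s(A_2))}\|u\|_{B(A_1)}$. Adding the two pieces gives \eqref{eq:4}, with $C$ depending only on $s$ through the geometric-series constants — in particular independent of $A_1,A_2,\vH_1,\vH_2$, and of $T$. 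One should also check the trivial point that $Tu$ a priori lies in $\vH_2$ (so the $F_k(A_2)$-decomposition makes sense and $\sum_k F_k(A_2)Tu=Tu$ in $\vH_2$), which is immediate from $B(A_1)\subseteq\vH_1$ and $T\in\vB(\vH_1,\vH_2)$.

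The only genuinely delicate bookkeeping — and the step I would expect to demand the most care — is the treatment of the lowest band $j=1$ (respectively $k=1$), where $F_1$ projects onto $|A|<1$ rather than a genuine dyadic annulus, so that the clean comparison $\inp{A}^{s}\sim R_j^{s}$ degenerates; here one just uses $R_1=1$ and $\|\inp{A_1}^{s}u_1\|\le\sqrt2^{\,s}\|u_1\|$ (since $\inp{A_1}^2<2$ on that band), which is enough. One must also be slightly careful that $\inp{A_1}^s u_j$ genuinely lies in the domain on which $T$ acts as an $L^2_s$-bounded map, i.e. that $T\inp{A_1}^{-s}$ extends boundedly from $\vH_1$ to $\inp{A_2}^{-s}\vH_2$ — but this is precisely the hypothesis $T\in\vB(L^2_s(A_1),L^2_s(A_2))$ unwound through the definition $L^2_s(A)=\inp{A}^{-s}\vH$. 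Everything else is the standard Agmon–H\"ormander interpolation argument (as in \cite[Theorem~2.5]{AH}, \cite[Theorem~14.1.4]{Ho2}), transcribed verbatim into the abstract setting with $|x|$ replaced by the abstract generator $A$; no new idea beyond the dyadic splitting is needed.
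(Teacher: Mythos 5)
Your proof is correct, and it is precisely the standard Agmon--H\"ormander dyadic interpolation argument (split the double sum over bands into $R_k\lesssim R_j$, handled by the $\vB(\vH_1,\vH_2)$ bound, and $R_k\gtrsim R_j$, handled by the weighted bound with the geometric series converging because $s>1/2$) that the paper itself invokes by citation rather than writing out. No discrepancy with the paper's approach; the low-band and domain remarks you add are the right minor points to check.
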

\begin{corollary}
  \label{cor:abstr-besov-spac}
 Let $A_1$ and  $A_2$ be self-adjoint
  operators on a Hilbert spaces $\vH$. Suppose that
  $\inp{A_2}^{s}\inp{A_1}^{-s}\in\vB(\vH)$ for some  $s>1/2$. Then
  $B(A_1)\subseteq B(A_2)$ and \begin{equation}
    \label{eq:4jjjj}
    \|u\|_{B(A_2)}\leq  C \|u\|_{B(A_1)}\mforall u\in B(A_1).
  \end{equation}
\end{corollary}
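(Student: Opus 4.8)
The plan is to deduce Corollary \ref{cor:abstr-besov-spac} directly from Lemma \ref{lemma:resolvent-bounds} by taking $\vH_1=\vH_2=\vH$, $A_1$ and $A_2$ the two given self-adjoint operators, and $T=I$, the identity operator on $\vH$. Obviously $I\in\vB(\vH_1,\vH_2)=\vB(\vH)$ with norm $1$. The whole point is then to check the second hypothesis of the lemma, namely that $I\in\vB(L^2_s(A_1),L^2_s(A_2))$, i.e.\ that $I$ extends to a bounded map from $\inp{A_1}^{-s}\vH$ into $\inp{A_2}^{-s}\vH$. Unwinding the definition of the weighted-space norms, $\|u\|_{L^2_s(A_i)}=\|\inp{A_i}^{s}u\|_\vH$, so boundedness of $I:L^2_s(A_1)\to L^2_s(A_2)$ is precisely the estimate $\|\inp{A_2}^{s}u\|_\vH\leq C\|\inp{A_1}^{s}u\|_\vH$ for all $u$ in the (dense) domain, which is exactly the hypothesis $\inp{A_2}^{s}\inp{A_1}^{-s}\in\vB(\vH)$ rewritten: putting $w=\inp{A_1}^{s}u$ gives $\|\inp{A_2}^{s}\inp{A_1}^{-s}w\|_\vH\leq C\|w\|_\vH$.

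With both hypotheses of Lemma \ref{lemma:resolvent-bounds} verified for $T=I$, the lemma yields $I\in\vB(B(A_1),B(A_2))$, which says exactly that $B(A_1)\subseteq B(A_2)$ with $\|u\|_{B(A_2)}\leq C\|u\|_{B(A_1)}$, and the constant $C=C(s)$ from \eqref{eq:4} depends only on $s$ (together with the operator norm of $\inp{A_2}^{s}\inp{A_1}^{-s}$, which we may absorb), giving \eqref{eq:4jjjj}. So the corollary is essentially a one-line specialization once the translation between the operator-norm condition and the weighted-space mapping condition is made explicit.

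The only genuinely delicate point, and the step I would be most careful about, is the domain/density bookkeeping: the identity operator is a priori defined on all of $\vH$, and one must make sure it is the same $I$ that is being regarded as an element of $\vB(L^2_s(A_1),L^2_s(A_2))$ (densely defined there, since $\vH\cap L^2_s(A_1)$ is dense in $L^2_s(A_1)$, both containing e.g.\ $\inp{A_1}^{-s}\vH$) and that the two bounded extensions agree on the overlap, so that Lemma \ref{lemma:resolvent-bounds}'s hypothesis ``$T\in\vB(\vH_1,\vH_2)\cap\vB(L^2_s(A_1),L^2_s(A_2))$'' is literally met by a single operator $T=I$. This is routine: the embedding $L^2_s(A_i)\hookrightarrow\vH$ for $s>1/2$ coming from \eqref{eq:3} (or just $\inp{A_i}^{-s}\vH\supseteq\vH$ with continuous inclusion for $s>0$, here $s>1/2$) makes everything consistent, and the abstract Besov interpolation of Lemma \ref{lemma:resolvent-bounds} then applies verbatim. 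I do not expect any substantive obstacle beyond this formal check.
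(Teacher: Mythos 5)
Your proposal is correct and is exactly how the paper intends the corollary to be obtained: the paper gives no separate proof, since it follows by applying Lemma \ref{lemma:resolvent-bounds} with $\vH_1=\vH_2=\vH$ and $T=I$, the hypothesis $\inp{A_2}^{s}\inp{A_1}^{-s}\in\vB(\vH)$ being precisely the statement $I\in\vB(L^2_s(A_1),L^2_s(A_2))$. Your extra remarks on absorbing $\|\inp{A_2}^{s}\inp{A_1}^{-s}\|$ into $C$ and on the domain bookkeeping are accurate and harmless.
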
 

 The norm on $B(A)$ in \eqref{eq:1} is not the only possible choice: Define for
 any  $p>1$

\begin{equation}
  \label{eq:1b}
  B_p=B(A)_p=\big \{u\in \vH \big | \,\sum_{j\in\N}\tilde
   R_j^{1/2}\|\tilde F_j(A)u\|=:\|u\|_{B_p}<\infty\big
  \},
\end{equation} where $\tilde F_j=F(\tilde R_{j-1}\leq |\cdot|<\tilde
R_j)$,  $\tilde R_0=0$ and $\tilde R_j=p^{j-1}$ for $j\geq1$. For $p=2$ this agrees with \eqref{eq:1}.

\begin{lemma}
  \label{lemma:resolvent-boundsoofirst} Let $A$ be   a self-adjoint  operator  on a Hilbert
  space $\vH$. For all  $p>1$ the space $B(A)_p=B(A)$ and  there exists
  $C=C(p)>0$ (i.e. independent  of $A$) such that  for all $u\in B(A)$
  \begin{equation}
    \label{eq:4ooBfirst}
    C^{-1}\|u\|_{B(A)_p}\leq \|u\|_{B(A)}\leq  C\|u\|_{B(A)_p}.
  \end{equation}
\end{lemma}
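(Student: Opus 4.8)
The plan is to show that the two norms $\|\cdot\|_{B(A)}$ and $\|\cdot\|_{B(A)_p}$ are equivalent by a direct comparison of the dyadic-type decompositions. The key point is that the partition of $[0,\infty)$ into the annuli $\{R_{j-1}\le |A| < R_j\}$ (geometric ratio $2$) and the partition into $\{\tilde R_{j-1}\le |A|<\tilde R_j\}$ (geometric ratio $p$) are \emph{uniformly compatible}: each annulus of one family meets only a bounded number (depending only on $p$, not on $A$ or $\vH$) of annuli of the other family, and on the overlap the weights $R_j^{1/2}$ and $\tilde R_k^{1/2}$ are comparable up to a constant depending only on $p$. Since everything is phrased via the spectral projections $F_j(A)$, $\tilde F_k(A)$, the argument is purely about the scalar functions on the spectrum and the constants are automatically independent of $A$.

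First I would fix $p>1$ and set $N=N(p)$ to be the smallest integer with $p^N\ge 2$ (and symmetrically $M=M(p)$ with $2^M\ge p$), so that each interval $[R_{j-1},R_j)=[2^{j-2},2^{j-1})$ is covered by at most $M+1$ of the intervals $[\tilde R_{k-1},\tilde R_k)$, and conversely each $[\tilde R_{k-1},\tilde R_k)$ is covered by at most $N+1$ of the intervals $[R_{j-1},R_j)$; moreover if $[R_{j-1},R_j)\cap[\tilde R_{k-1},\tilde R_k)\ne\emptyset$ then $R_j^{1/2}\le \sqrt{2p}\,\tilde R_k^{1/2}$ and $\tilde R_k^{1/2}\le\sqrt{2p}\,R_j^{1/2}$ (with the obvious care for the $j=1$ or $k=1$ terms, which contain $0$ and where one simply uses that the relevant radii are bounded by $1$). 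Next, writing $F_j(A)u=\sum_k F_j(A)\tilde F_k(A)u$ with the sum extending over the $O(M)$ indices $k$ with nonempty overlap, the triangle inequality and orthogonality give
\begin{equation*}
  \|F_j(A)u\|\le \sum_{k:\,\mathrm{overlap}}\|F_j(A)\tilde F_k(A)u\|\le \sum_{k:\,\mathrm{overlap}}\|\tilde F_k(A)u\|,
\end{equation*}
and multiplying by $R_j^{1/2}$, using the weight comparison, and summing over $j$ yields
\begin{equation*}
  \sum_j R_j^{1/2}\|F_j(A)u\|\le \sqrt{2p}\sum_j\ \sum_{k:\,\mathrm{overlap}}\tilde R_k^{1/2}\|\tilde F_k(A)u\|\le (M+1)\sqrt{2p}\sum_k \tilde R_k^{1/2}\|\tilde F_k(A)u\|,
\end{equation*}
the last step because each $k$ appears for at most $M+1$ values of $j$ (by the reverse covering count). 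This is $\|u\|_{B(A)}\le C(p)\|u\|_{B(A)_p}$; the reverse inequality is the same computation with the roles of the two families swapped, using $N$ in place of $M$. Combining the two gives \eqref{eq:4ooBfirst}, and since $u\in B(A)$ iff the left sum is finite iff the right sum is finite, the two spaces coincide as well.

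The only mildly delicate point — the ``main obstacle'', such as it is — is the bookkeeping at the innermost annuli containing the point $0$ (the $j=1$ and $k=1$ terms), where $R_0=\tilde R_0=0$ and the pure geometric weight comparison $R_j/\tilde R_k\asymp 1$ degenerates; there one must observe separately that $F_1(A)$ and $\tilde F_1(A)$ are spectral projections onto bounded spectral regions ($|A|<1$ in both cases when $p\ge2$, and onto $|A|<p$ resp. $|A|<1$ in general), so the corresponding contributions are each controlled by a fixed multiple of $\|F(|A|<c(p))u\|\le\|u\|_\vH$, and $\|u\|_\vH\le\|u\|_{B(A)}$ by \eqref{eq:22} (and likewise $\|u\|_\vH\le C\|u\|_{B(A)_p}$). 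Folding these finitely many low-index terms into the constant $C(p)$ completes the argument. All constants produced this way depend on $p$ alone, as claimed.
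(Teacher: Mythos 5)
Your argument is correct and is essentially the paper's own proof: both rest on the bounded-overlap comparison of the two geometric partitions together with comparability of the weights on overlapping annuli (the paper encodes the latter by inserting $|A|^{1/2}$ before regrouping, which is the same mechanism you use directly), and both absorb the innermost terms via $\|u\|_{\vH}\le\|u\|_{B(A)}$, cf.\ \eqref{eq:22}, and its analogue for $B(A)_p$. The only blemishes are cosmetic and harmless: your two overlap counts are interchanged (the number of $p$-annuli meeting a fixed dyadic annulus is controlled by $N$ with $p^N\ge 2$, and the number of dyadic annuli meeting a fixed $p$-annulus by $M$ with $2^M\ge p$, possibly plus one), and since $\tilde R_1=p^{0}=1$ one has $\tilde F_1(A)=F(|A|<1)$ for every $p$, so the innermost annuli actually coincide; all constants still depend on $p$ alone, as required.
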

\begin{proof}
  The first term in the expression $\sum_{j\in\N}\tilde
   R_j^{1/2}\|\tilde F_j(A)u\|$ is bounded by
   $\|u\|_{B(A)}$, cf. \eqref{eq:22}. So let us look at a term with
   $j\geq2$. We estimate
   \begin{equation*}
  \tilde R_j^{1/2}\|\tilde F_j(A)u\|\leq \sqrt p\|\tilde
  F_j(A)|A|^{1/2}u\|\leq \sqrt p\sum^{\infty}_{k=2}  \|\tilde
  F_j(A)F_k(A)|A|^{1/2}u\|.
   \end{equation*} Now a small consideration shows that for all $k\geq
   2$ there are at most $2+[\ln 2/\ln p]$ number of $j$'s with $j\geq 2$  for which $\tilde
  F_jF_k\neq 0$. Whence 
  \begin{align*}
  \MoveEqLeft
  \sum_{j\geq 2}\tilde
   R_j^{1/2}\|\tilde F_j(A)u\|
   \\
   &\leq \sqrt p \parb{2+\ln 2/\ln p} \sum^{\infty}_{k=2}  \|
  F_k(A)|A|^{1/2}u\|\leq \sqrt p \parb{2+\ln 2/\ln p}\|u\|_{B(A)},
  \end{align*} yielding the first inequality in
  \eqref{eq:4ooBfirst}  for any $C\geq 1+\sqrt p \parb{2+\ln 2/\ln
    p}$. 

By the same method one shows the second  inequality in
  \eqref{eq:4ooBfirst} for any $C\geq 1+\sqrt 2 \parb{2+\ln p/\ln
    2}$. 
\end{proof}
\begin{lemma}
  \label{lemma:resolvent-boundsoo} Let   $A$ be  operator  on a Hilbert
  space $\vH$, such that $A\geq I$, and $s>-1$ be  given. Then $A^{-s/2}:B(A)\to B(A^{1+s})$
  is a homeomorhic isomorphism,  and there is a constant $C=C(s)>0$ (i.e. independent  of $A$) such that
  \begin{equation}
    \label{eq:4oo}
    \|A^{-s/2}\|_{\vB(B(A),B(A^{1+s}))}\leq  C\mand \|A^{s/2}\|_{\vB(B(A^{1+s}),B(A))}\leq  C.
  \end{equation}
\end{lemma}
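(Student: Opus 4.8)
The plan is to reduce everything to the change-of-norm result of Lemma \ref{lemma:resolvent-boundsoofirst} combined with a spectral decomposition of $A$. Since $A\geq I$, the operator $B:=A^{1+s}$ also satisfies $B\geq I$ (here $1+s>0$ by hypothesis), so both $B(A)$ and $B(B)$ are defined via the dyadic blocks $F_j$ as in \eqref{eq:1}. The key observation is that the map $A^{-s/2}$ intertwines the spectral resolution of $A$ with that of $B$: if $E_A$ denotes the spectral measure of $A$, then $F_j(A)=E_A([R_{j-1},R_j))$ while $F_k(B)=E_A(\{\lambda\geq 1:\lambda^{1+s}\in[R_{k-1},R_k)\})=E_A([R_{k-1}^{1/(1+s)},R_k^{1/(1+s)}))$ (using monotonicity of $\lambda\mapsto\lambda^{1+s}$ on $[1,\infty)$). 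Thus the dyadic partition for $B$ pulls back, under $A$, to the partition of $[1,\infty)$ by the points $\{2^{(j-1)/(1+s)}\}_j$, i.e. a $p$-adic partition in the sense of \eqref{eq:1b} with $p=2^{1/(1+s)}$, truncated to $[1,\infty)$.

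First I would make this precise. Fix $p=2^{1/(1+s)}>1$. Consider the space $B(A)_p$ of \eqref{eq:1b}: for $u\in\vH$ one has $\|u\|_{B(A)_p}=\sum_{j}\tilde R_j^{1/2}\|\tilde F_j(A)u\|$ with $\tilde R_j=p^{j-1}$ and $\tilde F_j=F(p^{j-2}\leq|\cdot|<p^{j-1})$ for $j\geq 2$, $\tilde F_1=F(|\cdot|<1)$. Because $A\geq I$, only blocks with $\tilde R_j\geq 1$ (apart from the single block $\tilde F_1$ carrying the part of the spectrum in $[1,p)$ when we shift indices appropriately) contribute, so up to a harmless finite re-indexing and the absorption of the lowest block (whose contribution is controlled by $\|u\|_{\vH}\leq\|u\|_{B(A)}$, cf. \eqref{eq:22}), we get
\begin{equation*}
  \|A^{-s/2}u\|_{B(B)}\;=\;\sum_{k}R_k^{1/2}\,\|F_k(B)A^{-s/2}u\|\;\asymp\;\sum_{k}R_k^{1/2}\,\|A^{-s/2}E_A\bigl([p^{k-1},p^{k})\bigr)u\|.
\end{equation*}
On the block $E_A([p^{k-1},p^{k}))$ the operator $A^{-s/2}$ is bounded above and below by constant multiples of $(p^{k-1})^{-s/2}\asymp (R_k)^{-s/2(1+s)}$, since $R_k=2^{k-1}=(p^{k-1})^{1+s}$; carrying out this substitution turns the right-hand side into $\asymp\sum_k (p^{k-1})^{1/2}\|E_A([p^{k-1},p^k))u\|$, which is (up to the finite re-indexing) exactly $\|u\|_{B(A)_p}$. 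Hence $\|A^{-s/2}u\|_{B(B)}\asymp\|u\|_{B(A)_p}$ with constants depending only on $s$.

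Finally I would invoke Lemma \ref{lemma:resolvent-boundsoofirst} with this $p$: $\|u\|_{B(A)_p}\asymp\|u\|_{B(A)}$ with $C=C(p)=C(s)$. Chaining the two equivalences gives $\|A^{-s/2}u\|_{B(A^{1+s})}\leq C(s)\|u\|_{B(A)}$, which is the first bound in \eqref{eq:4oo}; replacing $A$ by $A^{1+s}$ and $s$ by $-s/(1+s)$ (note $(1+s)\cdot(1+(-s/(1+s)))=1$ and $-s/(1+s)>-1$ precisely when $s$ ranges over $(-1,\infty)$), or simply reading the same computation backwards, yields the second bound $\|A^{s/2}\|_{\vB(B(A^{1+s}),B(A))}\leq C(s)$, and together they show $A^{-s/2}$ is a homeomorphic isomorphism. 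I expect the main technical nuisance — not a deep obstacle — to be the bookkeeping near the bottom of the spectrum: the lowest dyadic block of $B(B)$ pulls back to a half-open interval $[1,p^{m})$ for some finite $m$ rather than to a single $p$-adic annulus, and one must check that lumping these finitely many annuli together only costs an $s$-dependent constant. This is handled exactly as the treatment of the first term in the proof of Lemma \ref{lemma:resolvent-boundsoofirst}, using \eqref{eq:22}, so no new idea is needed.
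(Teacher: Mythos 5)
Your argument is correct and is essentially the paper's own proof: the same choice $p=2^{1/(1+s)}$, the same spectral pull-back of the dyadic blocks of $A^{1+s}$ to the $p$-adic blocks of $A$, the same reduction to Lemma \ref{lemma:resolvent-boundsoofirst}, and the same substitution $A\to A^{1+s}$, $s\to -s/(1+s)$ for the second bound. The only cosmetic difference is your caution about the bottom of the spectrum: since $A\geq I$ the $j=1$ blocks of both partitions vanish and the pull-back of $F_j(A^{1+s})$, $j\geq 2$, aligns exactly with the $p$-adic annuli $F(p^{j-2}\leq A<p^{j-1})$, so no lumping or re-indexing is actually needed.
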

\begin{proof}
  We let $p=2^{1/(1+s)}$, $C=2^{\max(s/2,\,0)/(1+s)}$  and estimate for all $u\in B(A)$
  \begin{align*}
    \|A^{-s/2}u\|_{B(A^{1+s})}&=\sum_{j\geq 2}
   R_j^{1/2}\| F_j(A^{1+s})A^{-s/2}u\|\\ & \leq C\sum_{j\geq 2}
   2^{\parb{1-\tfrac{s}{1+s}}(j-1)/2}\bigg\| F\bigg(2^{\tfrac{j-2}{1+s}}\leq
   A<2^{\tfrac{j-1}{1+s}}\bigg)u\bigg\|\\
& \leq C\sum_{j\geq 2}
   p^{(j-1)/2}\big\| F(p^{j-2}\leq
   A<p^{j-1} )u\big\|\\
& = C\|u\|_{B(A)_p}.
  \end{align*} Now we obtain the first estimate in \eqref{eq:4oo} by
  invoking Lemma \ref{lemma:resolvent-boundsoofirst}. 

The second estimate in \eqref{eq:4oo} follows from the first with
$A\to A^{1+s}$ and $s\to -{s}/{(1+s)}$.

\end{proof}

\begin{lemma}
  \label{lemma:resolvent-boundsook} Suppose    $A$ is  a self-adjoint  operator  on a Hilbert
  space $\vH$, $c\in \R$, $u\in B(A)$ and either $|c|>1$
  or $u=F(|cA|\geq 1)u$,  then $u\in B(cA)$ with   \begin{equation}
    \label{eq:4ooB}
    \|u\|_{B(cA)}\leq  8|c|^{1/2} \|u\|_{B(A)}.
  \end{equation}
\end{lemma}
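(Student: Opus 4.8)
The plan is to reduce the statement about $B(cA)$ to the already-established equivalence of the $B_p$-norms (Lemma \ref{lemma:resolvent-boundsoofirst}) by exploiting the simple scaling behaviour of the spectral decomposition of $cA$ versus $A$. Write $r=|c|$. The dyadic pieces defining $\|u\|_{B(cA)}$ are $F_j(cA)=F\big(R_{j-1}\le |cA|<R_j\big)=F\big(R_{j-1}/r\le |A|<R_j/r\big)$, so comparing with the pieces of $A$ amounts to comparing the dyadic scale $\{2^{j-1}\}$ with the shifted-by-a-factor-$r$ scale $\{2^{j-1}/r\}$. I would treat the two cases $r\ge 1$ and $r<1$ slightly differently, the first being the main one and the second reducing to it by the hypothesis $u=F(|cA|\ge 1)u$.

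First I would handle $r=|c|\ge 1$. Pick the integer $m$ with $2^{m-1}\le r<2^{m}$, so $m\ge 1$. Then each interval $[R_{j-1}/r,R_j/r)$ is contained in the union of at most two consecutive dyadic intervals $[2^{k-1},2^k)$, and conversely each dyadic interval meets at most two of the rescaled intervals; moreover if $[R_{j-1}/r,R_j/r)\cap[2^{k-1},2^k)\ne\emptyset$ then $R_j^{1/2}\le (2r)^{1/2}2^{(k-1)/2}\cdot\sqrt2 \le 2r^{1/2}\cdot 2^{(k-1)/2}$, up to a harmless absolute constant. Summing $R_j^{1/2}\|F_j(cA)u\|$ and bounding each $\|F_j(cA)u\|\le\|F_j(cA)F_k(A)u\|$ summed over the (at most two) relevant $k$, then interchanging the order of summation, yields $\|u\|_{B(cA)}\le C r^{1/2}\sum_k 2^{(k-1)/2}\|F_k(A)u\|=Cr^{1/2}\|u\|_{B(A)}$ with $C$ absolute. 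Tracking the constants in this bookkeeping (two overlaps on each side, the factor $\sqrt2$ from $R_j\le 2R_{j-1}$, and one more $\sqrt2$ from $r<2\cdot 2^{m-1}$) keeps $C$ below $8$.

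For $r=|c|<1$, the hypothesis gives $u=F(|A|\ge 1/r)u=F(|A|>1)u$ (since $1/r>1$), so only the pieces $F_k(A)u$ with $k\ge 2$ contribute to $\|u\|_{B(A)}$, and only $F_j(cA)u$ with $R_j/r>1$, i.e. $j$ bounded below, contribute to $\|u\|_{B(cA)}$. On this truncated range the rescaled intervals $[2^{j-2}/r,2^{j-1}/r)$ again overlap the dyadic intervals $[2^{k-1},2^k)$ with bounded multiplicity (at most two on each side, independently of how small $r$ is, precisely because we have discarded the intervals near the origin), and now $R_j\le 2r\cdot 2^{k-1}\le 2\cdot 2^{k-1}$ on overlapping pairs; the same double-sum estimate gives $\|u\|_{B(cA)}\le C\|u\|_{B(A)}\le Cr^{1/2}\cdot r^{-1/2}\|u\|_{B(A)}$, and since $r^{-1/2}\ge 1$ this is consistent with the claimed bound — in fact one gets $\|u\|_{B(cA)}\le C\|u\|_{B(A)}$ with $C\le 8$, a fortiori $\le 8r^{1/2}\|u\|_{B(A)}$ is the wrong direction, so here I would instead note $r^{1/2}<1$ is not available and simply verify directly that the constant $8$ already works; the cutoff is exactly what makes the overlap count finite.

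The step I expect to be the main obstacle is the uniform overlap-multiplicity bookkeeping in the case $|c|<1$: without the cutoff $u=F(|cA|\ge 1)u$ the rescaled intervals near $0$ would be far finer than the dyadic ones and no bound independent of $c$ could hold, so the proof must be arranged so that the discarded low-frequency part of the spectrum is precisely the part where the comparison fails, and the surviving constant must be shown to be an absolute one (not depending on $c$). Once the multiplicities are pinned at $2$ and the scale factors at $\sqrt2$, the final arithmetic to land at the constant $8$ is routine, paralleling the proof of Lemma \ref{lemma:resolvent-boundsoofirst}.
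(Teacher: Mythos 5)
Your case $|c|>1$ is essentially the paper's argument (compare rescaled dyadic blocks with the blocks of $A$, bounded overlap, geometric bookkeeping), with one imprecision: the claim that ``each dyadic interval meets at most two of the rescaled intervals'' fails for the lowest block $F_1(A)=F(|A|<1)$, which meets roughly $\log_2|c|$ of the intervals $[R_{j-1}/|c|,R_j/|c|)$. This does not sink the argument, because the corresponding weights $R_j^{1/2}$ form a geometric sum of size $O(|c|^{1/2})=O(|c|^{1/2}R_1^{1/2})$; the paper handles exactly this by splitting off the term $\sum_{j=1}^{i}R_j^{1/2}\|u\|_{\vH}$ with $R_{i-1}<|c|\leq R_i$ and using $\|u\|_{\vH}\leq\|u\|_{B(A)}$. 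You should make that low-block estimate explicit rather than appeal to a multiplicity-two count.

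The genuine gap is the case $|c|\leq 1$. The lemma asserts $\|u\|_{B(cA)}\leq 8|c|^{1/2}\|u\|_{B(A)}$, and for $|c|<1$ the factor $|c|^{1/2}<1$ is a \emph{gain}, which is precisely what is needed later (in Lemma \ref{lemma:resolvent-boundsAbb} it cancels the $|z|^{-1/4}$ with $c=\sqrt{|z|}$). Your argument throws this gain away at the step ``$R_j\le 2r\cdot 2^{k-1}\le 2\cdot 2^{k-1}$'': after bounding $r\le 1$ you only obtain $\|u\|_{B(cA)}\le C\|u\|_{B(A)}$, which does not imply the stated inequality, and your closing remark that you would ``verify directly that the constant $8$ already works'' targets the weaker (and insufficient) bound $8\|u\|_{B(A)}$ rather than $8|c|^{1/2}\|u\|_{B(A)}$. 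The fix is simply not to discard $r$: on overlapping pairs one has $R_j\lesssim r\,2^{k-1}$, hence $R_j^{1/2}\lesssim r^{1/2}R_k^{1/2}$, and since the hypothesis $u=F(|cA|\ge 1)u$ kills $F_1(cA)u$ the overlap count is uniformly two, so the double sum gives $C\,r^{1/2}\|u\|_{B(A)}$. This is exactly the paper's route: choosing $i\ge 2$ with $R_{i-1}\le 1/|c|<R_i$ one has $F_j(ct)\le F_{j+i-2}(t)+F_{j+i-1}(t)$ for $j\ge 2$, and $\sup_j\parb{R_j/R_{j+i-2}}^{1/2}+\sup_j\parb{R_j/R_{j+i-1}}^{1/2}\le(\sqrt2+1)|c|^{1/2}\le 3|c|^{1/2}$.
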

\begin{proof} Suppose  first that $|c|>1$. Pick
  $i\geq 2$ such that $R_{i-1}<|c|\leq R_i$. Then for all $j\geq i+1$
  \begin{equation*}
    F_j(ct)\leq F(R_{j-1}/R_i\leq |t|< R_{j}/R_{i-1})\leq F_{j-i+1}(t)+F_{j-i+2}(t).
  \end{equation*}
Whence for any  $u\in B(A)$ we can estimate
\begin{align*}
  \|u\|_{B(cA)}&\leq  \parb{ \sup_{j\geq
      i+1}\parb{R_{j}/R_{j-i+1}}^{1/2}+\sup_{j\geq
      i+1}\parb{R_{j}/R_{j-i+2}}^{1/2}}\|u\|_{B(A)}+\sum^{i}_{j=1}R_j^{1/2}\|u\|_{\vH}\\
&\leq \parb{
  2^{(i-1)/2}+2^{(i-2)/2}+2^{i/2}(\sqrt 2 +1)}\|u\|_{B(A)}\\
&\leq \parb{\sqrt 2+1+2(\sqrt 2 +1)}|c|^{1/2}
\|u\|_{B(A)}\\
&\leq 8|c|^{1/2}
\|u\|_{B(A)}.
\end{align*}

Suppose now that $|c|\leq 1$ and that $u=F(|cA|\geq 1)u$. Clearly we
can assume that $c\neq 0$. Then we can pick
  $i\geq 2$ such that $R_{i-1}\leq1/|c|< R_i$, and we note that
  $F_1(cA)u=0$. For $j\geq 2$ we have 
\begin{equation*}
    F_j(ct)\leq F(R_{j-1}R_{i-1}\leq |t|< R_{j}R_{i})\leq F_{j+i-2}(t)+F_{j+i-1}(t).
  \end{equation*} Whence 
\begin{align*}
  \|u\|_{B(cA)}&\leq  \parb{ \sup_{j\geq
      2}\parb{R_{j}/R_{j+i-2}}^{1/2}+\sup_{j\geq
      2}\parb{R_{j}/R_{j+i-1}}^{1/2}}\|u\|_{B(A)}\\
&\leq \parb{
  2^{(2-i)/2}+2^{(1-i)/2}}\|u\|_{B(A)}\\
&\leq 3|c|^{1/2}
\|u\|_{B(A)}.\qedhere
\end{align*}
\end{proof}

We note the following abstract version of a result from
\cite{JP, Mo2} (proven by using  suitable decompositions of unity and
the Cauchy-Schwarz inequality, see also \cite[Subsection 2.2]{Wa}).
\begin{lemma}
  \label{lemma:resolvent-boundsooA} Let   $A_1$ and  $A_2$ be self-adjoint
  operators on Hilbert spaces $\vH_1$ and $\vH_2$, 
  respectively, and let $T\in
  \vB(\vH_1,\vH_2)$. Suppose that uniformly  in $ z\in\Gamma_\theta$
and $m,n\in \Z$,
\begin{equation}
  \label{eq:15hA}
   \|F(m\leq A_2< m+1)TF(n\leq A_1< n+1)\|\leq C.
\end{equation} Then $T\in \vB(B(A_1),B(A_2)^*)$, and with the
constant $C$ from \eqref{eq:15hA} we have 
\begin{equation}
  \label{eq:21}
  \|T\|_{\vB(B(A_1),B(A_2)^*)}\leq  2C.
\end{equation}
\end{lemma}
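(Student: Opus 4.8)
The plan is to prove the bound \eqref{eq:21} by a dyadic (in fact unit-scale) decomposition argument. First I would insert resolutions of the identity on both sides: write $u\in B(A_1)$ and decompose $u=\sum_{n\in\Z}F(n\le A_1<n+1)u$ (with the understanding that only $n$ with $R_{j-1}\le|n|<R_j$ contribute to the $j$-th block of the $B(A_1)$-norm), and similarly split the target by $v=Tu=\sum_{m\in\Z}F(m\le A_2<m+1)v$. The point is to estimate $R_j^{-1/2}\|F_j(A_2)Tu\|$ uniformly in $j$, since by \eqref{eq:2} (or the equivalent \eqref{eq:2p}) this controls $\|Tu\|_{B(A_2)^*}$.

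The key step is the following bilinear estimate. For $\phi\in\vH_2$ and $u\in B(A_1)$ with $\|\phi\|=1$, I would write
\begin{equation*}
|\inp{\phi,Tu}|\le \sum_{m,n\in\Z}\|F(m\le A_2<m+1)\phi\|\,\|F(m\le A_2<m+1)TF(n\le A_1<n+1)\|\,\|F(n\le A_1<n+1)u\|.
\end{equation*}
Using \eqref{eq:15hA} this is $\le C\sum_{m,n}c_m d_n$ where $c_m=\|F(m\le A_2<m+1)\phi\|$ and $d_n=\|F(n\le A_1<n+1)u\|$. Now $\sum_m c_m^2=\|\phi\|^2=1$, so by Cauchy--Schwarz $\sum_m c_m\le(\text{number of relevant }m)^{1/2}$; but rather than summing over all $m$ I would instead fix the output block: to bound $\|F_j(A_2)Tu\|$ I only sum $m$ over the range $R_{j-1}\le|m|<R_j$, of which there are $\lesssim R_j$, giving $\sum_{m\in \text{block }j}c_m\le (2R_j)^{1/2}\|F_j(A_2)\phi\|\le (2R_j)^{1/2}$. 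For the $n$-sum I group $n$ into the dyadic blocks $k$ and use $\sum_{n\in\text{block }k}d_n\le (2R_k)^{1/2}\|F_k(A_1)u\|$, whence $\sum_n d_n\le \sqrt2\sum_k R_k^{1/2}\|F_k(A_1)u\|=\sqrt2\,\|u\|_{B(A_1)}$ up to the harmless first block handled via \eqref{eq:22}. Combining, $R_j^{-1/2}\|F_j(A_2)Tu\|\le \sqrt2\,C\cdot\sqrt2\,\|u\|_{B(A_1)}=2C\|u\|_{B(A_1)}$, which is exactly \eqref{eq:21}.

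The main obstacle — really the only subtlety — is bookkeeping the Cauchy--Schwarz step so that no logarithmic loss creeps in: one must be careful to Cauchy--Schwarz over unit intervals \emph{within a single dyadic block} (where the number of intervals is comparable to $R_j$ and the $\ell^2$ mass is controlled by $\|F_j\phi\|$), and only afterwards sum the dyadic blocks in $\ell^1$ against the $B^*$- resp. $B$-norm. Doing it in the wrong order (Cauchy--Schwarz over all of $\Z$ at once, or summing $\ell^1$ at unit scale) destroys the constant. A secondary, purely technical point is the $j=1$ (equivalently $n=0,\pm1$, i.e. $|A|<1$) block, where $R_{j-1}=0$; there one uses the crude bound from \eqref{eq:22}, $\|F_1(A_1)u\|\le\|u\|_{\vH_1}\le\|u\|_{B(A_1)}$, which is absorbed into the stated constant. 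Everything else is the standard decomposition-of-unity/Cauchy--Schwarz mechanism, exactly as in \cite{JP,Mo2} and \cite[Subsection 2.2]{Wa}, now carried out at the abstract level with $A_1,A_2$ in place of concrete multiplication operators.
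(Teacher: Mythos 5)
Your argument is correct and is exactly the proof the paper has in mind: the lemma is stated there with only the hint "suitable decompositions of unity and the Cauchy--Schwarz inequality" (following \cite{JP,Mo2,Wa}), and your unit-interval decomposition with Cauchy--Schwarz performed inside each dyadic block, then $\ell^1$-summation of blocks against the $B$- resp. $B^*$-norms, is precisely that mechanism and does yield the stated constant $2C$. One micro-detail to tidy: since the $R_k$ are integers, only intervals $[n,n+1)$ with $n<0$ can meet two dyadic blocks (at the single point $t=n$), so in the step $\sum_{n\in\text{block }k}d_n\le(2R_k)^{1/2}\|F_k(A_1)u\|$ you should first replace $d_n$ by $\|F_k(A_1)F(n\le A_1<n+1)u\|$ (plus the piece assigned to the adjacent block) before applying Cauchy--Schwarz; with $|\{n\}|\le R_k+1\le 2R_k$ this still gives $\sum_n d_n\le\sqrt2\,\|u\|_{B(A_1)}$ and hence the bound $2C$.
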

We note the following (partial) abstract criterion  for \eqref{eq:15hA}, cf.  
\cite[(I.10)]{Mo2}  (see also \cite{Wa}). Recall that a bounded
operator $T$ on a Hilbert space is called \emph{accretive} if
$T+T^*\geq 0$, cf.  for example \cite[Chapter X]{RS}.
\begin{lemma}
  \label{lemma:resolvent-boundsooAb} Let   $A$ be self-adjoint
  operator on Hilbert spaces $\vH$,  and suppose  $T\in
  \vB(\vH)$ is accretive.  Suppose 
the following bounds uniformly in $n\in \Z$,
\begin{subequations}
 \begin{align*}
  \|F(n\leq A< n+1)TF(n\leq A< n+1)\|&\leq C_1,\\
\|F(A<
  n)TF(n\leq A< n+1)\|&\leq C_2,\\
\|F(n\leq A< n+1)TF(A\geq n)\|&\leq C_3.
\end{align*} 
\end{subequations} Then \eqref{eq:15hA} holds with $A_1=A_2=A$, the
accretive $T$ and with $C=2C_1+C_2+C_3$.
\end{lemma}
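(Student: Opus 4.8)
\emph{Proof plan.} The plan is to reformulate the target estimate \eqref{eq:15hA} (with $A_1=A_2=A$) as a uniform bound on the blocks of $T$ in the spectral decomposition of $A$ into unit intervals. Write $P_n:=F(n\le A<n+1)$ for $n\in\Z$; these are mutually orthogonal projections with $\sum_n P_n=I$ strongly, $F(A<n)=\sum_{k<n}P_k$ and $F(A\ge n)=\sum_{k\ge n}P_k$. In this language the three hypotheses state, uniformly in $n$ (and in $z\in\Gamma_\theta$, which plays no role in the algebra), that $\|P_nTP_n\|\le C_1$, that $\|F(A<n)\,T\,P_n\|\le C_2$, and that $\|P_n\,T\,F(A\ge n)\|\le C_3$. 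We must show $\|P_mTP_n\|\le 2C_1+C_2+C_3$ for all $m,n\in\Z$, and we distinguish the cases $m=n$, $m<n$ and $m>n$.

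The cases $m\le n$ are immediate from containments between spectral projections. For $m=n$ the bound $\|P_nTP_n\|\le C_1$ is a hypothesis. For $m<n$ the interval $[m,m+1)$ lies in $(-\infty,n)$, so $P_m=P_mF(A<n)$ and hence $P_mTP_n=P_m\bigl(F(A<n)TP_n\bigr)$, giving $\|P_mTP_n\|\le C_2$. Thus \emph{any} block whose left index is strictly smaller than its right index is bounded by $C_2$; by the symmetric argument, using $P_n=F(A\ge m)P_n$ when $n\ge m$, such a block is also bounded by $C_3$.

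The case $m>n$ is the only one requiring accretivity. Set $S:=T+T^*\in\vB(\vH)$, which is non-negative by hypothesis, so $S^{1/2}\in\vB(\vH)$ is well defined. Factoring, $P_mSP_n=(S^{1/2}P_m)^*(S^{1/2}P_n)$, whence $\|P_mSP_n\|\le\|S^{1/2}P_m\|\,\|S^{1/2}P_n\|=\|P_mSP_m\|^{1/2}\,\|P_nSP_n\|^{1/2}$; and since $P_kSP_k=P_kTP_k+(P_kTP_k)^*$ has norm at most $2\|P_kTP_k\|\le 2C_1$, we obtain $\|P_mSP_n\|\le 2C_1$. Now $P_mTP_n=P_mSP_n-(P_nTP_m)^*$, and since $m>n$ the block $P_nTP_m$ has left index smaller than right index, so by the previous paragraph $\|P_nTP_m\|\le C_3$ (or $C_2$). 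Hence $\|P_mTP_n\|\le 2C_1+C_3$. Collecting the three cases and using that the constants are non-negative, every block satisfies $\|P_mTP_n\|\le 2C_1+C_2+C_3$, which is exactly \eqref{eq:15hA}.

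The crux is the case $m>n$, where $T$ alone gives no control: the point is the square-root factorization of the non-negative part $S=T+T^*$ together with the elementary estimate $\|P_mSP_n\|\le\|P_mSP_m\|^{1/2}\|P_nSP_n\|^{1/2}$ and $\|P_kSP_k\|\le 2\|P_kTP_k\|$. The only other thing requiring attention — hardly an obstacle — is the bookkeeping with the projection containments $P_mF(A<n)=P_m$ for $m<n$ and $P_nF(A\ge m)=P_n$ for $m\le n$, which is what lets the ``global'' hypotheses on $F(A<n)TP_n$ and $P_nTF(A\ge n)$ descend to control of the individual off-diagonal blocks.
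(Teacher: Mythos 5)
Your proof is correct. Note that the paper itself gives no proof of this lemma, referring instead to Mourre [Mo2, (I.10)] (see also [Wa]); your argument is precisely the standard one behind that citation: the cases $m\le n$ follow from the projection containments $P_mF(A<n)=P_m$ and $F(A\ge m)P_n=P_n$, and the case $m>n$ from the Cauchy--Schwarz factorization $P_mSP_n=(S^{1/2}P_m)^*(S^{1/2}P_n)$ of the non-negative part $S=T+T^*$ together with $\|P_kSP_k\|\le 2C_1$ and passing the remaining block $P_mT^*P_n=(P_nTP_m)^*$ back to the already-treated case. All steps check out, and the resulting block bound is indeed dominated by $2C_1+C_2+C_3$, as claimed.
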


\subsection{Besov
space bound of  resolvent}\label{subsec: Concrete Besov
spaces}

In this subsection we shall prove Theorem~\ref{prop:resolvent-bounds}.
We shall prepare for the proof of \eqref{eq:limitbound2j} in terms of
three lemmas.

Due to \eqref{eq:limitbound},  \eqref{eq:3} and  resolvent identities,
cf. \cite[(5.12)]{FS}, it suffices for \eqref{eq:limitbound2j} to prove
the bound with $V_2$ in Condition \ref{cond:lap}  taken to be zero
(i.e. only Condition \ref{cond:lap} \ref{it:assumption1}--\ref{it:assumption3}
  are imposed and $V=V_1$). Let in
this subsection 
\begin{equation}
  \label{eq:5}
  A=(x\cdot p+p\cdot x)/2;\,p=-\i \nabla_x.
\end{equation} Notice that the commutator
\begin{equation}
  \label{eq:6}
  \i[H,A]=2H +W;\,W(x)=-2V(x)-x\cdot \nabla V(x)\geq \epsilon_1\tilde \epsilon_1\inp{x}^{-\mu}.
\end{equation}

\begin{lemma}
  \label{lemma:resolvent-boundsA} Suppose $V_2=0$ in Theorem
  \ref{thm:lap}. Then with $A$ given by \eqref{eq:5}
\begin{equation}
  \label{eq:limitbound3a}
 \sup_{z \in \Gamma_{\theta}}
\left\|
f_{|z|} R(z)
f_{|z|} \right\|_{\vB (B(A),B(A)^*)} \leq C=C(\theta).
\end{equation}
\end{lemma}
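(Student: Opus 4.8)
Fix $z\in\Gamma_\theta$ and abbreviate $f=f_{|z|}$; since $|z|\le\lambda_0$ and $\inp{x}^{-\mu}\le1$, the operator $f$ is a bounded, strictly positive multiplication operator. As $\arg z\in(0,\theta)\subseteq(0,\pi)$ we have $\Im z>0$, hence
\begin{equation*}
 (-\i fR(z)f)+(-\i fR(z)f)^*=\i f\bigl(R(z)^*-R(z)\bigr)f=-\i(z-\bar z)\,fR(z)R(z)^*f=2(\Im z)\,fR(z)R(z)^*f\ge0,
\end{equation*}
so $T:=-\i fR(z)f$ is accretive. The plan is to combine the abstract Besov machinery of Subsection~\ref{subsec: Abstract Besov spaces}, the a priori bound \eqref{eq:limitbound} of Theorem~\ref{thm:lap}, and a Mourre-type positive commutator argument based on the identity \eqref{eq:6}.

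By Lemma~\ref{lemma:resolvent-boundsooA}, the bound \eqref{eq:limitbound3a} will follow once $\sup_{m,n\in\Z}\sup_{z\in\Gamma_\theta}\|F(m\le A<m+1)fR(z)fF(n\le A<n+1)\|<\infty$ is established; and, $T$ being accretive, Lemma~\ref{lemma:resolvent-boundsooAb} reduces this to a \emph{diagonal} bound $\sup_{n,z}\|F(n\le A<n+1)fR(z)fF(n\le A<n+1)\|<\infty$ together with the two \emph{triangular} bounds $\sup_{n,z}\|F(A<n)fR(z)fF(n\le A<n+1)\|<\infty$ and $\sup_{n,z}\|F(n\le A<n+1)fR(z)fF(A\ge n)\|<\infty$. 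The third reduces to the first two by taking adjoints and conjugating with the anti-unitary complex conjugation $C$ (for which $CHC=H$, $CfC=f$, $CAC=-A$ and $CR(z)C=R(\bar z)$): this turns $F(n\le A<n+1)fR(z)fF(A\ge n)$ into an operator of the form $F(A\le-n)fR(\bar z)fF(-n-1<A\le-n)$, which, after enlarging the spectral projections by at most one unit block, is dominated by finitely many terms of diagonal and first-triangular type with $R(z)$ replaced by $R(\bar z)$ — harmless, the whole set-up being invariant under the simultaneous substitution $z\mapsto\bar z$, $A\mapsto-A$. Thus only the diagonal bound and the first triangular bound remain.

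Both are proved by the Mourre positive commutator scheme with conjugate operator $A$ of \eqref{eq:5}. The point is that, with $V_2=0$, $V=V_1\in C^\infty$ obeys all the derivative bounds of Condition~\ref{cond:lap}\ref{it:assumption2} and the virial bound \ref{it:assumption3}; hence the commutator identity \eqref{eq:6} holds, $\i[H,A]=2H+W$ with $W\ge\epsilon_1\tilde\epsilon_1\inp{x}^{-\mu}$, and the further commutators $[W,A]$ and $[[H,A],A]$ are again well-behaved (the first a multiplication operator of order $\inp{x}^{-\mu}$, the second controlled by $H$). Thus, on the relevant part of the spectrum, $W$ is comparable, after subtraction of the energy, to $f^2=|z|+\inp{x}^{-\mu}$ — which is exactly why $f_{|z|}$ is the natural weight. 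Feeding this positivity, together with the a priori bound \eqref{eq:limitbound} (i.e.\ $\|\inp{x}^{-s}f^{1/2}R(z)f^{1/2}\inp{x}^{-s}\|\le C$ for $s>1/2$, uniformly in $z\in\Gamma_\theta$, which serves as starting estimate and also absorbs the contributions near the threshold, where eigenvalues of $H$ accumulate), into the standard Mourre differential inequality, and carrying along the dilation localizations $F(n\le A<n+1)$, $F(A<n)$ (after smoothing the sharp cut-offs, as usual) — using for the triangular bound the sign of $\i[H,A]$ to select the non-propagating direction — yields the two bounds. (The absence of non-negative eigenvalues of $H$, recorded after Theorem~\ref{thm:lap}, enters, but is already contained in \eqref{eq:limitbound}.)

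The step I expect to be the main obstacle is running this commutator argument \emph{with the degenerate weight $f_{|z|}$ and uniformly as $z\to0$}. In the classical case the commutator is bounded below by a positive constant, the weight is a power of $\inp{A}$, and the differential inequality closes immediately; here the available positivity is only of order $f^2\approx|z|+\inp{x}^{-\mu}$, which degenerates spatially and, as $z\to0$, in energy, so one must verify that every error term produced by commuting $A$ and the smoothed dilation cut-offs through $R(z)$ — the contributions of $[W,A]$, of $[[H,A],A]$, of the localization errors, and of the turning-point region where $|A|$ is bounded — is dominated, uniformly in $z\in\Gamma_\theta$, by the leading term of order $f^2$. This is where the quantitative hypotheses \ref{it:assumption1}--\ref{it:assumption3} and the precise form \eqref{eq:fsublambda} of $f_{|z|}$ (with $K=1$) are really used, and where the argument departs from the constant-commutator case; once this bookkeeping is done, the two bounds follow and, with Lemmas~\ref{lemma:resolvent-boundsooA} and \ref{lemma:resolvent-boundsooAb} and the accretivity of $T$, give \eqref{eq:limitbound3a}.
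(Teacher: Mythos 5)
Your reduction is exactly the paper's: you observe that $T=-\i f_{|z|}R(z)f_{|z|}$ is accretive and reduce \eqref{eq:limitbound3a} via Lemmas~\ref{lemma:resolvent-boundsooA} and~\ref{lemma:resolvent-boundsooAb} to uniform diagonal and one-sided block bounds in the spectral decomposition of $A$ (your conjugation/adjoint remark for the third bound is a legitimate shortcut; the paper simply proves it by mimicking the second). The genuine gap is that the decisive analytic step --- proving those block bounds uniformly in $z\in\Gamma_\theta$ \emph{and} in the translation parameter $n\in\Z$ --- is never carried out: you describe it as ``feeding the positivity into the standard Mourre differential inequality'' and then explicitly flag the uniform-in-$z$ bookkeeping with the degenerate weight $f_{|z|}$ as ``the main obstacle'' without resolving it. There is no off-the-shelf Mourre inequality that yields weighted block bounds of the form $\|\inp{A-n}^{-1}fR(z)f\inp{A-n}^{-1}\|\le C$ and $\|F(A<n)fR(z)f\inp{A-n}^{-2}\|\le C$ here; this is precisely the content of the lemma, and \eqref{eq:limitbound} alone does not supply it.

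Concretely, the paper obtains these bounds from the machinery of \cite{FS}: the regularized resolvent $R_z(\epsilon)=(H-\i\epsilon\,\i[H,A]-z)^{-1}$, the quadratic estimate \eqref{eq:quadratic} (which is where uniformity as $z\to0$ with the weight $f_{|z|}$ is encoded), and the derivative formula \eqref{eq:7}; the diagonal bound \eqref{eq:8} follows by integrating a differential inequality in $\epsilon$ twice, and the one-sided bounds \eqref{eq:9}--\eqref{eq:10} by one integration, using the exponential weights $\e^{\epsilon A}F(A<0)$ and $F(A\ge0)\e^{-\epsilon A}$ whose sign is tied to $\epsilon\,\Im z>0$ --- this, not a ``smoothing of the sharp cut-offs $F(A<n)$'', is the mechanism that selects your ``non-propagating direction''. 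Finally the whole argument is translated, $A\to A-n$ (with constants independent of $n$, since \eqref{eq:quadratic}, \eqref{eq:7} and \eqref{eq:bounda} are unaffected), and $\epsilon\to0$ gives the three hypotheses of Lemma~\ref{lemma:resolvent-boundsooAb} for the accretive $T(z)$. Unless you import or reprove the quadratic estimate and run this $\epsilon$-differential-inequality argument (or an equivalent), your proposal stops exactly where the proof has to begin, so it does not yet establish \eqref{eq:limitbound3a}.
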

\begin{proof} Following \cite[Section 3]{FS} (a modification of the
  method of \cite{Mo1}) we introduce 
\begin{equation}
  \label{eq:def_R_zeta}
  R_{z}(\epsilon) = (H - \i\epsilon \i[H, A] - z)^{-1};\, \epsilon \,\Im z>0.
\end{equation} We recall the quadratic estimate  \cite[Lemma 3.1]{FS}, 
\begin{align}
\label{eq:quadratic}
\|f_{|z|} R_{z}(\epsilon) T \|^2 \leq C |\epsilon|^{-1} \|T^*
R_{z}(\epsilon) T \|.
\end{align}
valid for  $z\in\Gamma_\theta$ or
$\bar z\in\Gamma_\theta$, $\epsilon\, \Im z>0$  and 
$0<|\epsilon|\leq
\epsilon(\theta)$ sufficiently small and for all bounded operators
$T$. 

We recall, cf.  \cite[Lemma 3.3]{FS}, 
\begin{equation}
  \label{eq:7}
  \frac{d}{d\epsilon} R_{z}(\epsilon) =
(1-2\i\epsilon)^{-1}
\left\{
R_{z}(\epsilon)A -AR_{z}(\epsilon)  +\i\epsilon
R_{z}(\epsilon) (x \cdot \nabla W) R_{z}(\epsilon)
\right\}.
\end{equation}

Also we recall the following bound  valid for any $s\in{\mathbb R}$,
cf.  \cite[(4.9)]{FS},
\begin{equation}
    \label{eq:bounda}
    |\partial_x^{\alpha}f_\lambda^s|\leq C_{\alpha}f_\lambda^s\langle x
\rangle^{-|\alpha|},
  \end{equation}
with $C_{\alpha}$ independent of $\lambda\geq0$.  

Now we shall prove three bounds which are uniform in $
z\in\Gamma_\theta$, $\epsilon>0$ and $0<\nobreak\epsilon\leq\nobreak
\epsilon(\theta)$,
\begin{subequations}
 \begin{align}
  \label{eq:8}
  \|F_z(\epsilon)\|&\leq C \mfor
  F_z(\epsilon):=\inp{A}^{-1}f_{|z|}R_{z}(\epsilon)f_{|z|}\inp{A}^{-1},\\
\label{eq:9}\|F^-_z(\epsilon)\|&\leq C \mfor
  F^-_z(\epsilon):=\e^{\epsilon A}F(A<
  0)f_{|z|}R_{z}(\epsilon)f_{|z|}\inp{A}^{-2},\\
\label{eq:10}\|F^+_z(\epsilon)\|&\leq C \mfor
  F^+_z(\epsilon):=\inp{A}^{-2}f_{|z|}R_{z}(\epsilon)f_{|z|}F(A\geq 0)\e^{-\epsilon A}.
\end{align} 
\end{subequations}

\noindent{\bf Re \eqref{eq:8}.} Due to \eqref{eq:quadratic}
\begin{equation}
  \label{eq:11}
  \|F_z(\epsilon)\|\leq \epsilon^{-1}C\mfor 0<\epsilon\leq \epsilon(\theta).
\end{equation} Next we note the bounds, due to \eqref{eq:bounda},
\begin{equation}
  \label{eq:12}
  \|f_{|z|}^{-1}Af_{|z|}\inp{A}^{-1}\|\leq C\mand \|\inp{A}^{-1}f_{|z|}Af_{|z|}^{-1}\|\leq C.
\end{equation} 
Using \eqref{eq:quadratic}, \eqref{eq:7}  and
\eqref{eq:12} we obtain
\begin{equation}
  \label{eq:13}
  \big \|\frac{d}{d\epsilon} F_{z}(\epsilon) \big \|\leq C\parb{ \epsilon^{-1/2}\|F_z(\epsilon)\|^{1/2}+\|F_z(\epsilon)\|}.
\end{equation} Clearly \eqref{eq:8} follows from \eqref{eq:11} and
\eqref{eq:13} by two integrations.

\noindent{\bf Re \eqref{eq:9}.} Due to \eqref{eq:quadratic} and
\eqref{eq:8}
\begin{equation}
  \label{eq:11-}
  \|F^-_z(\epsilon(\theta))\|\leq \epsilon(\theta)^{-1/2}C.
\end{equation} Using \eqref{eq:7} we compute
\begin{align}
  \label{eq:14}
 \frac{d}{d\epsilon} F^-_{z}(\epsilon)&=T_1+\cdots +T_4;\\ 
T_1&= \parb{1-(1-2\i\epsilon)^{-1}}\e^{\epsilon A}F(A<
  0)Af_{|z|}R_{z}(\epsilon)f_{|z|}\inp{A}^{-2},\nonumber\\
T_2&= (1-2\i\epsilon)^{-1}\e^{\epsilon A}F(A<
  0)[A,f_{|z|}]R_{z}(\epsilon)f_{|z|}\inp{A}^{-2},\nonumber\\
T_3&= (1-2\i\epsilon)^{-1}\e^{\epsilon A}F(A<
  0)f_{|z|}R_{z}(\epsilon)Af_{|z|}\inp{A}^{-2},\nonumber\\
T_4&= \i\epsilon(1-2\i\epsilon)^{-1}\e^{\epsilon A}F(A<
  0)f_{|z|}R_{z}(\epsilon)(x \cdot \nabla W)R_{z}(\epsilon)f_{|z|}\inp{A}^{-2}.\nonumber
\end{align} Using \eqref{eq:quadratic}, \eqref{eq:bounda} and
\eqref{eq:8} we estimate 
\begin{equation}
  \label{eq:11-b}
  \|T_j\|\leq \epsilon^{-1/2}C\mfor 0<\epsilon\leq
  \epsilon(\theta)\mand j=1,\dots,4.
\end{equation}  Notice that for all of the terms $T_1$--$T_4$ we  apply \eqref{eq:quadratic}
with $T=f_{|z|}\inp{A}^{-1}$ and in addition  for $T_4$ we apply \eqref{eq:quadratic}
with $T=f_{|z|}$.  Clearly \eqref{eq:9} follows from
\eqref{eq:11-}--\eqref{eq:11-b} by one integration.

\noindent{\bf Re \eqref{eq:10}.}  We mimic the proof of \eqref{eq:9}.

Next we note that the above arguments apply to $A\to A-n$ for any $n\in \Z$
yielding bounds being independent of $n$. Taking $\epsilon\to 0$ we thus
obtain the following bounds for the
accretive operator $T(z)=-\i f_{|z|}R(z)f_{|z|}$,
all being uniform in $ z\in\Gamma_\theta$ and $n\in \Z$, 
\begin{subequations}
 \begin{align*}
  \|\inp{A-n}^{-1}T(z)\inp{A-n}^{-1}\|&\leq \tilde C,\\
\|F(A<
  n)T(z)\inp{A-n}^{-2}\|&\leq \tilde C,\\
\|\inp{A-n}^{-2}T(z)F(A\geq n)\|&\leq \tilde C.
\end{align*} 
\end{subequations}

Due to these bounds and Lemmas \ref{prop:resolvent-bounds} and
\ref{lemma:resolvent-boundsAbb} we conclude \eqref{eq:limitbound3a}
with $C=\nobreak16 \tilde C$.
\end{proof}

We shall use Weyl quantization on $\R^d$ denoted by $\Opw(c)$, cf.
\cite {Ho2, FS}.  The following ($\lambda$-dependent) symbols will
play a prominant role (cf. \cite {DS3,FS}):
\begin{align}
\label{eq:def_a0_b}
a=a_\lambda= \frac{\xi^2}{f_\lambda(x)^2} ,\;\;
b=b_\lambda=  \frac{\xi}{f_\lambda(x)} \cdot \frac{x}{\langle x \rangle}.
\end{align}

It is convenient to introduce the following metric
\begin{equation*}
 g=g_{\lambda}=\langle
x \rangle^{-2}\d x^2+f_\lambda^{-2}\d\xi^2, 
\end{equation*}
 and the corresponding
symbol classes $S(m, g)$. Here  
 $m=m_{\lambda}=m_{\lambda}(x,\xi)$ will be   a  \emph{uniform weight
 function}, see \cite[Subsection 4.2]{FS} for this terminology and an
 account of basic
 pseudodifferential operator results. Here we have  $\lambda \in [0,\lambda_0]$
 (for the  fixed $\lambda_0>0$)  and  the function $m$ obeying  bounds uniform
 in this parameter, see \cite [Lemma 4.3 (ii)] {FS} for details
 (note however that the
 discussion there concerns uniformity in   a parameter $E\in (0,1]$
 rather than in $\lambda \in [0,\lambda_0]$ which  can be considered
 as a trivial modification). 

In 
 the present  paper we shall use the  notation $S_{\unif}(m_{|z|}, g_{|z|})$, given
 a uniform weight
 function $m$, to signify  the symbol class of smooth symbols
 $c=c_{z}\in  S (m_{|z|}, g_{|z|})$
 with $z$  in the closure of the set $\Gamma_\theta\subset \C$
 satisfying
 \begin{equation}
   \label{eq:sym_class}
  |\partial^\gamma_x \partial^\beta_\xi c_z(x,\xi)|\leq
  C_{\gamma,\beta}m_{|z|}(x,\xi)\langle x \rangle^{-|\gamma|}f_{|z|}^{-|\beta|}.
 \end{equation} Note that these bounds are  uniform  in  $z$ belonging
 to the closure of the set $\Gamma_\theta$.
We also notice that the ``Planck constant'' for this class is 
    $\langle x
    \rangle^{-1}f_{|z|}^{-1}$, in particular $\inp{ x}^{\mu/2-1}$ is
    a ``uniform Planck constant''. 
The corresponding class of Weyl quantized operators  is
    denoted by $\Psi_{\unif}(m_{|z|}, g_{|z|})$. Whence for example
    $f_{|z|}^s\in \Psi_{\unif}(f_{|z|}^s, g_{|z|})$,
    cf. \eqref{eq:bounda}. Here is a list of other examples (referring
    to \eqref{eq:def_a0_b} for definition)
\begin{subequations}
  \begin{align*}
    a_{|z|}&\in S_{\unif}(a_{|z|}+1, g_{|z|}),\\
F(a_{|z|}), I-F(a_{|z|})&\in S_{\unif}(1, g_{|z|})\mforall F\in
C^\infty_c(\R),\\
F(a_{|z|})G(b_{|z|})&\in S_{\unif}(1, g_{|z|})\mforall F,G\in C^\infty_c(\R),\\
h,h-z &\in S_{\unif}(f_{|z|}^2(a_{|z|}+1), g_{|z|})\mfor h:=\xi^2+V(x)=\xi^2+V_1(x).
  \end{align*}
    \end{subequations}

Now,  with reference to the constant $C_0$ in Condition \ref{cond:lap}
 \ref{it:assumption2}
(i.e. the constant with $\alpha=0$) we can bound
\begin{equation}\label{eq:25}
  f_{|z|}^{-2}(x)\big |V(x)-z\big |\leq C_0':=\max(C_0,1).
\end{equation} Let a real-valued $\chi_-\in C^\infty_c(\R)$ be given such that
$\chi_-(t)=1$ for $t\in [0,C_0'+1]$, and let $\chi_+=1-\chi_-$. Then the symbol
\begin{equation*}
  r=r_z:=(h-z)^{-1}\chi_+(a_{|z|})\in S_{\unif}(f_{|z|}^{-2}(a_{|z|}+1)^{-1}, g_{|z|}).
\end{equation*}
In particular, using the calculus (see  \cite[Subsection 4.2]{FS}), we
can write
\begin{subequations}
  \begin{align}
  \label{eq:15}\Opw(\chi_+(a_{|z|}))f_{|z|}&=S_z
  ^{\lef}f_{|z|}^{-1}(H-z)+T_z ^{\lef}\inp{x}^{-1}f_{|z|},\\
\label{eq:16}f_{|z|}\Opw(\chi_+(a_{|z|}))&=(H-z)f_{|z|}^{-1}S_z ^{\righ}+f_{|z|}\inp{x}^{-1}T_z ^{\righ},
  \end{align}
\end{subequations} where  the  operators $S_z
  ^{\lef}, T_z ^{\lef}, S_z ^{\righ}, T_z ^{\righ} $ have symbols
\begin{equation*}
  s_z
  ^{\lef},t_z
  ^{\lef},s_z ^{\righ},t_z ^{\righ}\in S_{\unif}(1, g_{|z|}),
\end{equation*} respectively. Notice for example for \eqref{eq:15} that 
\begin{equation*}
  \Opw(\chi_+(a_{|z|}))-\Opw(r_{|z|})f_{|z|}(H-z)f_{|z|}^{-1}\in\Psi_{\unif}((\inp{x}f_{|z|})^{-1}, g_{|z|}),
\end{equation*} and that this argument can be repeated, say $k$ times,
improving the remainder to be in $S_{\unif}((\inp{x}f_{|z|})^{-k},
g_{|z|})$. Whence if $k(1-\mu/2)\geq 1$ indeed  \eqref{eq:15} follows.
\begin{lemma}
  \label{lemma:resolvent-boundsAb} Under the condition of Lemma
  \ref{lemma:resolvent-boundsA} there exists  $C=C(\theta)>0$
  such that
\begin{equation}
  \label{eq:limitbound3ab}
 \sup_{z \in \Gamma_{\theta}}
\left\|
f_{|z|} R(z)
f_{|z|} \right\|_{\vB (B(f_{|z|}\inp{x}),B(f_{|z|}\inp{x})^*)} \leq C.
\end{equation}
\end{lemma}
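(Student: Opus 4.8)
The plan is to pass from the bound in Lemma \ref{lemma:resolvent-boundsA}, which is stated in terms of the Besov space attached to the dilation generator $A$, to the bound in terms of the Besov space attached to the position-type operator $f_{|z|}\inp{x}$. The essential observation is that the two self-adjoint operators $A = (x\cdot p + p\cdot x)/2$ and $\inp{x}$ generate ``comparable'' Besov scales, at least after the harmless modification of replacing $|x|$ by $\inp{x}$ and up to the weight $f_{|z|}$, which is a uniform elliptic symbol of order $1$ in the calculus $\Psi_{\unif}(f_{|z|}, g_{|z|})$ introduced above. Concretely, I would first reduce everything to the operator $\inp{x}$ and then to $f_{|z|}\inp{x}$ via the abstract machinery of Subsection \ref{subsec: Abstract Besov spaces}.

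The first step is to show $B(\inp{x}) = B(A)$ with equivalence of norms uniform in the relevant parameters; equivalently, by Corollary \ref{cor:abstr-besov-spac}, it suffices to check that $\inp{A}^{s}\inp{\inp{x}}^{-s}$ and $\inp{\inp{x}}^{s}\inp{A}^{-s}$ are bounded on $L^2(\R^d)$ for some (hence all) $s>1/2$. This is a standard pseudodifferential/commutator fact: $A$ and $\inp{x}$ have comparable ``sizes'' because $|A|$ is, modulo lower-order terms, of the order of $|x||p|$ while $\inp{x}$ controls the $x$-variable; one uses that $\inp{A}^{s}\inp{x}^{-s}$ and its inverse-type companion are bounded, e.g. by interpolation from the integer case together with the elementary bound $\|\inp{A}\inp{x}^{-1}u\| \le C\|u\| + C\||p|u\|$ on the relevant domain, or more cleanly by invoking the Mourre-type weighted estimates already available in \cite{FS}. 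An equally good alternative, which I would actually prefer for uniformity, is to verify the hypothesis \eqref{eq:15hA} of Lemma \ref{lemma:resolvent-boundsooA} directly for $T(z) = -\i f_{|z|}R(z)f_{|z|}$ with $A_1 = A_2 = \inp{x}$ (or $f_{|z|}\inp{x}$), using the pseudodifferential localizations $F(m\le \inp{x} < m+1)$ and the factorizations \eqref{eq:15}, \eqref{eq:16} together with the quadratic estimate \eqref{eq:quadratic}; but reducing to Lemma \ref{lemma:resolvent-boundsA} through Corollary \ref{cor:abstr-besov-spac} is shorter to write.

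The second step is to incorporate the weight $f_{|z|}$. Here I would use Lemmas \ref{lemma:resolvent-boundsoo} and \ref{lemma:resolvent-boundsook}: since $f_{|z|}\inp{x}$ and $\inp{x}$ differ by the factor $f_{|z|}$, which obeys $c\lambda^{1/2} \le f_{|z|} \le C\inp{x}^{0}$-type two-sided bounds — more precisely $f_{|z|}^{2}$ is comparable to $|z| + \inp{x}^{-\mu}$ — one writes $f_{|z|}\inp{x}$ as a symbol in the class $S_{\unif}(f_{|z|}\inp{x}, g_{|z|})$ that is elliptic of the stated order, and then $\inp{f_{|z|}\inp{x}}^{s}\inp{\inp{x}}^{-s}$ together with its companion are bounded in $\vB(L^2)$ uniformly in $z\in\Gamma_\theta$, again by the calculus. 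Applying Corollary \ref{cor:abstr-besov-spac} twice (once in each direction) gives $B(f_{|z|}\inp{x}) = B(\inp{x})$ with uniform norm equivalence, and then \eqref{eq:limitbound3ab} follows from \eqref{eq:limitbound3a} by sandwiching: $\|f_{|z|}R(z)f_{|z|}\|_{\vB(B(f_{|z|}\inp{x}), B(f_{|z|}\inp{x})^*)} \le C\|f_{|z|}R(z)f_{|z|}\|_{\vB(B(A),B(A)^*)}$.

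The main obstacle I anticipate is the \emph{uniformity in $z$} of all the symbol-calculus comparisons — in particular the boundedness of operators like $\inp{f_{|z|}\inp{x}}^{s}\inp{\inp{x}}^{-s}$ and $\inp{A}^{s}\inp{x}^{-s}$ with constants independent of $\lambda = |z| \in [0,\lambda_0]$, including the degenerate threshold case $\lambda = 0$ where $f_0 = \inp{x}^{-\mu/2}$ actually tends to $0$ at infinity. One must check that the ``uniform Planck constant'' $\inp{x}^{\mu/2 - 1}$ still controls the relevant remainders and that the functional-calculus powers $\inp{\cdot}^{s}$ of these operators stay in the uniform classes $\Psi_{\unif}(\cdot, g_{|z|})$; this is exactly the kind of bookkeeping set up in \cite[Subsection 4.2]{FS}, so I would lean on those results rather than redo them, but it is the step that requires care. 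A secondary, purely technical point is that $A$ is not literally a function of $\inp{x}$, so the comparison $B(A) = B(\inp{x})$ genuinely needs a commutator argument (or the direct verification via Lemma \ref{lemma:resolvent-boundsooA}) rather than a one-line functional-calculus remark.
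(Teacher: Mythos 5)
Your central step fails: the Besov scales attached to $A=(x\cdot p+p\cdot x)/2$ and to $\inp{x}$ (or $f_{|z|}\inp{x}$) are \emph{not} globally comparable, and the hypothesis of Corollary \ref{cor:abstr-besov-spac} that you propose to verify is false. The operator $\inp{A}^{s}\inp{x}^{-s}$ is unbounded on $L^2(\R^d)$ for any $s>0$: taking $u_n(x)=\e^{\i nx_1}\phi(x)$ with $\phi$ fixed and compactly supported, one has $\|\inp{x}^{s}u_n\|\leq C$ while $\|\inp{A}^{s}u_n\|\gtrsim n^{s}$, since $A$ contains a factor of momentum which no position weight controls (your own inequality $\|\inp{A}\inp{x}^{-1}u\|\leq C\|u\|+C\||p|u\|$ already displays the uncontrolled $|p|u$ term). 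Consequently $B(\inp{x})\neq B(A)$, and the sandwiching inequality $\|f_{|z|}R(z)f_{|z|}\|_{\vB(B(f_{|z|}\inp{x}),B(f_{|z|}\inp{x})^*)}\leq C\|f_{|z|}R(z)f_{|z|}\|_{\vB(B(A),B(A)^*)}$ cannot be obtained this way, because it would require the uniform embedding $B(f_{|z|}\inp{x})\subseteq B(A)$, i.e. uniform boundedness of $\inp{A}^{s}\inp{f_{|z|}\inp{x}}^{-s}$, which fails for the same reason. Your second claimed equivalence $B(f_{|z|}\inp{x})=B(\inp{x})$ (uniformly in $z$) is also false in the threshold limit: at $z=0$ one has $f_0\inp{x}\sim\inp{x}^{1-\mu/2}$, and by Lemma \ref{lemma:resolvent-boundsoo} the spaces $B(\inp{x}^{1-\mu/2})$ and $B(\inp{x})$ are related by the \emph{nontrivial} isomorphism of multiplication by a power of $\inp{x}$, not by the identity; this is precisely why the paper needs the separate Lemma \ref{lemma:resolvent-boundsAbb}, which transfers between these scales only at the cost of the factor $f_{|z|}^{-1/2}$.

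The missing idea is that the comparison between $A$ and $f_{|z|}\inp{x}$ holds only \emph{microlocally}, on the region $\{\,\xi^2\lesssim f_{|z|}^2\,\}$, and that the complementary region must be treated by ellipticity rather than by comparison of weights. Concretely, the paper inserts $I=\Opw(\chi_+(a_{|z|}))+\Opw(\chi_-(a_{|z|}))$ on both sides of $P=f_{|z|}R(z)f_{|z|}$. On the support of $\chi_-(a_{|z|})$ one computes $[A,T_-]\in\Psi_{\unif}(1,g_{|z|})$ for $T_-=\Opw(\chi_-(a_{|z|}))$, which gives the uniform bound $\inp{A}T_-\inp{f_{|z|}\inp{x}}^{-1}\in\vB(\vH)$; by the interpolation Lemma \ref{lemma:resolvent-bounds} (with $s=1$) and duality this yields $T_-\in\vB(B(f_{|z|}\inp{x}),B(A))$ and $T_-\in\vB(B(A)^*,B(f_{|z|}\inp{x})^*)$, so the term $T_-PT_-$ is controlled by Lemma \ref{lemma:resolvent-boundsA}. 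The terms carrying a factor $\Opw(\chi_+(a_{|z|}))$ are rewritten through the parametrix identities \eqref{eq:15}--\eqref{eq:16}, which produce either operators in $\Psi_{\unif}(1,g_{|z|})$ or remainders with an extra weight $\inp{x}^{-1}$; the purely high-energy term is then handled by the LAP bound \eqref{eq:limitbound}, and the mixed terms by the resolvent identities \eqref{eq:20} combined again with Lemma \ref{lemma:resolvent-bounds} and Lemma \ref{lemma:resolvent-boundsA}. Your alternative suggestion (verifying \eqref{eq:15hA} directly with position localizations) is not developed enough to assess, but as stated it bypasses exactly this microlocal splitting, which is where the actual work lies; without it the argument does not close.
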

\begin{proof}
  Let for convenience $P=f_{|z|} R(z)
f_{|z|} $, i.e. the quantity we want to
bound, and $T_-=\Opw(\chi_-(a_{|z|}))$ (as defined above). By inserting  repeatedly $I=\Opw(\chi_+(a_{|z|}))+\Opw(\chi_-(a_{|z|}))$ either
to the left or right of terms proportional to $P$ and  using
\eqref{eq:15} and \eqref{eq:16} we obtain an expansion in various terms. Only the following four terms are
not obviously in $\Psi_{\unif}(1, g_{|z|})\subseteq \vB(\vH)$ and
hence bounded by  a 
 constant that is  independent of  $z\in
\Gamma_\theta$ (such a constant is 
henceforth referred to as a ``uniform bounding constant''):
\begin{align*}
  P_1&=T_-P\,T_-,\\
P_2&=T_-P \,\inp{x}^{-1}T_z ^{\righ},\\
P_3&=T_z ^{\lef}\inp{x}^{-1}P \,T_-,\\
P_4&=T_z ^{\lef}\inp{x}^{-1}P \,\inp{x}^{-1}T_z ^{\righ}.
\end{align*} Due to \eqref{eq:limitbound} also $P_4\in\vB(\vH)$
with a uniform bounding constant.

We calculate $[A,T_-]\in \Psi_{\unif}(1, g_{|z|})$ and then in turn
\begin{equation*}
  \inp{A}T_-\inp{f_{|z|}\inp{x}}^{-1}\in \vB(\vH),
\end{equation*}  with  a uniform bounding constant.
Whence due to Lemma \ref{lemma:resolvent-bounds} applied with $s=1$
(using for \eqref{eq:18} that $ T_-=T_-^*$)
\begin{subequations}
\begin{align}
  \label{eq:17}
  T_-&\in\vB(B(f_{|z|}\inp{x}),B(A)),\\
T_-&\in\vB(B(A)^*,B(f_{|z|}\inp{x})^*),\label{eq:18}
\end{align} 
\end{subequations} with  uniform bounding constants.

The contribution from the term $P_1$ is treated by \eqref{eq:17}, 
\eqref{eq:18} and Lemma
\ref{lemma:resolvent-boundsA}. 

It remains to treat the contributions from the terms $P_2$ and
$P_3$. For that we use the resolvent identities
\begin{equation}\label{eq:20}
  R(z)=R(\i)+(z-\i) R(z)R(\i)=R(\i)+(z-\i) R(\i)R(z).
\end{equation} We shall treat the term $P_2$ by using the first
identity in \eqref{eq:20},  omitting the
(similar)
arguments for $P_3$ (which is based on the second identity). By Lemma
\ref{lemma:resolvent-bounds} (used again with $s=1$)
\begin{equation} 
  \label{eq:19}
  f_{|z|}^{-1}R(\i)f_{|z|}\inp{x}^{-1}T_z ^{\righ}\in\vB(B(f_{|z|}\inp{x}),B(A)),
\end{equation} with a uniform bounding constant. Now using \eqref{eq:20} there are two terms to bound.  The
contribution from the first term is trivially treated (it is in $\vB(\vH)$), while  the
contribution from the second  term  is treated using \eqref{eq:18}, \eqref{eq:19} and Lemma
\ref{lemma:resolvent-boundsA}. We obtain a uniform bound
$\|P_2\|_{\vB (B(f_{|z|}\inp{x}),B(f_{|z|}\inp{x})^*)} \leq C$. 
\end{proof}

\begin{lemma}
  \label{lemma:resolvent-boundsAbb} There exists  $C=C(\mu)>0$
  such that
\begin{equation}
  \label{eq:limitbound3abb}
\|f_{|z|}^{-1/2}u\|_{B(f_{|z|}\inp{x})}\leq C\|u\|_{B(|x|)}\mforall
z\in\Gamma_\theta \mand u\in B(|x|).
\end{equation}
\end{lemma}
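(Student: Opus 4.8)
\textit{Plan.} The strategy is to reduce the claimed bound to the abstract homogeneity and scaling results already proved, namely Lemmas~\ref{lemma:resolvent-boundsoo} and~\ref{lemma:resolvent-boundsook} together with Corollary~\ref{cor:abstr-besov-spac}. First I would record two elementary \emph{lattice} properties of the spaces $B(\cdot)$: for positive measurable functions $h_1,h_2$ on $\R^d$ and $v\in\vH$, (i) $|v|\le|w|$ a.e.\ implies $\|v\|_{B(h_1)}\le\|w\|_{B(h_1)}$, and (ii) if $h_1=h_2$ on $\supp v$ then $\|v\|_{B(h_1)}=\|v\|_{B(h_2)}$; both are immediate because $F_j(h)$ acts as multiplication by the indicator of the shell $\{R_{j-1}\le h<R_j\}$. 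In particular, truncating a vector by a characteristic function never increases a $B(\cdot)$-norm. Using Corollary~\ref{cor:abstr-besov-spac} with $s=1$ one also gets $B(|x|)=B(\inp x)$ with an absolute constant, so with $\lambda:=|z|\in(0,\lambda_0]$ and $g_\lambda:=f_\lambda\inp x$ (which satisfies $g_\lambda\ge\inp x^{1-\mu/2}\ge I$) it suffices to prove $\|f_\lambda^{-1/2}u\|_{B(g_\lambda)}\le C(\mu)\|u\|_{B(\inp x)}$.

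The two pointwise facts that carry the argument are: from $f_\lambda\ge\inp x^{-\mu/2}$ one has $g_\lambda\ge\inp x^{1-\mu/2}$, i.e.\ $\inp x^{1/2}\le g_\lambda^{1/(2-\mu)}$, hence
\[
  f_\lambda^{-1/2}=\inp x^{1/2}g_\lambda^{-1/2}\le g_\lambda^{\alpha},\qquad \alpha:=\tfrac{\mu}{2(2-\mu)},\quad 1+2\alpha=\tfrac{2}{2-\mu},
\]
and, from $f_\lambda\ge\lambda^{1/2}$, $f_\lambda^{-1/2}\le\lambda^{-1/4}$. Now split $u=u_1+u_2$ with $u_1=F(\lambda\inp x^\mu\le1)u$ and $u_2=F(\lambda\inp x^\mu>1)u$; by truncation $\|u_j\|_{B(\inp x)}\le\|u\|_{B(\inp x)}$. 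For the $u_1$-term, by the lattice property and the second estimate of Lemma~\ref{lemma:resolvent-boundsoo} (applied with $A=g_\lambda$, $s=2\alpha>-1$),
\[
  \|f_\lambda^{-1/2}u_1\|_{B(g_\lambda)}\le\|g_\lambda^{\alpha}u_1\|_{B(g_\lambda)}\le C(\mu)\,\|u_1\|_{B(g_\lambda^{2/(2-\mu)})}.
\]
On $\supp u_1$ one has $\inp x\le g_\lambda^{2/(2-\mu)}\le 2^{1/(2-\mu)}\inp x$, so the globally defined $\tilde g:=\min\bigl(g_\lambda^{2/(2-\mu)},\,2^{1/(2-\mu)}\inp x\bigr)$ is comparable to $\inp x$ everywhere and coincides with $g_\lambda^{2/(2-\mu)}$ on $\supp u_1$; hence $\|u_1\|_{B(g_\lambda^{2/(2-\mu)})}=\|u_1\|_{B(\tilde g)}\le C(\mu)\|u_1\|_{B(\inp x)}$ by Corollary~\ref{cor:abstr-besov-spac}.

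For the $u_2$-term, the lattice property gives $\|f_\lambda^{-1/2}u_2\|_{B(g_\lambda)}\le\lambda^{-1/4}\|u_2\|_{B(g_\lambda)}$. On $\{\lambda\inp x^\mu>1\}$ one has $\lambda^{1/2}\inp x\le g_\lambda\le\sqrt2\,\lambda^{1/2}\inp x$, so the function that equals $g_\lambda$ there and $\lambda^{1/2}\inp x$ elsewhere is comparable to $\lambda^{1/2}\inp x$ on all of $\R^d$ and coincides with $g_\lambda$ on $\supp u_2$; thus $\|u_2\|_{B(g_\lambda)}\le C\|u_2\|_{B(\lambda^{1/2}\inp x)}$ by Corollary~\ref{cor:abstr-besov-spac}. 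Finally $\lambda^{1/2}\inp x\ge1$ on $\supp u_2$ (since there $\lambda\inp x^2=(\lambda\inp x^\mu)\,\inp x^{2-\mu}>1$), so Lemma~\ref{lemma:resolvent-boundsook} with $c=\lambda^{1/2}$ yields $\|u_2\|_{B(\lambda^{1/2}\inp x)}\le 8\lambda^{1/4}\|u_2\|_{B(\inp x)}$; the powers of $\lambda$ cancel. Adding the $u_1$- and $u_2$-bounds gives $\|f_\lambda^{-1/2}u\|_{B(g_\lambda)}\le C(\mu)\|u\|_{B(\inp x)}=C(\mu)\|u\|_{B(|x|)}$, as required.

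The main obstacle is not any single estimate but the bookkeeping: the multiplication operators one wants to compare ($g_\lambda^{2/(2-\mu)}$ against $\inp x$, and $g_\lambda$ against $\lambda^{1/2}\inp x$) are comparable only on the support of the relevant vector, not globally. The device of truncating by $\min$ (resp.\ gluing piecewise) so as to produce an operator that \emph{is} globally comparable to $\inp x$ (resp.\ $\lambda^{1/2}\inp x$) while still agreeing with the original on that support is precisely what allows Corollary~\ref{cor:abstr-besov-spac} and the lattice property to be applied, with every constant depending on $\mu$ alone.
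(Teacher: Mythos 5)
Your proof is correct and follows essentially the same route as the paper: the same decomposition of $u$ according to whether $|z|\inp{x}^{\mu}\leq 1$ or $>1$, with Lemma~\ref{lemma:resolvent-boundsoo} handling the first piece and Lemma~\ref{lemma:resolvent-boundsook} (with $c=\sqrt{|z|}$ and the same cancellation of $|z|^{\pm 1/4}$) the second. The only difference is cosmetic bookkeeping: you transfer the support-localized comparability of the generators via lattice monotonicity and a min/gluing construction fed into Corollary~\ref{cor:abstr-besov-spac}, where the paper instead compares the dyadic shells of $f_{|z|}\inp{x}$ with those of $\inp{x}^{1-\mu/2}$ and $\sqrt{|z|}\inp{x}$ directly by index shifts.
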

\begin{proof} Due to Corollary \ref{cor:abstr-besov-spac} the bound 
\eqref{eq:limitbound3abb} is equivalent to the bound (possibly
changing the constant) 
\begin{equation}
  \label{eq:limitbound3abbB}
\|f_{|z|}^{-1/2}u\|_{B(f_{|z|}\inp{x})}\leq C\|u\|_{B(\inp{x})}\mforall z\in\Gamma_\theta \mand u\in  B(|x|)=B(\inp{x}).
\end{equation} Now to show \eqref{eq:limitbound3abbB} we decompose
\begin{equation*}
  I=F_-+F_+,\;F_-:=F(|z|<\inp{x}^{-\mu}),\,F_+:=F(|z|\geq\inp{x}^{-\mu}),
\end{equation*} and estimate for all $u\in B(|x|)$
\begin{align*}
\MoveEqLeft[3]  
\|f_{|z|}^{-1/2}F_-u\|_{B(f_{|z|}\inp{x})}\leq \sum_{j=2}^\infty
  R_j^{1/2}\|F(R_{j-2}\leq
  \inp{x}^{1-\mu/2}<R_j)\inp{x}^{\mu/4}F_-u\|\\
\leq{} & \sup_{i\geq 2}  \Big (R_i^{1/2}/ R_{i-1}^{1/2} \Big ) \sum_{j=2}^\infty
  R_{j-1}^{1/2}\|F(R_{j-2}\leq
  \inp{x}^{1-\mu/2}<R_{j-1})\inp{x}^{\mu/4}u\|\\
&+ \sum_{j=2}^\infty
  R_j^{1/2}\|F(R_{j-1}\leq
  \inp{x}^{1-\mu/2}<R_{j})\inp{x}^{\mu/4}u\|.
\end{align*} By  using Lemma \ref{lemma:resolvent-boundsoo}
with $A=\inp{x}$ and  $s=-\mu/2$ to both terms on the right-hand side
we obtain then that
\begin{equation}\label{eq:23}
  \|f_{|z|}^{-1/2}F_-u\|_{B(f_{|z|}\inp{x})}\leq (\sqrt 2+1)C(s)\|u\|_{B(\inp{x})}.
\end{equation}

Similarly we estimate 
\begin{align*}
  \|f_{|z|}^{-1/2}F_+u\|_{B(f_{|z|}\inp{x})}&\leq |z|^{-1/4}\sum_{j=2}^{\infty}
  R_j^{1/2}\|F(R_{j-2}\leq
  \sqrt{|z|}\inp{x}<R_j)F_+u\|\\
&\leq (\sqrt 2+1)|z|^{-1/4}\|F_+u\|_{B(\sqrt{|z|}\inp{x})}.
\end{align*}  
By using Lemma \ref{lemma:resolvent-boundsook}
with $A=\inp{x}$, $c=\sqrt{|z|}$ and $u\to F_+u$ (note that indeed
$F_+u=F(|cA|\geq 1)F_+u$) we obtain in turn that
\begin{equation*}
  \|F_+u\|_{B(\sqrt{|z|}\inp{x})}\leq 8|z|^{1/4}\|F_+u\|_{B(\inp{x})}\leq 8|z|^{1/4}\|u\|_{B(\inp{x})}.
\end{equation*} Consequently 
\begin{equation}\label{eq:24}
  \|f_{|z|}^{-1/2}F_+u\|_{B(f_{|z|}\inp{x})}\leq 8(\sqrt 2+1)\|u\|_{B(\inp{x})}.
\end{equation}

We obtain \eqref{eq:limitbound3abbB}  by combining \eqref{eq:23} and
\eqref{eq:24} (notice that the sum of associated  bounding constants only depends
on $\mu$).
 
\end{proof}
\begin{proof}[Proof of Theorem \ref{prop:resolvent-bounds}.]
  We can assume that $V_2=0$. Due to Lemma
  \ref{lemma:resolvent-boundsAbb}
  \begin{subequations}
    \begin{align}
      \label{eq:17bb}
      f_{|z|}^{-1/2}&\in\vB(B(|x|),B(f_{|z|}\inp{x})),\\
      f_{|z|}^{-1/2}&\in\vB(B(f_{|z|}\inp{x})^*,B(|x|)^*),\label{eq:18bb}
    \end{align}
  \end{subequations} with uniform bounding constants. We decompose
  \begin{equation*}
    f_{|z|}^{1/2}R(z)f_{|z|}^{1/2}= f_{|z|}^{-1/2}\parb{f_{|z|}R(z)f_{|z|}}f_{|z|}^{-1/2},
  \end{equation*} and apply \eqref{eq:17bb}, Lemma
  \ref{lemma:resolvent-boundsAb}  
  and \eqref{eq:18bb} to the three factors on the right-hand side
  ordered from the right to the left, respectively.
\end{proof}

\section{Sommerfeld radiation condition at zero
  energy}\label{subsec: Sommerfeld radiation condition at zero energy}
We shall give a characterization of the boundary values $R(0\pm \i 0)$
of Theorem~\ref{thm:lap} in terms of ``radiation conditions'' in a
Besov space. This may be seen as a zero-energy analogue of the
(strictly) positive-energy result \cite[Theorem 30.2.10]{Ho2} which is
valid for a larger class of potentials. Under less general conditions
than Condition~\ref{cond:lap} such a result was proved in \cite{DS3},
see \cite[Proposition 4.10] {DS3}.

We recall some bounds of \cite{FS}. First it is convenient to change
the definition of the symbols $f,a$ and $b$ used in Subsection \ref{subsec: Concrete Besov
spaces}. Specifically  we  define here and henceforth $f$ by
\eqref{eq:fsublambda} with $K$ given  by \eqref{eq:26}. We define the symbols $a$ and $b$ by
\eqref{eq:def_a0_b} with this value of $K$ and modify similarly the
metric $g$. (The symbol class $S_{\unif}(m_{|z|},
g_{|z|})$ is independent of the constant $K>0$ however.) Let us note the following
modification of \eqref{eq:25}
\begin{equation}\label{eq:25b}
  f_{|z|}^{-2}(x)\big |V_1(x)-z\big |\leq C_0':=\max(C_0/K,1).
\end{equation}  Consider  real-valued $\chi_-\in C^\infty_c(\R)$  such that
$\chi_-(t)=1$  in a neighbourhood of $[0,C_0']$
and
such that $\chi'_-(t)\leq 0$ for $t>0$. Let $\chi_+=1-\chi_-$. Then we obtain
obvious modifications of \eqref{eq:15} and \eqref{eq:16}, in fact we
could replace the power $\inp{x}^{-1}$ occuring to the right in
\eqref{eq:15} and \eqref{eq:16} by any negative power of $\inp{x}$ by
the same arguments used for \eqref{eq:15} and \eqref{eq:16} and we
could distribute powers of $f$ differently. Whence in particular the  pseudodifferential calculus leads
to  the following result.
\begin{lemma}\label{lemma:som-energy}
  Let $\chi_+$ be given as above. Then for  all $n\in \N$ there exist 
  \begin{equation*}
   s_z
  ^{\lef},t_z
  ^{\lef}\in S_{\unif}((a_{|z|}+1)^{-1}, g_{|z|})\subseteq
  S_{\unif}(1, g_{|z|}) 
  \end{equation*}
  such that with $S_z
  ^{\lef}=\Opw(s_z
  ^{\lef})$ and $T_z ^{\lef}=\Opw(t_z
  ^{\lef})$
  \begin{equation}
    \label{eq:27}
   \Opw(\chi_+(a_{|z|})=S_z
  ^{\lef}f_{|z|}^{-2}(H-V_2-z)+f_{|z|}^{-1/2}\inp{x}^{-n}T_z
  ^{\lef}\inp{x}^{-n}f_{|z|}^{1/2}.
  \end{equation}
\end{lemma}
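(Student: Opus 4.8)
The plan is to derive \eqref{eq:27} from the pseudodifferential calculus by the same parametrix construction that produced \eqref{eq:15}, carefully tracking how the powers of $f_{|z|}$ and $\inp{x}$ are distributed. First I would set $h_1:=\xi^2+V_1(x)$, so that $h_1-z\in S_{\unif}(f_{|z|}^2(a_{|z|}+1),g_{|z|})$ and $H-V_2-z=\Opw(h_1-z)$ since $V_1$ is a symbol (this uses all the bounds of Condition~\ref{cond:lap}~\ref{it:assumption2}). On the support of $\chi_+(a_{|z|})$ we have, by \eqref{eq:25b}, $a_{|z|}\geq C_0'+1$, hence $h_1-z=f_{|z|}^2(a_{|z|}+V_1/f_{|z|}^2-z/f_{|z|}^2)$ is elliptic of order $f_{|z|}^2(a_{|z|}+1)$ in the class; indeed $|h_1-z|\geq f_{|z|}^2(a_{|z|}-C_0')\geq c\,f_{|z|}^2(a_{|z|}+1)$ there. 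Therefore the symbol $r_z:=(h_1-z)^{-1}\chi_+(a_{|z|})$ lies in $S_{\unif}(f_{|z|}^{-2}(a_{|z|}+1)^{-1},g_{|z|})$, exactly as in the text preceding \eqref{eq:15}.

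Next I would run the standard left-parametrix iteration: by the composition calculus,
\[
\Opw(r_z)\,f_{|z|}^{-2}\,\Opw(h_1-z)=\Opw(\chi_+(a_{|z|}))+\Opw(e_z^{(1)}),
\]
where the ``Planck constant'' of the class is $\inp{x}^{-1}f_{|z|}^{-1}$ (with uniform Planck constant $\inp{x}^{\mu/2-1}$), so the error $e_z^{(1)}\in S_{\unif}((\inp{x}f_{|z|})^{-1},g_{|z|})$; note $f_{|z|}^{-2}\Opw(h_1-z)=f_{|z|}^{-2}(H-V_2-z)$. Iterating $k$ times — correcting the symbol $r_z$ by successively lower-order terms, all still supported where $\chi_+\neq0$ and hence of the form $S_z^{\lef}f_{|z|}^{-2}(H-V_2-z)$ — improves the remainder to lie in $S_{\unif}((\inp{x}f_{|z|})^{-k},g_{|z|})$. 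Choosing $k$ with $k(1-\mu/2)\geq 2n+1$ (so that $(\inp{x}f_{|z|})^{-k}\leq C f_{|z|}^{-1}\inp{x}^{-2n}$, using $f_{|z|}\geq\inp{x}^{-\mu/2}$ up to constants and $f_{|z|}\geq K^{1/2}\inp{x}^{-\mu/2}$), the remainder operator can be written as $f_{|z|}^{-1/2}\inp{x}^{-n}T_z^{\lef}\inp{x}^{-n}f_{|z|}^{1/2}$ with $t_z^{\lef}\in S_{\unif}(1,g_{|z|})$: one peels off the factors $f_{|z|}^{\pm1/2}$ and $\inp{x}^{-n}$ using that these are themselves in $\Psi_{\unif}$ of the appropriate orders (cf.\ \eqref{eq:bounda}) and absorbing the commutator errors, which are lower order, back into $T_z^{\lef}$. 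Finally, to get $t_z^{\lef},s_z^{\lef}\in S_{\unif}((a_{|z|}+1)^{-1},g_{|z|})$ rather than merely $S_{\unif}(1,g_{|z|})$, I would observe that $r_z$ already gains the factor $(a_{|z|}+1)^{-1}$, that this factor is preserved under composition with the other symbols appearing (each of which is $O(1)$ in the $a$-variable on the relevant support, since $\chi_+$ localizes to large $a$), and that the remainder symbols inherit it as well; this is exactly the refinement indicated by the inclusion $S_{\unif}((a_{|z|}+1)^{-1},g_{|z|})\subseteq S_{\unif}(1,g_{|z|})$ in the statement.

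The main obstacle I anticipate is purely bookkeeping: keeping the distribution of powers of $f_{|z|}$ and $\inp{x}$ symmetric and uniform in $z\in\overline{\Gamma_\theta}$ through the iteration, so that the final remainder genuinely has the symmetric form $f_{|z|}^{-1/2}\inp{x}^{-n}T_z^{\lef}\inp{x}^{-n}f_{|z|}^{1/2}$ with a symbol in the stated class — the text already flags that ``we could distribute powers of $f$ differently'' and ``replace the power $\inp{x}^{-1}$\dots by any negative power,'' so the real content is checking that the calculus of \cite[Subsection~4.2]{FS} is stable under these reallocations with constants uniform in the parameter $\lambda=|z|$. Everything else — ellipticity of $h_1-z$ on $\{\chi_+\neq0\}$, the order-counting with Planck constant $\inp{x}^{\mu/2-1}$, and the choice of iteration length $k$ — is routine given the machinery recalled in the excerpt.
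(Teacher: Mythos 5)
Your overall strategy is exactly the paper's: the paper proves \eqref{eq:27} by the same parametrix iteration it sketched for \eqref{eq:15}--\eqref{eq:16}, remarking only that one may replace $\inp{x}^{-1}$ by any negative power of $\inp{x}$ and redistribute powers of $f$. However, there is a concrete slip in precisely the bookkeeping you flag as the main content. With $r_z=(h_1-z)^{-1}\chi_+(a_{|z|})$ the principal symbol of $\Opw(r_z)\,f_{|z|}^{-2}\,\Opw(h_1-z)$ is $r_z\cdot f_{|z|}^{-2}(h_1-z)=f_{|z|}^{-2}\chi_+(a_{|z|})$, not $\chi_+(a_{|z|})$: your displayed identity double-counts the factor $f_{|z|}^{-2}$, and the discrepancy $(1-f_{|z|}^{-2})\chi_+(a_{|z|})$ is not a lower-order error. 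Consistently with the form of \eqref{eq:27}, the left factor must be built from $s_z^{\lef}\approx f_{|z|}^{2}r_z\in S_{\unif}((a_{|z|}+1)^{-1},g_{|z|})$ (plus corrections), so that $S_z^{\lef}\,f_{|z|}^{-2}(H-V_2-z)$ has principal symbol $\chi_+(a_{|z|})$; equivalently, iterate with $\Opw(r_z)\Opw(h_1-z)$ and only afterwards insert $f_{|z|}^{-2}f_{|z|}^{2}$. Note that your closing remark, that $s_z^{\lef}$ lies in $S_{\unif}((a_{|z|}+1)^{-1},g_{|z|})$ ``because $r_z$ already gains $(a_{|z|}+1)^{-1}$'', is inconsistent with keeping the explicit $f_{|z|}^{-2}$ in the identity: $r_z$ itself is only in $S_{\unif}(f_{|z|}^{-2}(a_{|z|}+1)^{-1},g_{|z|})$, and $f_{|z|}^{-2}$ is unbounded (of size $\inp{x}^{\mu}$ for small $|z|$), so one must decide where that factor lives; the fix is one line, but as written the step fails.

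A smaller point: your justification that the remainder $t_z^{\lef}$ also carries the weight $(a_{|z|}+1)^{-1}$ (``each factor is $O(1)$ in the $a$-variable'') is too coarse — a naive count of the first Weyl correction, e.g.\ $\partial_\xi s\cdot\partial_x q$ with $q=f_{|z|}^{-2}(h_1-z)$, only yields a gain $(a_{|z|}+1)^{-1/2}$. The clean way to see the full gain is that the leading correction is proportional to $\{s,q\}$ with $s=q^{-1}\chi_+(a_{|z|})$ modulo better terms, and $\{q^{-1}\chi_+(a_{|z|}),q\}=q^{-1}\chi_+'(a_{|z|})\{a_{|z|},q\}$ is compactly supported in $a_{|z|}$; the higher-order terms in the expansion then carry at least $(a_{|z|}+1)^{-1}$ by direct counting. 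The remaining ingredients — ellipticity of $h_1-z$ on $\supp\chi_+$, the uniform Planck constant $\inp{x}^{\mu/2-1}$, the choice $k(1-\mu/2)\geq 2n$, and peeling off $f_{|z|}^{\pm1/2}$ and $\inp{x}^{-n}$ via \eqref{eq:bounda} — are handled as in the paper.
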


We combine \eqref{eq:limitbound}, \eqref{eq:20}, \eqref{eq:27} and the
calculus to conclude  the
following generalization of \cite[(4.5b)]{FS}. (The method of proof in
\cite{FS} is at this point more complicated.) The continuity assertion
below is shown as in \cite[Subsection
4.1]{DS3}. 
\begin{lemma}
  \label{lemma:high_energy_somm-radi-cond} Let $\chi_-$ and $\chi_+$
  be given as above. Let $\delta>0$ be given as in
  Condition~\ref{cond:lap}~\ref{it:assumption4}. Then for all $s>1/2$,
  $t\in[0,\delta+1-\mu/2] $ there exists $C>0$ such that for $z \in
  \Gamma_{\theta}$
\begin{equation}
  \label{eq:limitboundb}
 \|T_+(z)| \leq C;\;T_+(z):=
\langle x
\rangle^{t-s} f_{|z|}^{1/2 }\Opw(\chi_+(a_{|z|})R(z)
f_{|z|}^{1/2 }\langle x \rangle^{-t-s}.
\end{equation} Moreover there exists $T_+(0+\i0)=\lim_{z \rightarrow 0, z \in \Gamma_{\theta}}
T_+(z)$ in
${\mathcal B}(L^2({\mathbb R}^d))$.
\end{lemma}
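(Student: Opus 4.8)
The plan is to combine the three ingredients already assembled: the weighted resolvent bound \eqref{eq:limitbound} of Theorem \ref{thm:lap}, the decomposition \eqref{eq:27} of $\Opw(\chi_+(a_{|z|}))$ from Lemma \ref{lemma:som-energy}, and the resolvent identities \eqref{eq:20}. First I would write, using \eqref{eq:27} with $H-V_2-z$ replaced by $H-z$ (the discrepancy $V_2$ being absorbed into an error via Condition \ref{cond:lap} \ref{it:assumption4}, which gives $\inp{x}^{2s_0}V_2$ bounded, equivalently $f_{|z|}^{-1}\inp{x}^{-\mu/2}\inp{x}^{\delta}V_2$ bounded), that
\begin{equation*}
  \Opw(\chi_+(a_{|z|}))R(z)=S_z^{\lef}f_{|z|}^{-2}+f_{|z|}^{-1/2}\inp{x}^{-n}T_z^{\lef}\inp{x}^{-n}f_{|z|}^{1/2}R(z)+(\text{error from }V_2).
\end{equation*}
Conjugating by $f_{|z|}^{1/2}\inp{x}^{t-s}$ on the left and $f_{|z|}^{1/2}\inp{x}^{-t-s}$ on the right, the first term becomes $f_{|z|}^{1/2}\inp{x}^{t-s}S_z^{\lef}f_{|z|}^{-3/2}\inp{x}^{-t-s}$, which lies in $\Psi_{\unif}$ of a negative power of $\inp{x}$ (since $s_z^{\lef}\in S_{\unif}((a_{|z|}+1)^{-1},g_{|z|})$ and $f_{|z|}^{-1}\le 1$), hence is uniformly bounded provided $2s>0$, which holds. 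The main term to control is thus
\begin{equation*}
  \inp{x}^{t-s}\inp{x}^{-n}f_{|z|}^{1/2}\!\left(f_{|z|}^{1/2}T_z^{\lef}f_{|z|}^{-1/2}\right)\!\inp{x}^{-n}f_{|z|}^{1/2}R(z)f_{|z|}^{1/2}\inp{x}^{-t-s},
\end{equation*}
and here one inserts \eqref{eq:limitbound}: the middle factor $f_{|z|}^{1/2}T_z^{\lef}f_{|z|}^{-1/2}\in\Psi_{\unif}(1,g_{|z|})$ is uniformly bounded by the calculus and \eqref{eq:bounda}, and $\inp{x}^{-n}f_{|z|}^{1/2}\cdots f_{|z|}^{1/2}R(z)f_{|z|}^{1/2}\cdots$ is of the form $\inp{x}^{-n}f_{|z|}^{1/2}\cdot\big(f_{|z|}^{1/2}\inp{x}^{?}R(z)\inp{x}^{?}f_{|z|}^{1/2}\big)\cdot\inp{x}^{-n}$; choosing $n$ large enough that the remaining powers of $\inp{x}$ are $>1/2$ on both sides makes \eqref{eq:limitbound} directly applicable. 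This handles the $f_{|z|}^{1/2}$ factors and the weight bookkeeping; the constraint $t\in[0,\delta+1-\mu/2]$ enters precisely when estimating the $V_2$-error term, where one resolvent factor $\inp{x}^{-t-s}$ must dominate the loss $\inp{x}^{-2s_0-\delta}=\inp{x}^{-1-\mu/2-\delta}$ after commuting $f_{|z|}^{1/2}$ through, and the upper endpoint $t=\delta+1-\mu/2$ is where this just barely closes.

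For the existence of the limit $T_+(0+\i0)$, the plan is to reduce to the uniform Hölder continuity of $T(\cdot)=\inp{x}^{-s}R(\cdot)\inp{x}^{-s}$ already granted by Theorem \ref{thm:lap}. Writing $T_+(z)$ through \eqref{eq:27} again, the symbolic term $f_{|z|}^{1/2}\inp{x}^{t-s}S_z^{\lef}f_{|z|}^{-3/2}\inp{x}^{-t-s}$ depends on $z$ only through $|z|$ and through the smoothly-$|z|$-dependent symbol, so its norm continuity (indeed Lipschitz in $|z|$ on $\Gamma_\theta$, hence Hölder) follows from differentiating the symbol in $\lambda=|z|$ and the calculus; the same applies to the $V_2$-error. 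The genuinely resolvent-dependent term is, up to uniformly-bounded $z$-Hölder-continuous factors of the form $\inp{x}^{a}f_{|z|}^{b}\Opw(\cdots)$, equal to $\inp{x}^{a'}R(z)\inp{x}^{a''}$ sandwiched between such factors; using a resolvent identity \eqref{eq:20} to trade the extra powers of $\inp{x}$ for a fixed $\inp{x}^{-s}R(\i)\inp{x}^{something}$ and a single copy of $T(z)$, the convergence follows from that of $T(z)$ as $z\to0$ in $\Gamma_\theta$, exactly as in \cite[Subsection 4.1]{DS3}.

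The step I expect to be the main obstacle is the bookkeeping of the $f_{|z|}$ and $\inp{x}$ weights in the error term coming from replacing $H-V_2-z$ by $H-z$ in \eqref{eq:27}: one must check that after commuting $f_{|z|}^{1/2}$ and the symbolic factors past $V_2$ (which costs at most powers of the uniform Planck constant $\inp{x}^{\mu/2-1}$ per commutator, harmless since $\mu<2$) the net weight on the two sides of the remaining resolvent $R(z)$ is still admissible for \eqref{eq:limitbound}, i.e. strictly greater than $1/2$ in the $\inp{x}$-scale and paired with $f_{|z|}^{1/2}$, and that this survives multiplication by $\inp{x}^{t}$ on the outgoing side for all $t$ up to $\delta+1-\mu/2$. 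Everything else — the calculus statements placing the symbols in $S_{\unif}$, the iterated improvement of remainders by negative powers of $\inp{x}f_{|z|}$ as indicated after \eqref{eq:16}, and the continuity argument — is routine given the machinery already set up in Subsection \ref{subsec: Concrete Besov spaces} and in \cite{FS,DS3}.
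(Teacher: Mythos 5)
Your proposal follows essentially the paper's own (very terse) proof, which simply combines \eqref{eq:limitbound}, the resolvent identities \eqref{eq:20}, the decomposition \eqref{eq:27} and the pseudodifferential calculus, with the continuity assertion taken over from \cite[Subsection 4.1]{DS3} -- exactly your architecture, including the correct identification of the $V_2$-error as the sole source of the constraint $t\le\delta+1-\mu/2$. Two small slips to fix when writing it out: $f_{|z|}^{-1}\le 1$ is false (one only has $f_{|z|}^{-1}\le C\inp{x}^{\mu/2}$, which still suffices because $2s>1>\mu/2$), and $\inp{x}^{2s_0}V_2$ is not a bounded multiplication operator globally -- Condition \ref{cond:lap} \ref{it:assumption4} gives $\inp{x}^{2s_0+\delta}V_2$ bounded only for $|x|>R$, the local singularities being handled by letting $V_2$ hit $R(\i)$ first through \eqref{eq:20}.
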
 
We also have a generalization of  \cite[(4.5d)]{FS}. For that we
consider an arbitrarily given real-valued 
$\tilde \chi_-\in C^\infty_c((-\infty, 1))$. Again we use the $\delta$
from Condition \ref{cond:lap}.
\begin{lemma}
  \label{lemma:low_energy_somm-radi-cond} Let $\chi_-$, $\chi_+$ and $\tilde
  \chi_-$ be given as
  above. Then for all $s>1/2$, $t\in[0,\delta) $  there
  exists $C>0$ such that for $z \in \Gamma_{\theta}$
\begin{equation}
  \label{eq:limitboundbb}
\|
T_-(z)| \leq C;\;T_-(z):=\langle x
\rangle^{t-s} f_{|z|}^{1/2 }\Opw(\chi_-(a_{|z|})\tilde \chi_-(b_{|z|}))R(z)
f_{|z|}^{1/2 }\langle x \rangle^{-t-s}.
\end{equation} Moreover there exists $T_-(0+\i0)=\lim_{z \rightarrow 0, z \in \Gamma_{\theta}}
T_-(z)$ in
${\mathcal B}(L^2({\mathbb R}^d))$.
\end{lemma}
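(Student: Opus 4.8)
The plan is to deduce \eqref{eq:limitboundbb} from \eqref{eq:limitbound}, Lemma~\ref{lemma:high_energy_somm-radi-cond}, the decomposition \eqref{eq:27}, the second resolvent identity and a positive--commutator (Mourre--type) estimate performed with the regularized operators \eqref{eq:def_R_zeta}, in the spirit of \cite[\S4]{FS} and \cite[\S4.1]{DS3}. Once the uniform bound is established, $T_-(0+\i0)=\lim_{z\to0}T_-(z)$ follows from it together with the norm--H\"older continuity of $z\mapsto\langle x\rangle^{-s'}R(z)\langle x\rangle^{-s'}$ (Theorem~\ref{thm:lap}) and the continuity in $z$ of $\chi_-(a_{|z|})$, $\tilde\chi_-(b_{|z|})$, $f_{|z|}$, as in \cite[\S4.1]{DS3}. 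The case $t=0$ is immediate, since $\chi_-(a_{|z|})\tilde\chi_-(b_{|z|})\in S_{\unif}(1,g_{|z|})$ so that $\Opw(\chi_-(a_{|z|})\tilde\chi_-(b_{|z|}))$ is uniformly bounded on $L^2(\R^d)$ and the claim reduces to \eqref{eq:limitbound}. Fix $s>1/2$, $t\in(0,\delta)$.

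\emph{Removing $V_2$.} Write $R(z)=R_1(z)-R_1(z)V_2R(z)$ with $R_1(z)=(H-V_2-z)^{-1}=(-\Delta+V_1-z)^{-1}$. Inserting weights $\langle x\rangle^{\pm\sigma}$, $\langle x\rangle^{\pm\sigma'}$ and factors $f_{|z|}^{\pm1/2}$, the $V_2$--term of $T_-(z)$ becomes a product of: (i)~the analogue of $T_-(z)$ with $R_1(z)$ in place of $R(z)$ and right weight $\langle x\rangle^{-\sigma}$; (ii)~$\langle x\rangle^{\sigma}f_{|z|}^{-1/2}V_2f_{|z|}^{-1/2}\langle x\rangle^{\sigma'}$, bounded provided $\sigma+\sigma'\le1+\delta$ since $|V_2|\le C\langle x\rangle^{-2s_0-\delta}$ for $|x|>R$, $f_{|z|}^{-1}\le C\langle x\rangle^{\mu/2}$, and the local singularity of $V_2$ is absorbed via the compactness of $V_2(-\Delta+\i)^{-1}$ as in the proof of Theorem~\ref{prop:resolvent-bounds}; (iii)~$\langle x\rangle^{-\sigma'}f_{|z|}^{1/2}R(z)f_{|z|}^{1/2}\langle x\rangle^{-t-s}$, bounded by \eqref{eq:limitbound} for $\sigma'>1/2$. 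A direct check shows that for $s>1/2$, $t<\delta$ one can choose $\sigma'>1/2$ and $\sigma>\max(\tfrac12,1+t-s,s-t)$ with $\sigma+\sigma'\le1+\delta$, and that in (i) only exponents $t'<\delta$ occur. Hence it suffices to prove \eqref{eq:limitboundbb} with $R_1(z)$ in place of $R(z)$, for all $s>1/2$ and $t\in[0,\delta)$; the gain is that $V_1$ is a uniform symbol, so the full calculus of \cite[\S4.2]{FS} and the bounds \ref{it:assumption2} are available.

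\emph{Commutator estimate for $H_1:=-\Delta+V_1$.} Argue by induction, assuming the bound for $R_1(z)$ at every exponent $t_0\in[0,t)$ (base case $t_0=0$ being \eqref{eq:limitbound}). Writing $\chi_-=1-\chi_+$, Lemma~\ref{lemma:high_energy_somm-radi-cond} and the calculus (moving $\tilde\chi_-(b_{|z|})$, $f_{|z|}^{\pm1/2}$ and $\langle x\rangle^{\pm(t-s)}$ past the bounded operator $\Opw(\chi_+(a_{|z|})\tilde\chi_-(b_{|z|}))$, modulo remainders in $\Psi_{\unif}((\inp{x}f_{|z|})^{-N},g_{|z|})$) reduce us to bounding $\langle x\rangle^{t-s}f_{|z|}^{1/2}\Opw(\chi_-(a_{|z|})\tilde\chi_-(b_{|z|}))R_1(z)f_{|z|}^{1/2}\langle x\rangle^{-t-s}$. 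Put $u=R_{1,z}(\epsilon)f_{|z|}^{1/2}\langle x\rangle^{-t-s}\phi$ with $\epsilon\,\Im z>0$ small, so that $(H_1-z)u=f_{|z|}^{1/2}\langle x\rangle^{-t-s}\phi+\i\epsilon\,\i[H_1,A]u$ and $\|f_{|z|}u\|$ is controlled by \eqref{eq:quadratic} (with a loss $|\epsilon|^{-1/2}$ recovered by integration in $\epsilon$), and test $\langle u,\i[H_1,M]u\rangle=2\Im\langle(H_1-z)u,Mu\rangle$ against the self-adjoint observable $M=\Opw(m)$ with
\[
m=m_{z,\kappa}=\langle x\rangle^{2t}\,(1+\kappa\langle x\rangle)^{-2t}\;\chi_-(a_{|z|})^2\;\Theta(b_{|z|}),\qquad\kappa>0,
\]
where $\Theta\in C^\infty(\R)$ is bounded, nondecreasing, $\Theta'\ge c\,\tilde\chi_-^2$, $\Theta'$ supported near $\supp\tilde\chi_-$, and $\Theta$ chosen (as in \cite[Section~30.2]{Ho2}) so that $\Theta'/\Theta$ dominates $|b|$ where needed; the factor $(1+\kappa\langle x\rangle)^{-2t}$ makes $M$ bounded and is removed by $\kappa\downarrow0$ at the end. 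By \eqref{eq:7} and the calculus the leading part of $\i[H_1,M]$ is $\Opw(\{h,m\})$, $h=\xi^2+V_1$; its dominant positive term is $\langle x\rangle^{2t}(1+\kappa\langle x\rangle)^{-2t}\chi_-(a_{|z|})^2\tilde\chi_-(b_{|z|})^2\{h,b_{|z|}\}$. The decisive point is that, from the flow identity $\dot b=(2h+W_1)/(f_{|z|}\langle x\rangle)-(2-\mu)b^2f_{|z|}/\langle x\rangle$ with $W_1=-2V_1-x\cdot\nabla V_1$, Condition~\ref{cond:lap}\ref{it:assumption1},\ref{it:assumption3} (equivalently \eqref{eq:6} with $V_1$ for $V$), and the normalization \eqref{eq:26} of $K$, one has $\{h,b_{|z|}\}\ge c'f_{|z|}\langle x\rangle^{-1}$ on $\{h=z\}\cap\{b_{|z|}^2\le1\}$ -- it is this threshold $|b_{|z|}|=1$ that fixes the support condition on $\tilde\chi_-$; the more incoming part $\{b_{|z|}<-1\}\cap\{h=z\}$ is incorporated as in \cite[\S4]{DS3}, \cite[Section~30.2]{Ho2}. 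The $\chi_-'$--term is supported where $a_{|z|}$ lies away from its range on $\{h=z\}$ (by the choice of $\chi_-$, cf.~\eqref{eq:25b}), hence off the characteristic set, and is absorbed using the equation for $u$ and \eqref{eq:quadratic}; likewise the off--characteristic part of the $\{h,b_{|z|}\}$--term and the $\epsilon$--term.

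\emph{Weight term and conclusion.} The remaining part of $\{h,m\}$ comes from differentiating $\langle x\rangle^{2t}(1+\kappa\langle x\rangle)^{-2t}$; it is comparable to $t\,b_{|z|}f_{|z|}\langle x\rangle^{2t-1}(1+\kappa\langle x\rangle)^{-2t}\chi_-(a_{|z|})^2\Theta(b_{|z|})$, which on $\{b_{|z|}<0\}$ has the wrong sign and lies at the \emph{same} scale $\langle x\rangle^{2t-1}f_{|z|}$ as the good term. Handling it is the core: on $\supp\Theta'$ it is dominated by the good term once the constants in the construction of $\Theta$ are fixed appropriately; where $\Theta'=0$ and $b_{|z|}>0$ it has the good sign; the residual piece (carrying the strictly lower power $\langle x\rangle^{2t-1}$) is bounded, uniformly in $z,\epsilon,\kappa$, by the inductive hypothesis at a suitable $t_0<t$. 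Collecting everything gives $\|\langle x\rangle^{t-1/2}f_{|z|}^{1/2}\Opw(\chi_-(a_{|z|})\tilde\chi_-(b_{|z|}))u\|\le C\|\phi\|$; letting $\epsilon\downarrow0$, then $\kappa\downarrow0$ (monotone convergence), and using $\langle x\rangle^{t-s}\le\langle x\rangle^{t-1/2}$, one obtains \eqref{eq:limitboundbb} for $R_1(z)$, hence for $R(z)$ by the second step, and the convergence statement as above. I expect the main obstacle to be the sign--indefinite weight term -- the design of $\Theta$ and of the incoming--region correction making it absorbable -- together with making the bootstrap and the limits $\epsilon,\kappa\downarrow0$ rigorous within the uniform pseudodifferential calculus; a secondary one is pinning down the exact positivity region of $\{h,b_{|z|}\}$ on the characteristic set under \eqref{eq:26}.
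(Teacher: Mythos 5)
Your overall strategy -- a positive--commutator/propagation estimate with an observable of the form $\Opw\bigl(\langle x\rangle^{2t}\chi_-^2(a_{|z|})\Theta(b_{|z|})\bigr)$, driven by the Poisson-bracket positivity \eqref{eq:32} and the normalization \eqref{eq:26}, plus a resolvent-identity reduction to $V_1$ and a bootstrap in $t$ -- is exactly the method of the sources the paper itself invokes: the paper gives no self-contained proof of this lemma, but declares the bound ``essentially contained'' in \cite[(4.5d)]{FS}, with the needed modification explained in \cite[Subsection 4.1]{DS3}, and the continuity assertion proved as there. So in spirit you are on the paper's track. However, your reconstruction has a genuine gap at precisely the step you flag as the core. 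The sign-indefinite term produced by differentiating the weight is $\sim t\,b\,f_{|z|}\langle x\rangle^{2t-1}\chi_-^2\Theta(b)$, which (as you note) is of the \emph{same} order $f_{|z|}\langle x\rangle^{2t-1}$ as the good term $\langle x\rangle^{2t}\Theta'(b)\{h,b\}$. Your proposal to bound its ``residual piece'' by the inductive hypothesis at some $t_0<t$ cannot work: testing against $u=R_1(z)f_{|z|}^{1/2}\langle x\rangle^{-t-s}\phi$, that piece requires control of $\|\langle x\rangle^{t-1/2}f_{|z|}^{1/2}\Opw(\cdots)u\|$, while the inductive bound at exponent $t_0\le t$ only controls $\|\langle x\rangle^{t_0-s}f_{|z|}^{1/2}\Opw(\cdots)u\|$ with $t_0-s\le t-s<t-1/2$ because $s>1/2$; matching the weights would force $t_0\ge t+s-1/2>t$, i.e. the argument is circular. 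This term must instead be absorbed \emph{pointwise} by the good term through the design of the symbol -- e.g. an exponential factor $\e^{-K_0b}$ with $K_0$ fixed large, exactly as in Step I of the paper's proof of Proposition \ref{prop:propa_sing} (the choice $2K_0c\ge |Y_\kappa|+2$) -- and this absorption is only available where $1-b^2$ is bounded below, i.e. on supports strictly inside $(-1,1)$.

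That restriction exposes the second gap: the lemma allows $\tilde\chi_-\in C^\infty_c((-\infty,1))$ with support reaching arbitrarily far below $b=-1$, and on the characteristic set the region $b\le-1$ is not empty (there $1-b^2\le0$ and the positivity in \eqref{eq:32} degenerates), so no choice of $\Theta$ can make the commutator sign-definite there. The strongly incoming part therefore needs a separate mechanism -- the sign of $\Im z$ (outgoing character of $R(z)$), or a prior incoming estimate uniform as $z\to0$ -- and this is the genuinely nontrivial content of \cite[Theorem 4.1]{FS}, which you only gesture at by citation. Minor further points: the invocation of the $\epsilon$-regularization \eqref{eq:def_R_zeta} and \eqref{eq:7} is unnecessary here (the quadratic estimate and \eqref{eq:limitbound} already control $u=R(z)\psi$ for $\Im z\neq0$, and the term $2\Im z\langle u,Mu\rangle$ has a usable sign), and the parameter bookkeeping in your $V_2$-reduction only closes after first reducing to $s$ near $1/2$ (harmless, since decreasing $s$ strengthens the left weight and weakens nothing). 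As written, then, the proposal correctly identifies the architecture but does not actually close the two points on which the cited proof turns.
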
 
Note that  the symbol $\chi_-(a_{|z|})\tilde \chi_-(b_{|z|})\in S_{\unif}(1,
g_{|z|})$ while this is not the case for the symbol $\tilde
\chi_-(b_{|z|}$) itself,
and that the important condition on the support of $\tilde \chi_-$
(yielding the localization $b_{|z|}\leq 1-\sigma<1$ for some $\sigma>0$) is
equivalent to the support condition used for the symbol $b$  in
\cite[(4.5d)]{FS} (this is a consequence of   the normalization \eqref{eq:26}). Taking
this remark into account the bound
\eqref{eq:limitboundbb} is essentially contained in \cite{FS}. The
necessary modification of the proof of  the weaker result
\cite[(4.5d)]{FS}  is explained in the beginning of \cite[Subsection
4.1]{DS3}. This explanation will not be repeated  here. Note that a version of
\eqref{eq:limitboundbb} is in fact stated in  \cite{DS3}  as 
\cite[(4.3d)]{DS3}. The continuity assertion of Lemma
\ref{lemma:low_energy_somm-radi-cond} is shown as in \cite[Subsection
4.1]{DS3}. Also we remark that there are versions of Lemmas
\ref{lemma:high_energy_somm-radi-cond} and
\ref{lemma:low_energy_somm-radi-cond} with the localization factors
put to the right of the resolvent, cf. \cite[Theorem 4.1]{FS}, however
these statements will not be used in this paper.

We shall focus on the energy zero for which $f=f_0=\sqrt K\inp
{x}^{-\mu/2}$  and introduce the Besov
space $B^
\mu:=B(\inp{x}^{2s_0})=B(|x|^{2s_0})$ adapted to this energy. Recall
here and henceforth 
the notation of Condition
\ref{cond:lap}, $s_0=1/2+\mu/4$.
  Note that $B^
\mu=f_0^{1/2}B(\inp{x})$, cf. Lemma
\ref{lemma:resolvent-boundsoo}. The corresponding  spaces $(B^
\mu)^*$ and $(B^
\mu)^*_0$ are  characterized as follows, cf. \eqref{eq:2p} and \eqref{eq:41}.
\begin{subequations}
\begin{align}
  &u\in (B^
\mu)^* \Leftrightarrow  u\in L^2_{\rm loc}(\R^d)\mand \sup_{R>1} R^{-s_0}\|F(|x|<R)u\|<\infty,\label{eq:28}\\
&u\in (B^
\mu)^*_0 \Leftrightarrow u\in  L^2_{\rm loc}(\R^d) \mand \lim_{R\to \infty} R^{-s_0}\|F(|x|<R)u\|=0. \label{eq:29}
\end{align} 
In fact the expression to the right in
\eqref{eq:28}  defines a norm on $B(\inp{x}^{2s_0})^*$ which is
equivalent to the canonical norm of \eqref{eq:2}. Yet another
equivalent norm on $(B^
\mu)^* $ is given in terms of an arbitrary  $\epsilon\in (0,1)$ by  the expression $\|F(|x|\leq 1)u\|+\sup_{R>1}
R^{-s_0}\|F(\epsilon R\leq |x|<R)u\|$, and similarly 
\begin{equation}
  \label{eq:42}
  u\in (B^
\mu)^*_0 \Leftrightarrow u\in  L^2_{\rm loc}(\R^d) \mand \lim_{R\to \infty} R^{-s_0}\|F(\epsilon R\leq |x|<R)u\|=0.
\end{equation}
\end{subequations} 

Henceforth
  we  abbreviate $L^2_{s}(|x|)=L^2_{s}$ for any $s\in\R$. The
  corresponding norm is denoted by $\|\cdot\|_s$, and we abbreviate
  $L^2_{0}(|x|)=L^2$ and $\|\cdot\|_0=\|\cdot\|$.
Let $L^2_{-\infty}=\cup_{s\in\R}\, L^2_s$.

Due to Theorem  \ref{thm:lap}, Theorem   \ref{prop:resolvent-bounds}, Lemma
\ref{lemma:high_energy_somm-radi-cond}  and Lemma
 \ref{lemma:low_energy_somm-radi-cond} we have 
\begin{proposition}\label{prop:radiation_somm-radi-conds} Let
  $\chi_-$, $\chi_+$   and $\tilde
  \chi_-$ be given as in Lemma
\ref{lemma:low_energy_somm-radi-cond}. For all $v\in  B^
\mu$ there exists the weak-star limit
\begin{subequations}
\begin{equation}\label{eq:30}
  u=R(0 + \i0) v
= \wslim_{z \rightarrow 0, z \in \Gamma_{\theta}}
R(z) v\in (B^
\mu)^*,
\end{equation} and
\begin{equation}\label{eq:31}
  \Opw(\chi_+(a_0))u,\;\Opw(\chi_-(a_0)\tilde \chi_-(b_0))u\in (B^
\mu)^*_0.
\end{equation} 
  \end{subequations}

If $v\in  L^2_{s}$ for some $s>s_0$ we have the following
stronger conclusion compared to \eqref{eq:31},
\begin{equation}\label{eq:31ab}
  \Opw(\chi_+(a_0))u,\;\Opw(\chi_-(a_0)\tilde \chi_-(b_0))u\in L^2_{t} \text{ for all }t < \min(s-s_0,\delta)-s_0.
\end{equation} In particular for any such $v$ we can take
$t=-s_0$ in \eqref{eq:31ab}.
\end{proposition}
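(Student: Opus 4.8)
The plan is to prove Proposition~\ref{prop:radiation_somm-radi-conds} by first establishing \eqref{eq:30}, then \eqref{eq:31}, and finally the stronger weighted statements \eqref{eq:31ab}. For the weak-star limit \eqref{eq:30}: given $v\in B^\mu$, we have $f_0^{-1/2}v\in B(\inp{x})$ since $B^\mu=f_0^{1/2}B(\inp{x})$, so $f_0^{1/2}v\in f_0 B(\inp{x})$ and we can write $R(z)v = f_0^{1/2}\parb{f_{|z|}^{1/2}R(z)f_{|z|}^{1/2}} \parb{f_{|z|}^{-1/2}v} + \text{error}$, where the error comes from replacing $f_0$ by $f_{|z|}$ and is controlled because $f_{|z|}/f_0\to 1$ uniformly on compacts. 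By Theorem~\ref{prop:resolvent-bounds} the middle factor is uniformly bounded from $B(|x|)$ to $B(|x|)^*$, and Theorem~\ref{thm:lap} gives the pointwise (in the weighted sense) convergence of $R(z)$ on $L^2_s$ for $s>s_0$; since $B^\mu\subseteq L^2_s$ for $s$ close enough to $s_0$, a standard density-plus-uniform-bound argument shows the sequence $R(z)v$ is weak-star convergent in $(B^\mu)^*=B(|x|^{2s_0})^*$. The fact that $u$ solves $Hu=v$ distributionally follows from $HR(z)v = v + zR(z)v$ and $zR(z)v\to 0$ weakly since $\|zR(z)v\|_{-s}\to 0$.

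For \eqref{eq:31}, the point is that Lemmas~\ref{lemma:high_energy_somm-radi-cond} and~\ref{lemma:low_energy_somm-radi-cond} supply \emph{uniform} bounds (and limits) for $T_\pm(z)$ with a \emph{gain} of weights $\inp{x}^{2t}$ for $t$ in a range of positive numbers. Concretely, for the $\chi_+$ term we have $\Opw(\chi_+(a_0))u = \lim_z \Opw(\chi_+(a_{|z|}))R(z)v$ and $\Opw(\chi_+(a_{|z|}))R(z)f_{|z|}^{1/2} = \inp{x}^{s-t}f_{|z|}^{-1/2}T_+(z)\inp{x}^{t+s}$ composed with $f_{|z|}^{1/2}R(z)f_{|z|}^{1/2}$-type pieces; the key is that $T_+(z)$ is bounded on $L^2$ with a weight gain $\inp{x}^{-2t}$ for $t>0$. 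Feeding $v\in B^\mu$ through this, the output lands in $\inp{x}^{-2t}(B^\mu)^* \subseteq (B^\mu)^*_0$ because multiplication by $\inp{x}^{-2t}$ with $t>0$ maps $(B^\mu)^*$ into $(B^\mu)^*_0$ — this is immediate from the characterizations \eqref{eq:28}--\eqref{eq:29}: $R^{-s_0}\|F(|x|<R)\inp{x}^{-2t}w\| \leq R^{-s_0-2t}\cdot R^{2t}\cdot$(tail contributions)$\to 0$, or more carefully using the annular characterization \eqref{eq:42} where the factor $\inp{x}^{-2t}\sim R^{-2t}$ on the annulus $\epsilon R\leq|x|<R$ produces an extra decaying power. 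The same argument with Lemma~\ref{lemma:low_energy_somm-radi-cond} handles the $\chi_-(a_0)\tilde\chi_-(b_0)$ term, using $t\in[0,\delta)$ and in particular any small $t>0$.

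For \eqref{eq:31ab}, when $v\in L^2_s$ with $s>s_0$, we instead route through Theorem~\ref{thm:lap}'s Hölder continuity on $L^2_\sigma$ for $\sigma>s_0$ together with the weighted bounds of Lemmas~\ref{lemma:high_energy_somm-radi-cond} and~\ref{lemma:low_energy_somm-radi-cond}: $\Opw(\chi_+(a_0))u = \inp{x}^{-(t+s_0)}\cdot\inp{x}^{t+s_0}\Opw(\chi_+(a_0))R(0+\i0)v$, and writing the middle as $\inp{x}^{t'-s'}f_0^{1/2}\Opw(\chi_+(a_0))R(0+\i0)f_0^{1/2}\inp{x}^{-t'-s'}$ for appropriate $s'>1/2$, $t'\leq \delta+1-\mu/2$, we gain a total weight, and a bookkeeping of the powers of $f_0=\sqrt K\inp{x}^{-\mu/2}$ together with $v\in L^2_s = \inp{x}^{-s}L^2$ yields membership in $L^2_t$ for every $t < \min(s-s_0,\delta)-s_0$. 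The role of the two competing exponents is that $s-s_0$ comes from the resolvent's ability to absorb the weight of $v$ while $\delta$ (or $\delta+1-\mu/2$ for $\chi_+$) is the intrinsic weight gain of the microlocal cutoff lemmas; the $\chi_+$ case is actually better but we state the common bound. Taking $t=-s_0$ is then always admissible since $\min(s-s_0,\delta)>0$.

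I expect the main obstacle to be the careful tracking of the powers of $f_{|z|}$ versus $f_0$ and of $\inp{x}$ when passing to the limit $z\to 0$ inside the microlocal operators: one must ensure that the error terms produced by replacing $f_{|z|}$ by $f_0$ (which differ on the region $\inp{x}^{-\mu}\lesssim|z|$) are negligible in the relevant Besov/weighted norms, and that the symbol-class statements $\chi_\pm(a_{|z|})\in S_{\unif}(1,g_{|z|})$ survive the limit. This is where the uniformity in $z$ built into $S_{\unif}(m_{|z|},g_{|z|})$ and the already-established convergence in Lemmas~\ref{lemma:high_energy_somm-radi-cond}--\ref{lemma:low_energy_somm-radi-cond} do the real work, reducing the new content here to combining those uniform bounds with the elementary mapping property that multiplication by a negative power of $\inp{x}$ sends $(B^\mu)^*$ into $(B^\mu)^*_0$.
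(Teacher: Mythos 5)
Your treatment of \eqref{eq:30} and \eqref{eq:31ab} follows the paper's route in substance, but your argument for \eqref{eq:31} has a genuine gap. The inclusion you invoke is backwards: for $s>s_0$ one has $L^2_s\subseteq B^\mu\subseteq L^2_{s_0}$, and $B^\mu$ is \emph{not} contained in any $L^2_s$ with $s>s_0$. Consequently a general $v\in B^\mu$ cannot be ``fed through'' Lemmas \ref{lemma:high_energy_somm-radi-cond} and \ref{lemma:low_energy_somm-radi-cond}: unwound, those are weighted $L^2$ bounds saying that $f_{|z|}^{1/2}\Opw(\cdots)R(z)f_{|z|}^{1/2}$ maps $L^2_{t+s}$ into $L^2_{t-s}$ with $s>1/2$, $t\geq 0$, so even for $t=0$ the input must lie in $L^2_\sigma$ for some $\sigma>s_0$, which a general element of $B^\mu$ does not. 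Your assertion that the output then ``lands in $\inp{x}^{-2t}(B^\mu)^*$'' is therefore unsubstantiated (and is in fact a stronger statement than what is true or needed at the Besov endpoint); the correct observation that multiplication by $\inp{x}^{-2t}$, $t>0$, maps $(B^\mu)^*$ into $(B^\mu)^*_0$ consequently does not close the argument.

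The missing idea — and the paper's actual route — is a density argument for \eqref{eq:31}: first prove the statement for $v$ in a dense subspace such as $L^2_1$ (or any $L^2_s$, $s>s_0$), where the two lemmas do apply and give $\Opw(\chi_+(a_0))u,\;\Opw(\chi_-(a_0)\tilde\chi_-(b_0))u\in L^2_{-s_0}\subseteq (B^\mu)^*_0$; then extend to all $v\in B^\mu$ using that $v\mapsto \Opw(\cdots)R(0+\i 0)v$ is bounded from $B^\mu$ to $(B^\mu)^*$ — this combines \eqref{eq:30} (i.e.\ Theorem \ref{prop:resolvent-bounds}) with the boundedness of $\Opw(\chi_+(a_0))$ and $\Opw(\chi_-(a_0)\tilde\chi_-(b_0))$ on $(B^\mu)^*$, which follows from Lemma \ref{lemma:resolvent-bounds} — together with the closedness of $(B^\mu)^*_0$ in $(B^\mu)^*$. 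The same reversed inclusion also creeps into your sketch of \eqref{eq:30}: there the density is in $v$ (approximate $v\in B^\mu$ by elements of $L^2_s$, $s>s_0$, and use the uniform $B^\mu\to(B^\mu)^*$ bound to pass to the limit), not an embedding of $B^\mu$ into a weighted space where Theorem \ref{thm:lap} applies directly; with that correction, your argument for \eqref{eq:30}, and your bookkeeping for \eqref{eq:31ab} with $v\in L^2_s$, $s>s_0$, match the paper.
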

\begin{proof} Note that indeed \eqref{eq:30} follows from Theorems
  \ref{thm:lap} and  \ref{prop:resolvent-bounds} (by a
  density argument). Similarly it suffices for \eqref{eq:31} to show
  the statements for $v\in L^2_1$ (using here that
  $\Opw(\chi_+(a_0))$ and $\Opw(\chi_-(a_0)\tilde \chi_-(b_0))$ are   bounded operators on $(B^
\mu)^*$, cf. Lemma \ref{lemma:resolvent-bounds}). However for such $v$ it follows from  Lemmas
\ref{lemma:high_energy_somm-radi-cond} and 
\ref{lemma:low_energy_somm-radi-cond} that
\begin{equation*}
  \Opw(\chi_+(a_0))u,\;\Opw(\chi_-(a_0)\tilde \chi_-(b_0))u\in  L^2_{-s_0}\subseteq (B^
\mu)^*_0.
\end{equation*} Clearly \eqref{eq:31ab} is also a consequence of
Lemmas \ref{lemma:high_energy_somm-radi-cond} and 
\ref{lemma:low_energy_somm-radi-cond}.
\end{proof}

Note that $s_0$ defines the critical weighted $L^2$ space for the
function $u$ defined in \eqref{eq:30}, more precisely $u\in L^2_s$ for
all $s<-s_0$ while no stronger assertion of this type is given.
Whence this $u\in L^2_{-s_0-\delta}$ and it constitutes a particular
(distributional) solution to the equation $Hu=v$. We study this
situation somewhat generally in the following version of the
``propagation of singularities'' result \cite[Proposition~4.5]{DS3},
see also \cite{Ho3,Me,Va}.
\begin{proposition}
  \label{prop:propa_sing} Let $\chi_-$ be given as in Lemma
\ref{lemma:low_energy_somm-radi-cond}. 
\begin{enumerate}[i)]
\item \label{item:1}
Suppose  $v\in  B^
\mu$, that  $u\in L^2_{-s_0-\delta}$  obeys $Hu=v$,  and the following
localization for some $\sigma >0$
\begin{subequations}
\begin{equation}\label{eq:31bA}
  \Opw(\chi_-(a_0)\tilde \chi_-(b_0))u\in L^2_{-s_0}\mforall \tilde \chi_-\in C^\infty_c((-\infty, \sigma-1)).
\end{equation} Then 
\begin{equation}\label{eq:31bB}
  \Opw(\chi_-(a_0)\tilde \chi_-(b_0))u\in L^2_{-s_0}\mforall \tilde \chi_-\in C^\infty_c((-\infty, 1)).
\end{equation}
  \end{subequations}
  \begin{subequations}
\item \label{item:2}Suppose  $v\in  L^2_{s+2s_0}$ for some $s\in\R$, 
  that $u\in L^2_{s-\delta}$  obeys $Hu=v$,  and the following  localization for
  some $\sigma >0$
\begin{equation}\label{eq:31bC}
  \Opw(\chi_-(a_0)\tilde \chi_-(b_0))u\in L^2_{s} \mforall \tilde \chi_-\in
  C^\infty_c((-\infty, \sigma-1)).
\end{equation} Then 
\begin{equation}\label{eq:31bBC}
  \Opw(\chi_-(a_0)\tilde \chi_-(b_0))u\in L^2_{s}\mforall \tilde \chi_-\in C^\infty_c((-\infty, 1)).
\end{equation}
   \end{subequations}
\end{enumerate} 
\end{proposition}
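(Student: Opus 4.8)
Following the scheme of \cite[Proposition~4.5]{DS3}, the plan is to prove (ii) by a positive commutator (Mourre / H\"ormander--Melrose propagation-of-singularities) argument, and to read off (i) as essentially the endpoint case $s=-s_0$, the only genuine difference being that at the endpoint the source term must be estimated through the Besov duality $B^\mu$--$(B^\mu)^*$ (note $B^\mu\subseteq L^2_{s_0}$ by $B(A)\subseteq L^2_{1/2}(A)$ with $A=|x|^{2s_0}$) rather than a plain weighted $L^2$ pairing. The conjugate ``escape'' function is built from $b_0$, which is strictly monotone along the Hamilton flow of $h=\xi^2+V_1$ on $\{|b_0|<1\}$: computing the Poisson bracket with $f_0=\sqrt K\langle x\rangle^{-\mu/2}$ and using Condition~\ref{cond:lap}~\ref{it:assumption3}, the identity \eqref{eq:6} (so $-x\cdot\nabla V_1-2V_1=W\ge\epsilon_1\tilde\epsilon_1\langle x\rangle^{-\mu}$) and the normalization \eqref{eq:26} of $K$, one obtains on $\{h=0\}$, for $|x|$ large,
\[
\{h,b_0\}\ \ge\ (2-\mu)\sqrt K\,\langle x\rangle^{-1-\mu/2}\,(1-b_0^2),
\]
the degeneracy sitting precisely at the barrier $b_0=1$. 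This is why regularity can be transported ``forward'' from $\{b_0<\sigma-1\}$ up to, but not past, $\{b_0<1\}$.

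\textbf{The commutator.} Fix $\tilde\chi_-\in C^\infty_c((-\infty,1))$, so $\supp\tilde\chi_-\subseteq(-\infty,1-2\varepsilon]$ for some $\varepsilon>0$. One forms, for a regularization parameter $\tau\in(0,1]$ and large $N$,
\[
\phi_\tau=\chi_-(a_0)^2\,\Psi(b_0)\,\langle x\rangle^{2(s_0-s)}\,(1+\tau\langle x\rangle^2)^{-N},
\]
where $\Psi\in C^\infty(\R)$ is non-negative, non-decreasing, vanishes on $(-\infty,\sigma-1]$, and is chosen by solving a differential inequality (schematically $\Psi'(b)(1-b^2)\ge C|b|\,\Psi(b)+c\,\tilde\chi_-(b)^2$) that renders the full principal symbol $\{h,\phi_\tau\}$ bounded below, near $\{h=0\}$, by $c'\langle x\rangle^{-2s}\chi_-(a_0)^2\tilde\chi_-(b_0)^2$ up to a term supported in $\{b_0<\sigma-1\}$ and lower order; the power $2(s_0-s)$ is forced by order-counting, since it absorbs the same-order term $\{h,\langle x\rangle^{2(s_0-s)}\}=4(s_0-s)\sqrt K\,\langle x\rangle^{2(s_0-s)-1-\mu/2}b_0+\cdots$ into the choice of $\Psi$. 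The regularizer makes $\Opw(\phi_\tau)$ bounded between the weighted spaces containing $u$ and $v$ (and disappears in the limit $\tau\to0$), so from $Hu=v$ and self-adjointness
\[
\langle u,\Opw(\{h,\phi_\tau\})u\rangle+\langle u,\i[V_2,\Opw(\phi_\tau)]u\rangle+(\text{calculus remainders})=-2\Im\langle v,\Opw(\phi_\tau)u\rangle
\]
holds rigorously for every $\tau>0$, using the uniform pseudodifferential calculus of \cite{FS}.

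\textbf{Closing up.} The left side dominates $c'\|\langle x\rangle^{-s}\Opw(\chi_-(a_0)\tilde\chi_-(b_0))u\|^2$ minus four kinds of terms, each absorbed: the $\{b_0<\sigma-1\}$ piece, by the hypothesis \eqref{eq:31bC}/\eqref{eq:31bA}; the contribution of the elliptic region $\supp\chi_+(a_0)$, by inserting \eqref{eq:27} of Lemma~\ref{lemma:som-energy} and using $Hu=v$ once more; the pseudodifferential remainders, which carry an extra uniform Planck factor $\langle x\rangle^{\mu/2-1}$ and are swallowed by a small multiple of the main term together with the a priori bound $u\in L^2_{s-\delta}$; and the $V_2$-term, estimated by $|\langle u,\i[V_2,\Opw(\phi_\tau)]u\rangle|\le2|\Im\langle V_2u,\Opw(\phi_\tau)u\rangle|$, where $|V_2(x)|\le C|x|^{-2s_0-\delta}$ (Condition~\ref{cond:lap}~\ref{it:assumption4}) gives a gain $\langle x\rangle^{-\delta}$ over the weight of $\phi_\tau$ that permits absorption (the local part being harmless since $\Opw(\phi_\tau)$ is smoothing in $x$, $\phi_\tau$ being compactly supported in $\xi$). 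The source term $-2\Im\langle v,\Opw(\phi_\tau)u\rangle$ is absorbed using $Hu=v$ and the natural dual pairing $L^2_{s+2s_0}$--$L^2_{-s-2s_0}$ adapted to the weight of $\phi_\tau$ (the Besov pairing $B^\mu$--$(B^\mu)^*$ in case (i)), again ending on a small multiple of the main term. This gives a bound on $\|\langle x\rangle^{-s}\Opw(\chi_-(a_0)\tilde\chi_-(b_0))u\|$ uniform in $\tau$ --- one uses that $\{h,(1+\tau\langle x\rangle^2)^{-N}\}$ carries a factor $\tau^{1/2}\to0$ --- and $\tau\to0$ yields \eqref{eq:31bBC} (resp.\ \eqref{eq:31bB}); if necessary the step is iterated over finitely many slightly increasing values of $\sigma$, the bound from the previous step feeding the $\{b_0<\sigma-1\}$ term of the next. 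Case (i) is $s=-s_0$ with the Besov source estimate.

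\textbf{Main obstacle.} The hard part is engineering the sign when constructing $\Psi$: the commutator of the weight $\langle x\rangle^{2(s_0-s)}$ with $H$ is of the \emph{same order} as the main positive term from $\{h,\Psi(b_0)\}$, so positivity must be extracted from the precise interplay between the virial constant $(2-\mu)\sqrt K$ (hence the normalization \eqref{eq:26}), the critical exponent $s_0$, and the monotone profile $\Psi$; this is delicate near the barrier $b_0=1$ and for the full range of $s$, and it is the only place where Condition~\ref{cond:lap}~\ref{it:assumption3} is genuinely used. Secondary technical points are keeping every symbol estimate uniform in the class $S_{\unif}(m_{|z|},g_{|z|})$ (here at $z=0$), tracking the favourable $\tau^{1/2}$ so that the regularization errors vanish in the limit, and treating the low regularity of $V_2$ without a symbolic calculus --- a milder point, resolved by the extra decay $\delta>0$ in Condition~\ref{cond:lap}~\ref{it:assumption4}.
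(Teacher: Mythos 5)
Your overall strategy is the paper's: a positive commutator built from $\chi_-(a_0)$, a monotone profile in $b_0$ and an $x$-weight, the virial bracket bound \eqref{eq:32}, the elliptic region handled through Lemma \ref{lemma:som-energy}, the left tail $\{b_0<\sigma-1\}$ absorbed by the hypothesis \eqref{eq:31bC}, and a regularization removed at the end. The genuine gap is in ``Closing up'': you pass from the a priori weight $s-\delta$ to the target weight $s$ in a \emph{single} commutator inequality, claiming the pseudodifferential remainders and the $V_2$-term are ``swallowed by a small multiple of the main term together with the a priori bound $u\in L^2_{s-\delta}$''. One commutator step can only gain $\epsilon_2=\min(1-\mu/2,\delta)$ as in \eqref{eq:34}: the calculus remainders gain only the Planck factor $\inp{x}^{\mu/2-1}$ and $V_2$ only $\inp{x}^{-\delta}$ relative to terms at the target weight, and since they are quadratic in $u$, one factor must be paid for by a priori information, which in the propagation region is only $L^2_{s-\delta}$. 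When $\delta>1-\mu/2$ these terms are controlled neither a priori nor by the main term, and your asserted uniformity in $\tau$ breaks precisely in the region $|x|\sim\tau^{-1/2}$ where the regularizer is inert. This is exactly why the paper's Step~I proves the statement only at the intermediate weight $\tilde s=s+\epsilon_2-\delta$, with the regularizer $X_\kappa^{-\epsilon_2}$ in \eqref{eq:propobsi} chosen so that $A_\kappa u\in L^2$ at fixed $\kappa$ given only $u\in L^2_{s-\delta}$ and so that the error estimates \eqref{eq:38}--\eqref{eq:39} close at the a priori weight, and why Step~II then bootstraps the weight in increments of $\epsilon_2$; that bootstrap forces the extra microlocalization $u_\epsilon=I_\epsilon u$, the use of the induction hypothesis on $u_\epsilon$, and a treatment of $[H_1,I_\epsilon]$ by support/calculus arguments. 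Your proposed iteration ``over slightly increasing values of $\sigma$'' iterates in the wrong parameter: it never improves the weight at which $u$ is known microlocally, so it cannot replace the weight bootstrap.

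Two further points. First, with your $\Psi$ solving only $\Psi'(1-b^2)\gtrsim|b|\Psi+\tilde\chi_-^2$, the guaranteed coercive part of the bracket is proportional to $\tilde\chi_-(b_0)^2$, whereas the $V_2$-term and the remainders live on all of $\supp\Psi$; Cauchy--Schwarz against that main term does not absorb them. The paper's device $\e^{-K_0b}f(b)$ in \eqref{eq:propobsi} is designed to avoid this: taking $K_0$ large makes the negative term proportional to the square $a_\kappa^2$ of the \emph{full} observable, cf. \eqref{eq:Poi3iyy}, so same-support, same-or-lower-order errors can be absorbed; your ODE would have to be strengthened to $\Psi'(1-b^2)\gtrsim\Psi+\cdots$ on the whole support, which is the same trick in disguise (and is possible only because $\supp f$ stays strictly below $b_0=1$ for each fixed $\tilde\chi_-$). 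Second, with the paper's convention $L^2_s=\inp{x}^{-s}L^2$ your weight $\inp{x}^{2(s_0-s)}$ and the resulting bound on $\|\inp{x}^{-s}\Opw(\chi_-(a_0)\tilde\chi_-(b_0))u\|$ give membership of $L^2_{-s}$, not the asserted $L^2_{s}$; the weight must be $\inp{x}^{2(s_0+s)}$ (as in the paper, where $b_\kappa=X^{s_0}a_\kappa$ and $a_\kappa$ carries $X^{\tilde s}$). This is presumably a convention slip, but as written it is inconsistent with the $L^2_{s+2s_0}$--$L^2_{-s-2s_0}$ pairing you invoke for the source term. The reduction of part~i) to part~ii) with $s=-s_0$ via $B^\mu\subseteq L^2_{s_0}$ is fine and is also what the paper does.
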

\begin{proof} Obviously \ref{item:1} follows from \ref{item:2} with
  $s=-s_0$. We shall show \ref{item:2} by  mimicking  the scheme of the proof of \cite[Proposition 4.5]{DS3}. Since the
  observable $b$ used in \cite{DS3}  is somewhat different to our
  $b_0$ and the class of 
  potentials   considered  is smaller than here, there are indeed
  various differences to tackle. We intend to give a
  self-contained presentation.

We introduce the notation $s_1=2s_0-\mu+\delta$, 
  and explain the role of this  parameter: Suppose $u\in
L^2_{s-s_1}$  obeys $Hu=v\in  L^2_{s+2s_0}$. Note that since 
  $s_1>\delta$ this is a more general situation than in \ref{item:2}. Then due
  to Lemma \ref{lemma:som-energy} we can find  $S\in
  \vB(L^2)$   
    such that $X^{s}SX^{-s}\in
  \vB(L^2)$  and 
\begin{equation*}
  \Opw(\chi_+(a))u-SX^{-s_1}u\in  L^2_{s},
\end{equation*} and whence we conclude that  
\begin{equation}\label{eq:35AA}
  \Opw(\chi_+(a))u\in  L^2_{s}.
\end{equation}  A consequence of this remark is that the statements
\eqref{eq:31bC} and \eqref{eq:31bBC} do not depend on which
$\chi_-=1-\chi_+$ that enter (in particular the factors of $\chi_-$ in these
statements may differ from one another).

At various points below we need to bound possible local singularities of the
potential $V_2$. Since these by assumption are located in  a bounded
region they are easily  treated by the general bounds (in
turn easily proven using the relative compactness of  $V_2$)
\begin{equation}
  \label{eq:40}
  \|V_2F(|x|\leq R)u\|\leq C_t(\|Hu\|_t+\|u\|_t) \mfor t\in \R.
\end{equation} The reason why we need  the a priory  information
$u\in L^2_{s-\delta}$ in \ref{item:2}  rather than just $u\in
L^2_{-\infty}$ (as  in \cite{DS3}) lies in
contributions from $V_2F(|x|> R)u$. We will use that the function 
$|x|^{2s_0+\delta}V_2F(|x|> R)$ is bounded (for $R$ as in Condition
\ref{cond:lap} \ref{it:assumption4}), but since we are not
imposing further regularity of this function  our
method of proof does not allow us to weaken the condition $u\in L^2_{s-\delta}$.
 
Henceforth we  abbreviate the symbols $a_0=a$ and  $b_0=b$.
An important ingredient of the proof  is the following estimation of
a Poisson bracket, cf. \cite[(4.30)]{DS3}. Here we have
$h_1=\xi^2+V_1(x)$.
\begin{align}\label{eq:32}
\begin{split}
  \{h_1,b\}&=\inp{x}^{\mu/2-1}\parb{W-(2-\mu)
    K\inp{x}^{-\mu}b^2  +2h_1}/\sqrt K\\
&\geq  (2-\mu)\sqrt K\inp{x}^{-2s_0}\parb{1-b^2}+\inp{x}^{\mu/2-1}2h_1/\sqrt K.
\end{split}
\end{align}

We introduce 
for $\kappa\in (0,1]$ the notation $X_\kappa=(\kappa|x|^2+1)^{1/2}$ and
abbreviate $X_1=X=\inp{x}$. These
observables have 
Poisson brackets 
\begin{equation}
  \label{eq:33}
  \{h_1,X_\kappa\}=2\kappa\xi\cdot x/X_\kappa=2f_0b\tfrac{\kappa X}{X_\kappa}\mand \{h_1,X\}=2\xi\cdot x/X=2f_0b.
\end{equation}
 
\medskip

\noindent\emph{Step I.} Let 
  \begin{equation}
    \label{eq:34}
     \epsilon_2=\min (1-\mu/2,\delta).
  \end{equation}  The quantity  $X^{-\epsilon_2}$ will play  the role of a
``Planck constant''. (The number  $ 2\epsilon_2$ will play the
  role of  the  parameter  $\epsilon_2$ used in \cite{DS3}.) First we prove   \eqref{eq:31bBC}  with the
replacement 
$s\to \tilde s=s+\epsilon_2 -\delta$ for any $u\in L^2_{\tilde s- \epsilon_2}=
 L^2_{s-\delta}$ obeying
 $Hu=v\in  L^2_{\tilde s+2s_0} \supseteq L^2_{ s+2s_0}$. Note that
 indeed 
  $\tilde s\leq s$.   In particular if \eqref{eq:31bC}  is valid for
  $s$  then \eqref{eq:31bC}   is also valid for $s\to \tilde s$ (to be used
  in Step II below). So we assume in addition \eqref{eq:31bC} with
  $s\to \tilde s$ for any such $u$, and we aim at proving
  \eqref{eq:31bBC} with $s\to \tilde s$. Clearly  we can (and will) assume
  that 
  $\sigma<1$ in \eqref{eq:31bC}. 

We shall use that 
\begin{equation}\label{eq:35}
  \Opw(\chi_+(a))u\in  L^2_{\tilde s},
\end{equation}  cf. \eqref{eq:35AA}.

  Now let $\tilde \chi_-$ be given as in \eqref{eq:31bBC}. In particular this means
  that $\supp \tilde \chi_-\subseteq (-\infty, k]$ for some
  $k\in(0,1)$. Pick a non-positive  function $f\in
  C_c^{\infty}((\sigma/3-1,1))$ with $f'(t)\geq0$
  on $[\sigma/2-1,\infty)$ and $f(t)<0$ on
  $[\sigma/2-1,(k+1)/2]$. Pick real-valued  functions $f_1, f_2\in
  C_c^{\infty}((-1,1))$ with $f_1^2(t)+f_2^2(t)=1$ on $\supp f$ and $\supp
  f_1\subset (\sigma/3-1,\sigma-1)$ while $\supp
  f_2\subset (\sigma/2-1,1)$. We introduce for any $K_0>0$
  and $\kappa\in (0,1]$ the
  observables  
\begin{equation} 
\label{eq:propobsi}
  b_\kappa=X^{s_0}a_\kappa,\; a_\kappa=X^{\tilde s}X_{\kappa}^{-\epsilon_2}\exp({-K_0b})f ( b)\chi_-(a);
 \end{equation} here $X_\kappa$ is defined above. 
Notice that we removed the
 subscripts for the old observables $a$ and $b$ and used these in
 \eqref{eq:propobsi} to
 introduce new  observables  with subscripts 
 (not to be mixed up with the old notation). We will prove bounds involving quantizations of
 $a_\kappa$ and $b_\kappa$  that will be uniform in $\kappa$. The
 proof will  then be completed by taking $\kappa\to 0$.

  We look at the right hand side of \eqref{eq:32}. The   first term has the following
  positive lower bound on $\supp b_\kappa$: 
 \begin{equation*}\cdots \geq cX^{-2s_0};\;c=(2-\mu)\sqrt{K} \big(1-\sup \{t^2|t\in \supp f\}\big ).\end{equation*}

   First we fix $K_0$: A part of the Poisson bracket with
 $b_\kappa^2$ is
\begin{equation}
\label{eq:Poi1i}
  \{h_1,X^{2\tilde s+2s_0}X_{\kappa}^{-2\epsilon_2}\}=
  Y_\kappa  bX^{2\tilde s}X_{\kappa}^{-2\epsilon_2},
\end{equation}
where $Y_\kappa=Y_\kappa(|x|)$ is uniformly bounded in $\kappa$, cf. \eqref{eq:33}.
We pick $K_0>0$ such that for all $\kappa\in (0,1]$ \[2K_0c\geq |Y_\kappa  |+2\text{ on }
\supp b_\kappa.\]

From \eqref{eq:32}, \eqref{eq:Poi1i} and the properties of $K_0$ and $f$,  we conclude the following bound at  $\{ f'(b)\geq0\}$: 
\begin{equation*}
  \{h_1,b_\kappa^2\}\leq -2a_\kappa^2 +O\big
  (X^{\mu+\tilde s}\big )h_1a_\kappa  +O\big (X^{2\tilde s}\big )(\chi_-^2)'(a).
\end{equation*}
   Next we multiply both sides  by
  $f_2^2(b)\,(=1-f_1^2(b))$ and obtain after a rearrangement 
  \begin{align}\label{eq:Poi3iyy}
    \begin{split}
       \{h_1,b_\kappa^2\}\leq{} & -2a_\kappa^2 +h_1X^{\mu} d_{\kappa}a_\kappa 
  \\
  &+K_1f_1^2(b)\chi^2_-(a)X^{2\tilde s}-K_2(\chi_-^2)'(a)X^{2\tilde s},\; d_{\kappa}\in S(X^{\tilde s}, 
  g_{0});
    \end{split}
\end{align} here $K_1,K_2>0$ are independent of $\kappa$, and the
family of 
symbols $\{d_{\kappa}:\kappa\in (0,1]\}$ is  bounded in the specified
symbol class.

We introduce $A_\kappa={\Opw}(a _\kappa)$, $B_\kappa={\Opw}(b
_\kappa)$ and the regularization 
\[ 
u_R=\chi(X/R<\nobreak 1)u
\]
in terms of a parameter $R>1$. Here and henceforth we use the notation
$\chi(t>\epsilon)$ for any $\epsilon>0$ to denote a smooth increasing
function $=1$ for $t>\epsilon$ and $=0$ for $t<\frac {1}{2}\epsilon$
and we define $\chi(\cdot<\epsilon)=1-\chi(\cdot>\epsilon)$. Let
$H_1=H-V_2$. The following arguments rely heavily on the calculus, cf.
\cite[Theorems 18.5.4, 18.6.3, 18.6.8]{Ho2}.

First we compute the expectation
\begin{equation}
  \label{eq:comm1i}
 \langle \i[H_1, B_\kappa^2]\rangle _u= \lim_{R\to\infty}\langle \i[H_1,
 B_\kappa^2]\rangle _{u_R}=-2\Im \langle v,B_\kappa^2 u\rangle+2\Im \langle V_2u,B_\kappa^2 u\rangle.  
\end{equation} Next we  estimate 
\begin{equation}
  \label{eq:38}
  |-2\Im \langle v,B_\kappa^2 u\rangle|\leq  C_1\|v\|_{\tilde s+2s_0}\big (\|A_\kappa
 u\|+\|u\|_{\tilde s-\epsilon_2}\big)
\leq \tfrac{1}{4} \|A_\kappa
 u\|^2 +C_2,
\end{equation} and (using  the estimate
\eqref{eq:40} with any $t\leq \tilde s-\delta$)
\begin{align}
\label{eq:39}
\begin{split}
  |2\Im \langle V_2u,B_\kappa^2 u\rangle|&\leq  C_3\big
(\|u\|_{\tilde s-\delta}+\|v\|_{\tilde s+2s_0}\big )\big (\|A_\kappa
 u\|+\|u\|_{\tilde s-\epsilon_2}\big)\\
&\leq \tfrac{1}{4} \|A_\kappa
 u\|^2 +C_4.
\end{split}
\end{align}
 From  \eqref{eq:comm1i}--\eqref{eq:39} we conclude  that 
\begin{equation}
  \label{eq:comm2i}
 |\langle \i[H_1, B_\kappa^2]\rangle _u|\leq \tfrac{1}{2} \|A_\kappa
 u\|^2 +C_2+C_4.
\end{equation}

 On the other hand, using  \eqref{eq:31bC} (with $s\to \tilde s$), \eqref{eq:35}  and 
 \eqref{eq:Poi3iyy},   we infer that 
\begin{align*}
  \langle \i[H_1, B_\kappa^2]\rangle _u&=\lim_{R\to\infty}\langle
  \i[H_1, B_\kappa^2]\rangle _{u_R}\\&\leq -2\|A_\kappa
  u\|^2+C_5\|H_1u\|_{\tilde s+\mu}\|A_\kappa
 u\|+C_6.
\end{align*} Here the second term arises  as a bound of $\Re\inp{{\Opw}(d
_\kappa)X^\mu H_1u,A_\kappa u}$. Using the bound $\|(H-V_2)u\|_{\tilde
  s+\mu}\leq C\big (\|v\|_{\tilde s+\mu}+\|u\|_{\tilde s-\delta}\big)$
it follows that 
\begin{equation}
  \label{eq:comm3i}
 \langle \i[H_1, B_\kappa^2]\rangle _u\leq -\tfrac{3}{2} \|A_\kappa
 u\|^2+ C_7.
\end{equation} 

Combining \eqref{eq:comm2i} and \eqref{eq:comm3i} yields 
\begin{equation*}
 \|A_\kappa u\|^2\leq C_2+C_4+C_7,
\end{equation*} 
which in combination with  the property,  $f(t)<0$ on
  $[\sigma/2-1,(k+1)/2]$,   in
turn gives a uniform bound 
\begin{equation}
  \label{eq:comm4i}
 \|X_{\kappa}^{-\epsilon _2}\Opw(\chi_-(a)\tilde \chi_-(b))u\|^2_{\tilde s}\leq C.
\end{equation}

We let $\kappa\to 0$ in \eqref{eq:comm4i} and infer \eqref{eq:31bBC}
with $s\to \tilde s$.

\medskip

\noindent\emph{Step II.} Define for $m\in \N$ the number $\tilde
s_m=\min (s, s+m\epsilon_2-\delta )$, where $\epsilon_2$ is given by
\eqref{eq:34}.  We learn from Step I that \eqref{eq:31bBC} is valid
with $s\to \tilde s_1$.  We proceed inductively to prove that
\eqref{eq:31bBC} is valid with $s\to \tilde s =\tilde s_m$ given that
we know the statement for $\tilde s_{m-1}$ for some
$m=2,3,\dots$ This is done by mimicking the corresponding step in
\cite{DS3} although in the present context it is doable in a somewhat
simpler way. The idea is to use the procedure of Step I for a
localized version of $u$, say $u_\epsilon=I_\epsilon u$. The factor
$I_\epsilon$ is a pseudodifferential operator with symbol $=1$ in a
slightly bigger region than the support of the symbol $a_\kappa$ of
Step I (now used with $\tilde s =\tilde s_m$). Explicitly $I_\epsilon$
can be constructed as follows: With $f$ and $\chi_-$ given as in
\eqref{eq:propobsi} we pick $f_\epsilon\in C^\infty_c((-\infty, 1))$
with $f_\epsilon(t)=1$ in a neighbourhood of the support of $f$ and we
pick $\chi_\epsilon$ of the same type of function as $\chi_-$ but with
$\chi_\epsilon(t)=1$ in a neighbourhood of the support of $\chi_- $,
and then we define
$I_\epsilon=\Opw(\chi_\epsilon(a)f_\epsilon(b))$. By the induction
hypothesis $u_\epsilon\in L^2_{\tilde s_{m-1}}$. We need to consider
contributions from the commutator $[H_1,I_\epsilon]$ when mimicking
the procedure of Step I using the same constructions
\eqref{eq:propobsi}. But these are ``nice to any order'' due to the
support properties of the involved symbols and the calculus. To see
this more concretely consider the analogue of \eqref{eq:38}: We
consider now
\begin{equation*}
  \langle I_\epsilon  v+[H_1,I_\epsilon]u,B_\kappa^2  u_\epsilon\rangle=\langle I_\epsilon  v,B_\kappa^2  u_\epsilon\rangle+\langle B_\kappa^2 [H_1,I_\epsilon]u, u_\epsilon\rangle.
\end{equation*}
The contribution from the first term is treated as before (using now
the induction hypothesis), and the contribution from the second term
is indeed harmless (since the symbol of $B_\kappa^2[H_1,I_\epsilon]$
is in $S(X^t,g_0)$ for all $t\in \R$). Similarly the expression
$\inp{V_2u,B_\kappa^2 u}$ of \eqref{eq:39} needs to be replaced by
\begin{equation*}
  \langle  I_\epsilon V_2u,B_\kappa^2  u_\epsilon\rangle,
\end{equation*} which indeed can be estimated as before 
(using  the
induction hypothesis). Whence we obtain \eqref{eq:comm2i} with $u\to
u_\epsilon$. 

Similarly we can show \eqref{eq:comm3i}, and hence in turn conclude \eqref{eq:comm4i},
for $u\to
u_\epsilon$. Again we complete the proof by letting $\kappa\to 0$.
\end{proof}

 We have two versions of uniqueness of the outgoing solution at zero energy stated in one theorem as follows.
\begin{thm}
  \label{thm:somm-radi-conds} Let $\chi_-$ be given as in  Lemma
\ref{lemma:low_energy_somm-radi-cond}.  Let $\delta$ and $s_0$  be
given as in Condition \ref{cond:lap}. 
\begin{enumerate}[i)]
  \begin{subequations}
\item\label{item:3} Let $v\in  B^
\mu$. Then the equation $Hu=v$ (in the
  distributional sense)   has a unique solution $u\in
L^2_{-\infty}$ obeying  
\begin{equation}\label{eq:31bddd}
 \exists \;\sigma\in (0,1):\;\Opw(\chi_-(a_0)\tilde \chi_-(b_0))u\in (B^
\mu)^*_0 \mforall \tilde \chi_-\in C^\infty_c((-\infty, \sigma)).
\end{equation} 
this  solution is given by  $u=R(0 + \i0) v$ as defined in
\eqref{eq:30}. In particular    
\begin{equation}\label{eq:31b}
  \Opw(\chi_-(a_0)\tilde \chi_-(b_0))u\in (B^
\mu)^*_0 \mforall \tilde \chi_-\in C^\infty_c((-\infty, 1)).
\end{equation}
   \end{subequations} 
   \begin{subequations}
\item \label{item:4} Let $v\in  L^2_{s}$ for some $s>s_0$.  Then the equation $Hu=v$  has a unique solution $u\in
L^2_{-s_0-\delta}$ obeying 
\begin{equation}\label{eq:31bAbb}
  \exists \;\sigma>0:\;\Opw(\chi_-(a_0)\tilde \chi_-(b_0))u\in L^2_{-s_0}\mforall \tilde \chi_-\in C^\infty_c((-\infty, \sigma-1)).
\end{equation} 
This solution is given by $u=R(0 + \i0) v$ as defined in
\eqref{eq:30}. Whence  for all  $t < \min(s-s_0,\delta)-s_0$
\begin{equation}\label{eq:31bAbbb}
 \Opw(\chi_-(a_0)\tilde \chi_-(b_0))u\in L^2_{t}\mforall \tilde
 \chi_-\in C^\infty_c((-\infty, 1)).
\end{equation}  In particular we can take $t=-s_0$ in
\eqref{eq:31bAbbb}.
   \end{subequations}
\end{enumerate}

\end{thm}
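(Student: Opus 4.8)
The plan is to prove both uniqueness statements by reducing, via linearity, to the homogeneous equation $Hu=0$ and showing that the radiation condition forces $u=0$; the existence and the additional conclusions \eqref{eq:31b} and \eqref{eq:31bAbbb} then follow from Proposition~\ref{prop:radiation_somm-radi-conds} together with the propagation of singularities result Proposition~\ref{prop:propa_sing}. More precisely, for \ref{item:4} suppose $Hu=v\in L^2_s$ with $u\in L^2_{-s_0-\delta}$ satisfying \eqref{eq:31bAbb}. The candidate $u_0:=R(0+\i0)v$ also lies in $L^2_{-s_0-\delta}$ (by Proposition~\ref{prop:radiation_somm-radi-conds}, since $-s_0-\delta<-s_0$) and satisfies the same localization \eqref{eq:31bAbb} (by \eqref{eq:31ab}). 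Hence $w:=u-u_0$ solves $Hw=0$, lies in $L^2_{-s_0-\delta}$, and satisfies $\Opw(\chi_-(a_0)\tilde\chi_-(b_0))w\in L^2_{-s_0}$ for $\tilde\chi_-$ supported in $(-\infty,\sigma-1)$. Applying Proposition~\ref{prop:propa_sing}\ref{item:2} (with $s\to-s_0$, which is legitimate since $w\in L^2_{-s_0-\delta}$ and $Hw=0\in L^2_{s_0}$) upgrades this to $\Opw(\chi_-(a_0)\tilde\chi_-(b_0))w\in L^2_{-s_0}$ for all $\tilde\chi_-\in C^\infty_c((-\infty,1))$. Combined with the high-energy bound $\Opw(\chi_+(a_0))w\in L^2_{-s_0}$ from \eqref{eq:35AA} and a localization of $w$ near $b_0\geq$ some positive number, one obtains enough decay to conclude $w\in L^2_{s_0-\varepsilon}$ for small $\varepsilon>0$, i.e.\ just past the critical weight; by the LAP (Theorem~\ref{thm:lap}) applied to $w=R(\i)( \i w) - \dots$, or rather by the standard argument that a solution of $Hw=0$ in $L^2_{s}$ for some $s>-1/2$ must already be in $L^2$ (via \eqref{eq:limitbound} and absence of zero eigenvalue, which is part of Theorem~\ref{thm:lap}), we get $w=0$.

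The one delicate point is the missing microlocal region: the hypothesis \eqref{eq:31bAbb} only controls $b_0$ small (incoming directions), Proposition~\ref{prop:propa_sing}\ref{item:2} propagates this up to $b_0<1$, and \eqref{eq:35AA} handles $a_0$ large, but one still needs to control $w$ where $a_0$ is bounded and $b_0$ is close to or exceeds $1$ (the outgoing region). The key observation is that $Hw=0$ together with the symbolic identity $h_1=\xi^2+V_1$ and $|V_1|\lesssim f_0^2 C_0'$ forces the symbol $a_0$ to be $\leq C_0'$ microlocally on solutions, so in fact the region $\{a_0\geq C_0'+1\}$ is handled by the elliptic estimate \eqref{eq:35AA}, and in the remaining compact range of $a_0$ the Poisson bracket computation \eqref{eq:32}, which shows $\{h_1,b\}$ is \emph{positive} where $b^2<1$, is exactly the positive commutator that drives a Mourre-type estimate: integrating $\i[H_1,\Opw(\chi_-(a_0)g(b_0))X^{2t}]$ against $w$ for a suitable increasing $g$ supported on $b_0<1$ yields, after controlling $V_2$-contributions via \eqref{eq:40} and the a priori weight $w\in L^2_{-s_0-\delta}$, the bound $\|\Opw(\chi_-(a_0)\tilde\chi_-(b_0))w\|_{t}<\infty$ for $t$ strictly larger than $-s_0$. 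This is really a re-run of Step~I of the proof of Proposition~\ref{prop:propa_sing} but now with the \emph{opposite} sign convention on $b_0$ (using $b_0$ close to but below $1$ rather than $b_0$ near $-1$), exploiting that $1-b_0^2>0$ there.

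For part \ref{item:3} the argument is the same but with $L^2_{-s_0}$ replaced throughout by $(B^\mu)^*_0$ and $L^2_s$ by $B^\mu$: set $w=u-R(0+\i0)v$, note $w\in L^2_{-\infty}$ with $Hw=0$ so in particular $w\in L^2_{-s_0-\delta}$ (using that $Hw=0$ and interior elliptic regularity place $w$ in $L^2_\text{loc}$, and then \eqref{eq:35AA} with a crude a priori weight bootstraps $w$ into $L^2_{-s_0-\delta}$ since $\Opw(\chi_+(a_0))w$ has arbitrarily good decay, while $\Opw(\chi_-(a_0)\tilde\chi_-(b_0))w\in(B^\mu)^*_0\subseteq L^2_{-s_0}$ by hypothesis and, for $\tilde\chi_-$ supported away from $b_0<\sigma$, the propagation result). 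Then Proposition~\ref{prop:propa_sing}\ref{item:1} gives \eqref{eq:31bB}-type control in $(B^\mu)^*_0$ for all $\tilde\chi_-\in C^\infty_c((-\infty,1))$, and the positive-commutator estimate sketched above, run in the Besov rather than weighted-$L^2$ scale, upgrades $w$ from $(B^\mu)^*$ to $(B^\mu)^*_0$ in the outgoing region as well; since $w$ is then in $(B^\mu)^*_0$ globally and solves $Hw=0$, the Besov version of the LAP (Theorems~\ref{thm:lap}, \ref{prop:resolvent-bounds}) plus absence of a zero-resonance/eigenvalue — which is encoded in the validity of \eqref{eq:30} as a genuine limit — forces $w=0$. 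The main obstacle, as indicated, is proving this last \emph{outgoing} positive-commutator estimate cleanly in both scales simultaneously; I would isolate it as a separate lemma (the analogue of Step~I with $b_0\uparrow 1$) and then feed it into the uniqueness argument exactly as above.
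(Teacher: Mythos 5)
Your overall reduction (construct the candidate via Proposition~\ref{prop:radiation_somm-radi-conds}, subtract, and reduce uniqueness to showing that a homogeneous solution $Hw=0$ satisfying the outgoing condition vanishes, using Proposition~\ref{prop:propa_sing} to pass between the variants of the radiation condition) is exactly the paper's opening move. But the core of the theorem is the homogeneous uniqueness itself, and there your sketch has a genuine gap at precisely the point you flag. The region you must control is $\{a_0 \text{ bounded},\, b_0\geq 1\}$, which is \emph{not} elliptic here (on the characteristic set $a_0$ can well exceed $1$, so outgoing radial points with $b_0\geq1$ survive), and your proposed commutator $\i[H_1,\Opw(\chi_-(a_0)g(b_0))X^{2t}]$ with an increasing $g$ \emph{supported in} $\{b_0<1\}$ is microsupported away from that region, so it can never produce an estimate there; moreover the sign mechanism you invoke ($\{h_1,b_0\}>0$ where $b_0^2<1$, cf.\ \eqref{eq:32}) is simply unavailable where $b_0\geq1$. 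The paper's Step~III resolves this differently: it uses cutoffs of the form $\chi(\pm b_0>1/2)$ which are identically $1$ on the whole half-region (including $b_0\geq 1$), so that only their \emph{derivatives} — supported where $1-b_0^2>0$ — and the weight-derivative term (whose sign is that of $b_0$) enter the bracket; combined with Lemma~\ref{lemma:som-energy}, \eqref{eq:40} and Proposition~\ref{prop:propa_sing} for the middle region, and iterated in steps of size $\epsilon_3$, this is what carries $u$ across the critical weight. Your single "small improvement past $-s_0$" would in any case have to be iterated up to positive weights before you can invoke absence of a zero eigenvalue; the intermediate claim that $Hw=0$ with $w\in L^2_s$ for some $s>-1/2$ already forces $w\in L^2$ "via \eqref{eq:limitbound}" is unsubstantiated (it is essentially the zero-resonance absence one is trying to prove; the relevant critical exponent here is $s_0=1/2+\mu/4$, not $1/2$) and is not how the paper concludes.

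A second, related gap is the preliminary decay bootstrap. In part \ref{item:3} you need to bring $w\in L^2_{-\infty}$ up to $L^2_{-s_0-\delta}$ (and then into $(B^\mu)^*$/$(B^\mu)^*_0$) before any of the machinery applies, but your argument for this uses Proposition~\ref{prop:propa_sing}, which itself \emph{requires} the a priori weight $u\in L^2_{s-\delta}$ you are trying to establish, and in any case never gives information in $\{b_0\geq1\}$; the hypothesis \eqref{eq:31bddd} alone does not place $\Opw(\chi_-(a_0)\tilde\chi_-(b_0))w$ in $(B^\mu)^*_0$ for $\tilde\chi_-$ supported beyond $\sigma$. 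The paper's Steps~I--II handle this with a different device: the identity $\langle\i[H,X^{-2\epsilon}\chi_R]\rangle_u=0$, in which, after removing the elliptic part (Lemma~\ref{lemma:som-energy}) and the part $b_0<\sigma/2$ (controlled by \eqref{eq:31bddd}), the remaining outgoing contribution carries a factor $b_0\,\chi(b_0>\sigma/2)\geq0$ and hence enters with a favorable sign — it never needs to be estimated. This sign trick, which lets the uncontrolled outgoing region help rather than hurt, is the missing idea in your proposal; without it (or the Step~III cutoff construction) the uniqueness argument does not close.
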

\begin{proof}  By Proposition \ref{prop:radiation_somm-radi-conds}
the
function   $u=R(0 + \i0) v$ defined as in  \eqref{eq:30}  for any $v$
given as in \ref{item:3} or \ref{item:4} obeys
\eqref{eq:31b} or \eqref{eq:31bAbbb} (for any such $t$), respectively. By Proposition
\ref{prop:propa_sing} \ref{item:1} the condition \eqref{eq:31bAbb} is
stronger than \eqref{eq:31b}, which in turn obviously is stronger than
\eqref{eq:31bddd}. Whence it only remains to show that
\eqref{eq:31bddd} is sufficient for uniqueness. In fact we can take
$v=0$, and it  suffices to show that  \eqref{eq:31bddd} for some $u\in
L^2_{-\infty}$ obeying $Hu=0$ implies that $u=0$.
For that we  essentially mimic the proof of \cite[Proposition
4.10]{DS3}.

\medskip
\noindent\emph{ Step I.} We shall show that any such $u\in L^2_{s}$ for
all $s<-s_0$. Suppose $u\in
L^2_{t}$ for some $t<{-s_0}$ (which is an a priory
information). Introducing 
\begin{align}
\epsilon=-\min\big ((t+s_0)/2,\;t+s_0+\epsilon_2\big )\mand\nonumber
t_1=-s_0-\epsilon \nonumber
  \end{align}  (the parameter $\epsilon_2$ is given in  \eqref{eq:34}),
 we check that
\begin{equation}\label{eq:43}
 t_1>t\geq  t_1 -\epsilon_2.
\end{equation} Let  $u_R=\chi(X/R<1)u$, $R>1$,  be a given  regularization of
$u$ as in the proof of
Proposition \ref{prop:propa_sing}. We abbreviate
$\chi_R=\chi(X/R)=\chi(X/R<1)$. Here and henceforth we also use the notation of
\eqref{eq:33} (in particular we again abbreviate $a_0=a$ and  $b_0=b$). Obviously, undoing the  commutator, we have
\begin{equation}
  \label{eq:37}
 \inp{\i [H,X^{-2\epsilon}\chi_R]}_u=0.
\end{equation}

On the other hand, cf. \eqref{eq:33},  
\begin{align*}
     \label{eq:uncomm3}
   \i [H, X^{-2\epsilon}\chi_R]&=2\Re \big (f_0 h_{\epsilon,R}
   \Opw(b)\big );\\h_{\epsilon,R}(x)&=-2\epsilon
   X^{-1-2\epsilon}\chi(X/R)+ X^{-2\epsilon}R^{-1}\chi' (X/R ).
 \end{align*} By using Lemma \ref{lemma:som-energy}, \eqref{eq:40},
 \eqref{eq:43} and the calculus (cf. \cite[Theorems~18.5.4 and~18.6.3]{Ho2}) we obtain that
\begin{align*}
  2\Re \langle f_0 h_{\epsilon,R}
   \Opw\big (b\chi_+(a)\big)\rangle_{u} =2\Re \langle f_0 h_{\epsilon,R}
   \Opw\big (b\chi_-(a)\chi_+(a)\big)\rangle_{u} +O(R^0)=O(R^0),
 \end{align*}  and whence   that 
\begin{equation*}
   \langle \i [H, X^{-2\epsilon}\chi_R]\rangle_u = 2\Re \langle f_0 h_{\epsilon,R}
   \Opw\big (b\chi_-^2(a)\big)\rangle_{u} +O(R^0).
   \end{equation*} We invoke  then \eqref{eq:31bddd} and  deduce
\begin{equation*}
     \label{eq:uncomm8}
   \langle \i [H, X^{-2\epsilon}\chi_R]\rangle_u = 2\Re \langle f_0 h_{\epsilon,R}
   \Opw\big (b\chi(b>\sigma/2)\chi_-^2(a)\big)\rangle_{u} +O(R^0),
   \end{equation*} 
  which in turn  yields (by applying \cite[Theorem
   18.6.8]{Ho2} and ``reversing'' the arguments above) that 
\begin{equation}
     \label{eq:uncomm9}
     \langle \i [H, X^{-2\epsilon}\chi_R]\rangle_u \leq -\epsilon
     \sigma \langle f_0 X^{-1-2\epsilon}\chi_R \rangle_{u} +O(R^0).
   \end{equation} 

By combining  \eqref{eq:37} and \eqref{eq:uncomm9} we obtain 
\begin{equation}
     \label{eq:uncomm10}
     \langle f_0X^{-1-2\epsilon}\chi_R \rangle_u \leq C,
   \end{equation} for some constant $C$ which is independent of $R>1$. Whence,
   letting $R\to \infty$ we see that $u\in L^{2}_
   {t_{1}}$.  

More generally, we define for $m\in\N$ \[t_m=-s_0+\min \big((t_{m-1}+s_0)/2,\;
  t_{m-1}+s_0+\epsilon_2\big ),\;t_0:=t,\]and  iterate the above procedure. We
  conclude that  $u\in 
  L^2_{t_m}$, and hence  that indeed $u\in
  L^2_{s}$ for all  $s<-s_0$.

\medskip

\noindent\emph{Step II.} We shall show that $u\in (B^\mu)^*_0$. We apply
the same scheme as in  Step I, now with $\epsilon=0$ and using the
same  factor of 
$\chi(b>\sigma/2)$. This leads to 
\begin{equation*}
     -R^{-1}\langle f_0 \chi'\big (X/R\big )\rangle_u =o(R^0),
   \end{equation*} and hence indeed   
   $u\in (B^\mu)^*_0$, cf. \eqref{eq:42}.

\medskip

\noindent\emph{Step III.} We shall show that $u=0$. First  we introduce
$\epsilon_3=\epsilon_2/2$ and 
let
$s\in (-s_0,\epsilon_3 -s_0)$  be given arbitrarily.   Our goal  is  to 
show that $u\in L^2_s$. An iteration procedure will then give that
$u\in L^2$, and hence that $u=0$. 

We consider for  $\kappa\in (0,1/2]$ (and in terms of notation of
\eqref{eq:33}) 
\begin{equation} 
\label{eq:propobs8}
  b_\kappa=X^{s_0}a_\kappa;\; a_\kappa=\Big (\tfrac {X}{X_\kappa}\Big )^{s}X_{\kappa}^{-s_0}\chi(-b>1/2)\chi_-(a).
\end{equation} 
Using   \eqref{eq:33} we calculate the Poisson bracket 
\begin{equation*}
   \Big\{h_1, \Big(\tfrac {X}{X_\kappa}\Big )^{2s_0+2s} \Big\}=4(1-\kappa)
   (s_0+s)X_\kappa^{-3}\Big (\tfrac {X}{X_\kappa}\Big )^{2s_0+2s-1}f_0b. 
\end{equation*} This is  negative  on the support of
   $b_\kappa$   with the (negative)  upper bound
   \begin{align}\cdots &\leq -2^{-1}(s_0+s)\Big (X^{2s_0-1}f_0\Big
     )X_\kappa^{-2}\Big (\big (\tfrac {X}{X_\kappa}\big
     )^{s}X_{\kappa}^{-s_0}\Big)^2 \nonumber\\&= -cX_\kappa^{-2}\Big (\big
     (\tfrac {X}{X_\kappa}\big
     )^{s}X_{\kappa}^{-s_0}\Big)^2;\;c=2^{-1}(s_0+s)/\sqrt K.\label{eq:44}\end{align} 

Similarly, by  \eqref{eq:32}, 
\begin{align*}
   \MoveEqLeft[1]
   \big\{h_1, \chi(-b>1/2)\big\}
   \\&\leq-(2-\mu)\sqrt K\chi'(-b>1/2)X^{-2s_0}(1-b^2)-2\chi'(-b>1/2)X^{\mu/2-1}h_1/\sqrt K. 
\end{align*} Note  that the  first term is non-positive.

We introduce the quantizations $A_\kappa={\Opw}(a
_\kappa)$ and $B_\kappa={\Opw}(b
_\kappa)$, and  the states $u_R=\chi_Ru$, $R>1$. Since $u\in
(B^\mu)^*_0$ due to   Step II
\begin{equation}\label{eq:0bn}
  \lim _{R\to \infty} \langle \i [H, B_\kappa^2]\rangle_{u_R}=0.
\end{equation} Since $\delta> 2\epsilon_3$ and $s-\epsilon_3<-s_0$ we  also that
\begin{equation*}
  |\langle \i [V_2, B_\kappa^2]\rangle_{u_R}|\leq C,
\end{equation*} where $C$ is a positive constant  independent of
$R>1$ and $\kappa\in (0,1/2]$.

On the other hand  due to the above considerations the expectation of
$\i [H_1, B_\kappa^2]$ in $u_R$ tends to be negative. Using  Lemma
\ref{lemma:som-energy}, \eqref{eq:40}, \eqref{eq:0bn}  and the above estimations of
symbols we obtain by letting  
$R\to \infty$ 
\begin{equation*}
  c\|X_\kappa^{-1}A_\kappa u\|^2\;\big (=\lim_{R\to \infty} c\|X_\kappa^{-1}A_\kappa u_R\|^2\big )\leq C,
\end{equation*}  where $c$ is given by \eqref{eq:44} and   $C$ is independent of
$\kappa$. 

Whence, letting $\kappa\to 0$, we conclude that 
\begin{subequations}
\begin{equation}\label{eq:bor1}
  \Opw\Big( \chi(-b>1/2)\chi_-(a)\Big)u\in L^{2}_
  {s}.
\end{equation}

Upon replacing the factor $\chi(-b>1/2)$ in \eqref{eq:propobs8} by
$\chi(b>1/2)$, we can argue similarly and obtain
\begin{equation}\label{eq:bor2}
  \Opw\Big( \chi(b>1/2)\chi_-(a)\Big)u\in L^{2}_
  {s}.
\end{equation} 
\end{subequations}

In combination with Lemma \ref{lemma:som-energy}, \eqref{eq:40} and Proposition
\ref{prop:propa_sing}   the bounds
\eqref{eq:bor1} and \eqref{eq:bor2} yield that $u\in L^{2}_{
  s}$.

Next  the above procedure is  iterated: Assuming that $u\in L^{2}_{
  s}$ for all $s<t_m:= m\epsilon_3-s_0$ (for
  some 
  $m\in \N$), it  leads to $u\in L^{2}_{s}$ for all $s<t_{m+1}$.
  Consequently, 
  $u\in L^{2}_{s}$ for all $s\in \R$. In particular $u\in L^{2}$, and
  therefore $u=0$.

\end{proof}
\begin{corollary}
  \label{cor:somm-radi-cond} Suppose $u\in (B^
\mu)^*_0 $ solves the equation $Hu=0$. Then $u=0$.
\end{corollary}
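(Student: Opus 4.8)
The plan is to read this off from Theorem \ref{thm:somm-radi-conds} \ref{item:3} applied with $v=0$. There are only two things to check: that $u$ lies in the class $L^2_{-\infty}$ treated there, and that $u$ satisfies the outgoing radiation condition \eqref{eq:31bddd}.

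For the first point I would use the characterization \eqref{eq:28} (equivalently the embeddings \eqref{eq:3} specialized to $A=|x|^{2s_0}$): since $u\in (B^\mu)^{*}_0\subseteq (B^\mu)^{*}$ we have $\sup_{R>1}R^{-s_0}\|F(|x|<R)u\|<\infty$, hence $u\in L^2_{-s}$ for every $s>s_0$, and in particular $u\in L^2_{-s_0-\delta}\subseteq L^2_{-\infty}$. For the radiation condition I would observe that it holds vacuously, in fact in the strong form \eqref{eq:31b} and even with $C^\infty_c(\R)$ in place of $C^\infty_c((-\infty,1))$: for arbitrary $\tilde\chi_-\in C^\infty_c(\R)$ the symbol $\chi_-(a_0)\tilde\chi_-(b_0)$ lies in $S_{\unif}(1,g_0)$, so $\Opw(\chi_-(a_0)\tilde\chi_-(b_0))$ is bounded both on $\vH=L^2$ and on $(B^\mu)^{*}$ (by Lemma \ref{lemma:resolvent-bounds}, exactly as invoked in the proof of Proposition \ref{prop:radiation_somm-radi-conds}). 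Being bounded on $\vH$, it maps $\vH$ into $\vH\subseteq (B^\mu)^{*}_0$, and since $\vH$ is dense in $(B^\mu)^{*}_0$ it maps $(B^\mu)^{*}_0$ continuously into itself. As $u\in (B^\mu)^{*}_0$ by hypothesis, this yields $\Opw(\chi_-(a_0)\tilde\chi_-(b_0))u\in (B^\mu)^{*}_0$ for every $\tilde\chi_-$, so \eqref{eq:31bddd} holds with any $\sigma\in(0,1)$.

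With these observations in place, $u$ is an $L^2_{-\infty}$ solution of $Hu=0$ satisfying \eqref{eq:31bddd}, so by the uniqueness assertion of Theorem \ref{thm:somm-radi-conds} \ref{item:3} it must coincide with $R(0+\i0)0=0$, i.e. $u=0$. I do not expect any genuine obstacle: the whole content sits in Theorem \ref{thm:somm-radi-conds}, and the only care required is the routine book-keeping of the Besov-space embeddings and of the mapping property of $\Opw(\chi_-(a_0)\tilde\chi_-(b_0))$ on $(B^\mu)^{*}_0$.
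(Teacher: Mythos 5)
Your proposal is correct and is essentially the argument the paper intends: the corollary is stated as an immediate consequence of Theorem \ref{thm:somm-radi-conds} \ref{item:3} with $v=0$, and your two verifications — $(B^\mu)^*_0\subseteq (B^\mu)^*\subseteq L^2_{-s}$ for $s>s_0$ (hence $u\in L^2_{-\infty}$), and the radiation condition \eqref{eq:31bddd} holding automatically because $\Opw(\chi_-(a_0)\tilde\chi_-(b_0))$ is bounded on $(B^\mu)^*$ and maps the dense subspace $\vH$ into $\vH\subseteq(B^\mu)^*_0$, hence preserves $(B^\mu)^*_0$ — are exactly the routine bookkeeping needed.
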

\begin{remark}
  \label{remark:somm-radi-cond} There is a similar characterization of
  the operator $R(0-\i 0)=\lim_{z\to 0,  z\in
    \Gamma(\theta)}R(\bar z)$
  as the one for $R(0+\i 0)$ in Theorem \ref{thm:somm-radi-conds}. We
  are not stating the result  here.
\end{remark}



\begin{thebibliography}{WW1}

\normalsize




\bibitem[AH]{AH}   S. Agmon, L. H\"ormander:
\textit{ Asymptotic properties of solutions of differential equations
  with simple characteristics}, J. d'Analyse Math., \emph{30} (1976),
  1{--}38.
  

 






\bibitem[DS1]{DS3} J. Derezi{\'n}ski, E. Skibsted, \emph{Quantum
  scattering at low
energies}, J. Funct. Anal.,  {\bf 257} (2009), 1828--1920. 

\bibitem[DS2]{DS2} J. Derezi{\'n}ski, E. Skibsted, 
\emph{Scattering at zero energy for attractive homogeneous
  potentials}, Ann. H. Poincar{\'e}  {\bf 10} (2009), 549--571. 


\bibitem[FH]{FH}  R. Froese and I. Herbst, \emph{Exponential bounds
and absence of positive eigenvalues for $N$-body Schr{\"o}dinger
operators},   Comm. Math. Phys. {\bf 87} no. 3 (1982/83),  
429--447.


\bibitem[FS]{FS} S. Fournais, E. Skibsted,
\emph{Zero energy asymptotics of the resolvent for a
  class of slowly decaying potentials}, Math. Z. {\bf 248}
 (2004), 593--633.
 
\bibitem[GY]{GY} Y. Gatel, D.  Yafaev,  \emph{On the solutions of
    the Schr\"odinger equation with radiation conditions at infinity: 
    the long-range case}, Ann. Inst. Fourier, Grenoble  {\bf 49},
  no. 5 
 (1999), 1581--1602.




\bibitem[H{\"o}1]{Ho2}
L. H{\"o}rmander, \emph{The analysis of linear
partial differential operators. {II}-{IV}},
Berlin,  Springer 1983--85.

\bibitem[H{\"o}2]{Ho3}
L. H{\"o}rmander, \emph{On the existence and the regularity of
  solutions of linear pseudo-differential equations},  Enseignement
Math. {\bf 17} no. 2 (1971), 99--163.


\bibitem[Is]{Is2}
 H. Isozaki, \emph{A generalization of the radiation condition of
 Sommerfeld for $N$-body Schr\"odinger operators}, Duke
 Math. J. \textbf{74}, no. 2  (1994), 557--584.


\bibitem[JP]{JP} A. Jensen, P. Perry, \emph{Commutator methods and
    Besov space estimates for {S}chr\"odinger operators}, J. Operator
  Theory \textbf{14} 
   (1985), 181--188.

\bibitem[JK]{JK}
D. Jerison and C.E. Kenig, \emph{Unique continuation and absence of
  positive eigenvalues for {S}chr\"odinger operators}, Ann. of Math. (2)
  \textbf{121}, no.~3 (1985),463--494, With an appendix by E. M. Stein.



\bibitem[Me]{Me} R. Melrose,\emph{ Spectral and scattering theory for
    the Laplacian on asymptotically Euclidian spaces}, Spectral and
  scattering theory (Sanda, 1992) (M. Ikawa, ed.), Marcel Dekker
  (1994), 85--130.


\bibitem[Mo1]{Mo1}
{\'E}. Mourre, \emph{Absence of singular continuous spectrum for certain
  selfadjoint operators}, Comm. Math. Phys. \textbf{78} no.~3 (1980/81),
  391--408.

\bibitem[Mo2]{Mo2}
{\'E}. Mourre, \emph{Operateurs conjugu{\'e}s et propri{\'e}t{\'e}s de propagation}, Comm. Math. Phys. \textbf{91}  (1983),
  297--300.


\bibitem[M\o ]{Mol} J.S.  M\o ller, \emph{An abstract radiation
    condition and applications to $N$-body systems},
Rev. Math. Phys. {\bf 12},  no. 5 (2000), 767--803. 

\bibitem[Na]{Na}
S. Nakamura, \emph{Low energy asymptotics for {S}chr\"odinger operators with
  slowly decreasing potentials}, Comm. Math. Phys. \textbf{161}, no.~1
 (1994), 63--76.



\bibitem[RS]{RS}
M.~Reed and B.~Simon, \emph{Methods of modern mathematical physics {I}-{IV}},
  New York, Academic Press 1972-78.



\bibitem[Ro]{Ro} J. Royer, \emph{Limiting absorption principle for the
    dissipative Helmholtz equation}, Comm. PDE. \textbf{35} (2010),
  1458--1489. 


\bibitem[Sa]{Sa} Y. Sait{\=o}, \emph{Spectral representations for
    Schr\"odinger operators  with a 
long-range potentials}, Lecture Notes in Matematics \textbf{727}, Berlin,  Springer 1979.

\bibitem[Sk]{Sk} E. Skibsted: Renormalized two-body low-energy
  scattering, in preparation.

\bibitem[SW]{SW} E. Skibsted, X. P. Wang: \emph{Two-body threshold spectral analysis, the critical case}, J. Funct. Anal. {\bf 260} (2011), 1766-1794.

\bibitem[Va]{Va} A. Vasy, \emph{Propagation of singularities in
    three-body scattering}, Ast{\'e}rique, \textbf{262}  (2000).

\bibitem[Wa]{Wa} X. P. Wang, \emph{Microlocal estimates of the
    stationary Schr\"odinger equation in semi-classical limit}, Partial differential equations and
  applications, 
  S\'eminaires et  Congr\`es SMF  \textbf{15}  (2007), 265--308.

\bibitem[Ya]{Ya}
D. Yafaev, \emph{The low energy scattering for slowly decreasing
  potentials}, Comm. Math. Phys. \textbf{85}, no. 2  (1982), 177--196.


\end{thebibliography}
\end{document}